\theoremstyle{plain}
\newtheorem{thm}{Theorem}[section]
\newtheorem{lem}[thm]{Lemma}
\newtheorem{cor}[thm]{Corollary}
\newtheorem{defn}[thm]{Definition}
\newtheorem{prop}[thm]{Proposition}
\newtheorem{claim}[thm]{Claim}
\newtheorem{problem}[thm]{Problem}
\newtheorem{hypo}[thm]{Hypothesis}
\newcommand{\nats}{\mathbb{N}}
\newcommand{\Oh}{{\cal O}}
\newcommand{\NAND}{\mathrm{NAND}}
\newcommand{\Impl}{\mathrm{IMPL}}
\newcommand{\confam}{\mathcal{F}}
\newcommand{\Implication}{\textsc{Implications}\xspace}
\newcommand{\wDAGImplication}{\textsc{Weighted DAG Implications}\xspace}
\newcommand{\SubsetSum}{\textsc{SubsetSum}\xspace}
\newcommand{\THC}[1]{T_{#1\text{-HC}}}
\newcommand{\vars}{\mathrm{vars}}
\newcommand{\ones}{\mathrm{ones}}
\newcommand{\divides}{\mid}
\newcommand{\notdivides}{\nmid}
\newcommand{\eps}{\epsilon}
\newcommand{\entry}[1]{#1 \dots #1}
\def\cqedsymbol{\ifmmode$\lrcorner$\else{\unskip\nobreak\hfil
\penalty50\hskip1em\null\nobreak\hfil$\lrcorner$
\parfillskip=0pt\finalhyphendemerits=0\endgraf}\fi} 
\newcommand{\cqed}{\renewcommand{\qed}{\cqedsymbol}}
\title{Finding Small Satisfying Assignments Faster Than Brute Force:\newline A Fine-grained Perspective into Boolean Constraint Satisfaction}
\author{Marvin K\"unnemann\thanks{Max Planck Institute for Informatics, Saarland Informatics Campus, Saarbr\"ucken, Germany. \texttt{\{marvin|dmarx\}@mpi-inf.mpg.de}. Research of the second author was supported by funding from the European Research Council (ERC) under the European Union's Horizon 2020 research and innovation programme under grant agreement SYSTEMATICGRAPH (No.~725978).} \and
D\'aniel Marx\footnotemark[1]}%
\date{}
\begin{document}

\maketitle

\begin{abstract}
To study the question under which circumstances small solutions can be found faster than by exhaustive search (and by how much), we study the fine-grained complexity of Boolean constraint satisfaction with size constraint exactly $k$. More precisely, we aim to determine, for any finite constraint family, the optimal running time $f(k)n^{g(k)}$ required to find satisfying assignments that set precisely $k$ of the $n$ variables to $1$. 

Under central hardness assumptions on detecting cliques in graphs and 3-uniform hypergraphs, we give an almost tight characterization of $g(k)$ into four regimes: 
\begin{enumerate}
\item Brute force is essentially best-possible, i.e., $g(k) = (1\pm o(1))k$,
\item the best algorithms are as fast as current $k$-clique algorithms, i.e., $g(k)=(\omega/3\pm o(1))k$,
\item the exponent has sublinear dependence on $k$ with $g(k) \in [\Omega(\sqrt[3]{k}), O(\sqrt{k})]$, or
\item the problem is fixed-parameter tractable, i.e., $g(k) = O(1)$.
\end{enumerate}

This yields a more fine-grained perspective than a previous $\FPT$/$\W[1]$-hardness dichotomy (Marx, Computational Complexity 2005). Our most interesting technical contribution is a $f(k)n^{4\sqrt{k}}$-time algorithm for \SubsetSum with precedence constraints parameterized by the target $k$ -- particularly the approach, based on generalizing a bound on the Frobenius coin problem to a setting with precedence constraints, might be of independent interest.   
\end{abstract}

\section{Introduction}

Extensive research in complexity theory has established methods to give precise qualitative results on the computational hardness of problems. In this context, a basic question that we would like to answer is: When are there algorithms better than a brute force search, and if there are, how much improvement is possible compared to brute force? In problem settings where the task is to find a solution of size $k$, typically it is easy to obtain algorithms with running time of the form $\Oh(n^{k+\Oh(1)})$ by a brute force search of every possible solution. In such cases, beating brute force could involve having an algorithm with a term $(1-\epsilon)k+O(1)$ in the exponent for some $\epsilon>0$, or having sublinear (e.g, $O(k/\log k)$ or $O(\sqrt{k})$) dependence on $k$ in the exponent, or we might be able to completely remove $k$ from the exponent of $n$ with an $f(k)n^{O(1)}$ time algorithm.

In this paper, we study the above question in the context of the class of Boolean Constraint Satisfaction problems. Fixing a \emph{constraint family} $\confam$ of Boolean functions, the task is to determine an assignment to Boolean variables $x_1,\dots, x_n$ satisfying a given conjunction of constraints of the form $f(x_{i_1},\dots, x_{i_r})$ with $f\in \confam$ and $i_1,\dots, i_r \in [n]$. Here, the natural notion of the solution size $k$ is the number of variables set to $1$ and we consider the task of determining a satisfying assignment with precisely $k$ ones. 
This class indeed contains a variety of problems: basic graph problems such as the vertex cover problem ($\confam$ consists of the binary OR) and the independent set problem in graphs ($\confam$ consists of the binary NAND) or $d$-uniform hypergraphs ($\confam$ consists of the $d$-ary NAND), but also other natural problems such as a formulation of \SubsetSum parameterized by the target $k$ ($\confam$ consists of binary equality)\footnote{To see the correspondence, note that if $\confam$ consists of the binary equality, $\SAT(\confam)$ asks to find a union of connected components of total size $k$. By representing each connected component by its size (after linear-time preprocessing), this is precisely the \SubsetSum problem with target $k$.}, finding a solution of a (sparse) linear system over GF(2) where each linear equality involves at most a constant number $r$ of variables and the solution must have precisely $k$ ones ($\confam$ consists of all linear constraints of arity at most $r$), as well as finding a closed set of size~$k$ in a directed graph ($\confam$ consists of the binary implication). Note that the last problem can be seen to be equivalent to a variant of \SubsetSum that prescribes precedence constraints on the items and uses an unary encoding for all item sizes.

The time complexity inside this class varies widely: Vertex cover is famously fixed-parameter tractable when parameterized by $k$, with a best current running time bound of $\Oh(kn+2^{O(k)})$~\cite{ChenKX10}. It is even simpler to solve the \SubsetSum formulation in time $\Oh(m+k^2)=\Oh(n^2)$ (where $m$ is the number of edges in the graph) by a straightforward algorithm\footnote{Determine all connected components in time $\Oh(m)$ and solve a \SubsetSum instance on the component sizes in time $\Oh(k^2)$ using Bellman's pesudopolynomial-time algorithm or recent improvements~\cite{KoiliarisX19, Bringmann17}.}. The fastest known algorithm for independent set~\cite{NesetrilP85}, however, relies on the sophisticated techniques for matrix multiplication, and achieves a running time of $\Oh(n^{(\omega/3)k})$ for $k$ divisible by 3, where $\omega \le 2.373$ is the matrix multiplication exponent. For finding closed sets of size $k$, a surprisingly simple $\Oh(n^{k/2})$-time algorithm\footnote{Without loss of generality, it suffices to solve the following problem: given a node-weighted DAG $G=(V,E)$ and $k\in \nats$, find a weight-$k$ subset $S\subseteq V$ such that $u \in S$ and $(u,v)\in E$ implies $v\in S$. If $S$ contains a set $S'$ of at most $k/2$ \emph{sources} (i.e., vertices that have no incoming edges from other vertices in $S$), we can simply guess $S'$ and check that $S'$ and the set of all descendants of $S'$ have total weight $k$. If $S$ contains no such set $S'$ of size at most $k/2$, we can guess all $\le k/2$ non-sources $S''$, remove all incoming edges to $S''$ and find a weight-$(k-|S''|)$ set of vertices with out-degree $0$.} improves over brute force even without matrix multiplication, but a priori there is little indication for the optimality of this approach. Finally, for finding independent sets in 3-uniform hypergraphs, no substantially faster-than-brute-force algorithm is known. 

The central purpose of this paper is to give a detailed understanding of the time complexity of Boolean constraint satisfaction parameterized by solution size $k$, particularly when $k$ is considered a (large) constant: How precisely can we determine the running time $f(k) n^{g(k)}$, with $g(k)$ as small as possible? Note that for large constant $k$, we have $f(k) n^{g(k)}= \Oh(n^{g(k)})$ and aim to determine its optimal polynomial-time complexity. 

A classification of the second author~\cite{Marx05} resolves the qualitative question for which $\confam$ the problem is solvable in FPT time (assuming $\FPT \ne \W[1]$), i.e., when $g(k)$ can be bounded by a constant independent of $k$. In particular, from this classification, we obtain that among the above examples, vertex cover, \SubsetSum with target $k$, and the sparse linear systems over GF(2) can be solved in time $f(k) n^{c}$, while for independent set (in both graphs and hypergraphs) as well as \SubsetSum with precedence constraints, the exponent of $n$ must depend on $k$ (unless $\FPT = \W[1]$). Can we obtain tight bounds on $g(k)$ when it must depend on $k$? In particular, can we determine for which $\confam$ the brute-force $\Oh(n^{k+c})$-time solution is essentially optimal? %

\subsection{Our Results}

Let us formally state our problems and results.

\begin{problem}
Let $\confam$ be a finite \emph{constraint family} of Boolean functions. The problem $\SAT(\confam)$ asks to determine whether a given formula $\phi$ on Boolean variables $x_1,\dots,x_n$ is satisfiable by an assignment with $k$ ones, where $\phi$ is a conjunction of $m$ constraints $C$ of the form $f(\mathbf{x})$, where $f:\{0,1\}^r \to \{0,1\}$ is a constraint function in $\confam$ and $\mathbf{x}$ is an $r$-tuple of variables among $x_1,\dots, x_n$.
\end{problem}

Note that if all $f\in \confam$ have arity bounded by $r$, then there are at most $\Oh(n^r)$ possible constraints, and exhaustive search solves $\SAT(\confam)$ in time $\Oh(n^{k+r})$.

We will show that the complexity of $\SAT(\confam)$ is tightly characterized by the set of functions \emph{expressible as restrictions of constraint functions $f\in \confam$}. To formally introduce this concept, let $f: \{0,1\}^r \to \{0,1\}$ be an arbitrary Boolean function. We say that $g:\{0,1\}^s\to \{0,1\}$ is a \emph{restriction of $f$} if it is obtained from $g$ by replacing each argument of $f$ by either the constant $0$, the constant $1$, or an argument of $g$, i.e., we can partition $[r]$ into $X_1,\dots, X_s, Z_0, Z_1$ such that 
\[ g(x_1,\dots,x_s) = f(\overbrace{\entry{x_1}}^{X_1}, \dots, \overbrace{\entry{x_s}}^{X_s}, \overbrace{\entry{0}}^{Z_0}, \overbrace{\entry{1}}^{Z_1}).\]
Here, $\overbrace{\entry{y}}^Y$ denotes plugging in $y$ for all (not necessarily contiguous) positions $Y\subseteq [r]$, see Section~\ref{sec:prelim}. 

\begin{defn}
Let $g:\{0,1\}^d\to \{0,1\}$ be an arbitrary Boolean function.
A constraint family $\confam$ \emph{represents} $g$ if there is some $f\in \confam$ such that $g$ is a restriction of $f$. If $\confam$ does not represent~$g$, we say that $\confam$ \emph{avoids} $g$.
\end{defn}

Let $\Impl:\{0,1\}^2 \to \{0,1\}$ and $\NAND_d :\{0,1\}^d \to \{0,1\}$  be the binary implication and $d$-ary $\NAND$ function, respectively, i.e., 
\begin{align*}
\Impl(y_1,y_2) & \coloneqq \overline{y_1} \vee y_2,\\
\NAND_d(y_1,\dots,y_d) & \coloneqq \overline{\bigwedge\nolimits_{i=1}^d y_i}.  
\end{align*}

In~\cite{Marx05}, it is shown that $\SAT(\confam)$ is solvable in FPT time $f(k)n^{c}$ if and only if $\confam$ is \emph{weakly separable}, which is a condition equivalent to $\confam$ avoiding $\NAND_2$ and $\Impl$.
We show an almost tight characterization of $g(k)$ (under plausible assumptions from fine-grained complexity theory) that depends only on whether or not $\confam$ represents $\Impl$, $\NAND_2$ or $\NAND_d$ for higher order $d\ge 3$. Specifically, we obtain the following main theorem, illustrated in Figure~\ref{fig:results}.

\begin{figure}
\centering
\includegraphics[width=0.72\textwidth]{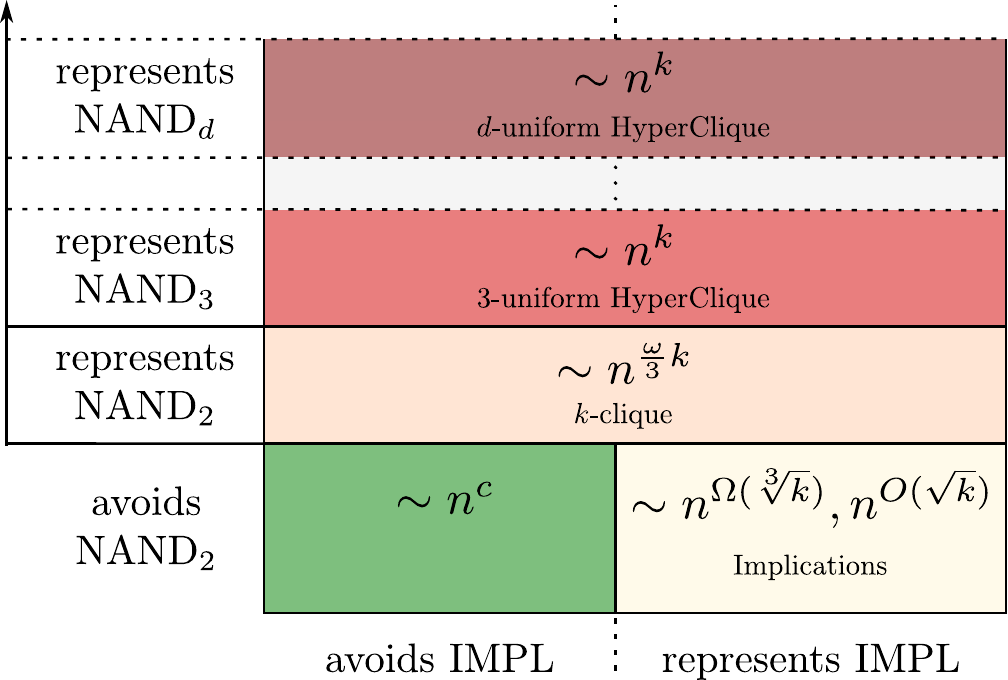}
\caption{Overview over our main results. The parts of the diagram to the right of the vertical $\Impl$ line depict $\confam$ representing $\Impl$, while the parts to the left avoid $\Impl$. Analogously, the parts of the diagram above a $\NAND_d$ line depict $\NAND_d$-representing $\confam$, while those below avoid $\NAND_d$. For each cell, we illustrate our (typically matching) algorithmic and hardness results, together with a problem that is complete for this cell (in a certain sense). For clarity of presentation, we drop additional $f(k)n^c$-factors of stated running times.}
\label{fig:results}
\end{figure}

\newcommand{\regime}[1]{{\textbf{\textup{[#1]}}}}
\begin{thm}
Let $\confam$ be a finite constraint family.
\begin{enumerate}
\item \regime{FPT regime} If $\confam$ avoids both $\NAND_2$ and $\Impl$, then there is a computable $f(k)$ and constant~$c_\confam$ such that $\SAT(\confam)$ can be solved in time $f(k)n^{c_\confam}$.
\item \regime{Subexponential regime}\newline If $\confam$ represents $\Impl$, but avoids $\NAND_2$, then there is a computable $f(k)$ and constant~$c_\confam$ such that $\SAT(\confam)$ can be solved in time $f(k)n^{4\sqrt{k}+c_\confam}$; \newline furthermore, for no computable $f(k)$ and constants $c_\confam, \eps > 0$, $\SAT(\confam)$ can be solved in time  $f(k)n^{(\omega/6-\eps)\sqrt[3]{k}+c_\confam}$, unless the $k$-clique conjecture fails.
\item \regime{Clique regime}\newline  If $\confam$ represents $\NAND_2$, but avoids $\NAND_3$, then there is a computable $f(k)$ and constant~$c_\confam$ such that $\SAT(\confam)$ can be solved in time $f(k) n^{(\omega/3)k+c_\confam}$; \newline furthermore for no computable $f(k$) and constants $c_\confam, \eps > 0$, $\SAT(\confam)$ can be solved in time $f(k) n^{(\omega/3-\eps)k+c_\confam}$, unless the $k$-clique conjecture fails. 
\item \regime{Brute-force regime}\newline  If $\confam$ represents $\NAND_3$, then for no computable $f(k)$ and constants $c_\confam, \eps > 0$, $\SAT(\confam)$ can be solved in time $f(k)n^{(1-\eps)k+c_\confam}$, unless the 3-uniform $k$-HyperClique conjecture fails. 
\end{enumerate}
\end{thm}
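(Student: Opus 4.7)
The plan is to prove each of the four regimes in turn. The governing intuition is that avoiding $\NAND_d$ (respectively $\Impl$) forces each constraint to decompose into lower-arity structured pieces that admit faster algorithms, while representing $\NAND_d$ (respectively $\Impl$) enables gadget-encodings of the corresponding (hyper)clique problem, yielding matching conditional lower bounds. The \regime{FPT regime} follows directly from the characterization of weakly separable families in~\cite{Marx05}, since avoiding both $\NAND_2$ and $\Impl$ is precisely the weak separability condition.

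For the \regime{Clique regime}, the upper bound follows the classical $k$-independent-set route: avoiding $\NAND_3$ lets us argue, via case analysis on each constraint (after an FPT-preprocessing that fixes a bounded number of auxiliary variables), that the residual constraint system is equivalent to finding a $k$-independent set in an auxiliary graph, which is solvable in time $f(k) n^{(\omega/3)k+c}$ by the Ne\v{s}et\v{r}il--Poljak matrix-multiplication algorithm. The matching hardness uses $\NAND_2$-representability to directly encode non-edges: given a $k$-clique instance on a graph $G$, we replace each non-edge of $G$ by a $\NAND_2$ constraint on the corresponding vertex variables, so that a weight-$k$ satisfying assignment precisely witnesses a $k$-clique; an $f(k)n^{(\omega/3-\eps)k+c}$-time algorithm would then refute the $k$-clique conjecture. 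The \regime{Brute-force regime} is analogous but one dimension higher: $\NAND_3$-representability lets us encode non-hyperedges of a $3$-uniform hypergraph, reducing $3$-uniform $k$-HyperClique to $\SAT(\confam)$ with essentially the same $k$.

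The core of the argument, and the main obstacle, is the \regime{Subexponential regime}. For the upper bound we first reduce an arbitrary $\NAND_2$-avoiding, $\Impl$-representing $\confam$ to \SubsetSum with precedence constraints (equivalently, finding a weight-$k$ closed set in a node-weighted DAG): avoiding $\NAND_2$ should force every constraint to be monotone-implication-like (since any anti-monotone binary pattern is ruled out), so that after fixing a constant number of auxiliary assignments per constraint the system collapses into directed precedence relations, while $\Impl$-representability ensures the implied DAG edges are actually realizable. We then invoke the promised $f(k)n^{4\sqrt{k}+c}$-time algorithm for the canonical problem, obtained by generalizing the Frobenius coin bound to a DAG-structured setting: roughly, either the subtree-weights at the sources collectively span $\ints$ additively, so that a Frobenius-type argument finds a solution using only $O(\sqrt{k})$ sources, or the DAG admits a small ``skeleton'' of $O(\sqrt{k})$ relevant sources whose selection determines all feasible assignments. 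For the matching hardness we reduce $k$-clique to $\SAT(\confam)$ with target $\Theta(k^3)$ via a three-layer gadget (vertex / edge / triangle levels linked by $\Impl$ constraints) in which a weight-$\Theta(k^3)$ closed set corresponds exactly to a $k$-clique; an $n^{(\omega/6-\eps)\sqrt[3]{k'}}$-time algorithm on the resulting instance then yields an $n^{(\omega/3-\eps')k}$-time $k$-clique algorithm, contradicting the conjecture. The main obstacle is that $\Impl$ is an algorithmically very weak and directional constraint, so both the $\sqrt{k}$-dependence algorithm and the cube-size gadget must carefully respect precedence; the Frobenius adaptation on the algorithmic side and the three-layer gadget on the hardness side are the technically most demanding steps.
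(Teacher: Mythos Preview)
Your high-level map of the four regimes is sound, and the FPT, Clique-hardness, and Brute-force-hardness parts are essentially what the paper does (modulo the nontrivial but routine work of simulating the constants $0,1$ inside $\confam$, which you gloss over). However, two of your sketches diverge from the paper in ways that leave real gaps.

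\textbf{Subexponential upper bound (reduction to \Implication).} Your plan asserts that avoiding $\NAND_2$ ``forces every constraint to be monotone-implication-like'' so that, after fixing a bounded number of auxiliary variables per constraint, the system ``collapses into directed precedence relations.'' This structural claim is not justified and is in fact false at the constraint level: the paper's own example $f'$ with satisfying set $\{000,101,110\}$ avoids $\NAND_2$ but is not a conjunction of implications over any bounded set of auxiliary assignments. The paper does \emph{not} decompose constraints; instead it runs a randomized iterative process on the whole formula: starting from the empty graph on the variable set, it repeatedly picks a vertex $v$ whose current descendant set $D(v)$ violates $\phi$, computes all minimal satisfying extensions of $a_{D(v)}$ of weight at most $k$, and adds a random implication edge from $v$ into one of them. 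The crucial lemma is that when $\confam$ avoids $\NAND_2$, \emph{every} closed set in the resulting DAG corresponds to a satisfying assignment (proved by an inductive argument that otherwise exhibits a $\NAND_2$ restriction). A fixed solution survives with probability $1/f(k)$, and color-coding derandomizes. Your per-constraint decomposition idea would need a replacement for this global argument.

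\textbf{Clique-regime upper bound.} ``Case analysis on each constraint after FPT preprocessing reduces to $k$-independent-set'' is too loose. The paper's mechanism is a recursive win-win: call a weight-$k$ satisfying assignment $d$-\emph{robust} if no sub-assignment of weight $\le d$ violates $\phi$. Non-robust solutions are found by guessing the small violating sub-assignment and one additional variable from the violated constraint (gaining a factor $n$ per $d{+}1$ guesses), while the $d$-robust solutions are exactly the satisfying assignments of the auxiliary formula $\phi_d=\bigwedge_{a\in F_d}\NAND(\ones(a))$, which is a $\SAT(\{\NAND_2,\dots,\NAND_d\})$ instance and hence reduces to $d$-uniform HyperClique by color-coding. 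The key step, which your sketch omits, is that $\NAND_{d+1}$-avoidance implies every violated constraint already has a violating sub-assignment of weight $\le d$; this is what makes $\phi_d$ sound.

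\textbf{Subexponential lower bound.} A minor point: the paper's gadget is two-layered (vertex nodes of weight $K=\binom{k}{2}+1$ and edge nodes of weight $1$, with implications from each edge node to its two endpoints), giving target $k'=kK+\binom{k}{2}=\Theta(k^3)$. Your three-layer vertex/edge/triangle proposal would likely also work, but the weight trick already forces the cubic blow-up without a third layer.
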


That is, we only have four regimes: $g(k)$ is either constant, sublinear in $k$ with a value between essentially $(\omega/6) \sqrt[3]{k}$ and $4\sqrt{k}$, the clique detection bound of essentially $(\omega/3) k$, or the brute force bound of essentially $k$.  Note that we do not try to optimize the bounds on $f(k)$, which generally are bounded by $r^{\Oh(k^3)}$, where $r$ is the arity of $\confam$.

Let us briefly discuss our hardness assumptions and their plausibility (for a detailed discussion, we refer to Section~\ref{sec:hardnessassumptions}): The $k$-clique conjecture postulates that there is no $\Oh(n^{(\omega/3-\epsilon)k+c})$ time algorithm for detecting a $k$-clique in a given graph, with a matching upper bound of $\Oh(n^{(\omega/3)k+1})$ known since 1985~\cite{NesetrilP85}. By now, it has been used, e.g., to justify (conditional) optimality of Valiant's parser for context free grammars~\cite{AbboudBVW18} and to give conditional lower bounds for string problems~\cite{BringmannGL17, AbboudBBK17}, average-case settings~\cite{Boix-AdseraBB19}, and more. Notably, the only $k$-clique algorithm known to break brute force by a polynomial factor makes crucial use of fast matrix multiplication techniques -- unfortunately, these techniques do not extend to finding cliques in \emph{hypergraphs}. This has led to the $d$-uniform HyperClique conjecture (for arbitrary $d\ge 3$): This conjecture states that there is no algorithm beating brute force, i.e., no $\Oh(n^{(1-\eps)k+c})$-time algorithm, for detecting a $k$-clique in a given $d$-uniform hypergraph. It has been used to expose hardness of problems in sparse graphs~\cite{LincolnVWW18}, for first-order queries to relational databases (specifically, in model-checking~\cite{BringmannFK19} and enumeration contexts~\cite{CarmeliK19}), and for the orthogonal vectors problem~\cite{AbboudBDN18}; furthermore, it is known that its refutation requires giving a $\Oh((2-\eps)^n)$-time algorithm for Max-3SAT -- we refer to~\cite{AbboudBVW18,LincolnVWW18} for more detailed discussions of the plausibility of the ($d$-uniform Hyper-)Clique conjecture.

Interestingly, our classification does not fundamentally rely on the validity of the $d$-uniform HyperClique conjecture: If, for some $d\ge 3$, the $d$-uniform HyperClique conjecture is eventually refuted, we obtain faster-than-brute-force algorithms for all $\NAND_{d+1}$-avoiding families!  

\paragraph{Coarser Classification.}
While we state our results under very fine-grained hardness assumptions on clique and hyperclique detection, we may also state a coarser classification assuming only the assumption that $k$-clique cannot be solved in time $f(k)n^{o(k)}$. Already under this assumption, which is implied by the Exponential Time Hypothesis (see~\cite{ChenCFHJKX05, ChenHKX06}), our reductions and algorithms show that there exists an FPT regime where $g(k)$ is a constant, a subexponential regime where $g(k)$ is between $\Omega(\sqrt[3]{k})$ and $\Oh(\sqrt{k})$, and a linear regime where $g(k) = \Theta(k)$. However, based on the Exponential Time Hypothesis only, we cannot distinguish problems solvable in time $f(k)n^{(1\pm o(1))k}$ and $f(k)n^{(\nicefrac{\omega}{3}\pm o(1)) k}$, and thus cannot differentiate in the linear regime.

\paragraph{Examples.}
From our general classification, we can draw some interesting specific corollaries (assume here that $k$ is a large constant):

\textbf{3-SAT:} Finding satisfying assignments with $k$ ones for $3$-CNF formulas ($\confam$ consists of all ternary functions with a single falsifying assignment) requires brute force time $n^{(1-o(1))k}$ under the 3-uniform HyperClique conjecture. However, if we drop a single function from $\confam$ (specifically $\NAND_3$, i.e., each constraint must have at most two negative literals), the problem can be solved in time $\Oh(n^{(\omega/3)k+c})$, which is essentially optimal under the $k$-Clique conjecture.

\textbf{Subexponential cases:} We obtain $n^{\Oh(\sqrt{k})}$-time algorithms for interesting special cases: Beyond precedence-constrained \SubsetSum with target $k$ (i.e, $\SAT(\{\Impl\})$), this includes $\SAT(\{f\})$ with $f(y_1,y_2,y_3) \coloneqq  y_1 \Rightarrow (y_2 \vee y_3)$, and, more generally, every finite set of \emph{dual-Horn constraints} (i.e., constraints that can be represented by clauses with at most a single negative literal)\footnote{It is known that a constraint is dual-Horn if and only if it its satisfying assignments are closed under union, which immediately implies that it cannot contain $\NAND_2$ as a restriction.}. This also includes examples beyond dual-Horn constraints such as $\SAT(\{\Impl, f'\})$ with $f'$ being defined by $f'(y_1, y_2,y_3) = 1$ iff $(y_1,y_2,y_3) \in \{(0,0,0), (1,0,1), (1,1,0)\}$. Interestingly, all of these problems have the same (conditionally optimal) time complexity of $f(k) n^{\Theta(k^\alpha)}$ with $1/3 \le \alpha \le 1/2$; determining the precise value of $\alpha$ remains a challenge for future work. 

\subsection{Technical Overview}

We give an overview of the technical challenges that are handled in our work, from the highest running time regime to the lowest running time regime:

\textbf{Brute-force regime:}
It is straightforward to obtain hardness for $\NAND_3$-representing families by the following intuitive approach: To reduce from $k$-clique in a 3-uniform hypergraph~$G$, we let $x_i$ denote whether we include vertex $v_i$ in our $k$-clique. By the standard observation that a clique in a hypergraph $G$ is an independent set of its complement graph $\overline{G}$, we only need to ensure that for each edge $e=(v_a,v_b,v_c)$ of $\overline{G}$, not all vertices are included in our clique, i.e., $\NAND_3(x_a,x_b,x_c)$ holds. Since $\confam$ represents $\NAND_3$, we can express this constraint using an appropriate restriction of some $f\in \confam$. Here, there is a technical issue of how we can generate the constants $0$ or $1$ to obtain the desired restrictions -- using not particularly difficult, but careful constructions, we show that we can always simulate these constants as needed (Section~\ref{sec:hardness}).

\textbf{Moderately hard regime:}
While the hardness of $\NAND_3$-representing families is straightforward, it is surprising that this condition is in fact necessary for the brute-force approach to be (conditionally) optimal: If $\NAND_3$ is \emph{not} representable, we give a $f(k)n^{(\omega/3)k +c_\confam}$-time algorithm via reduction to $k$-Clique. 

The essential idea for this reduction is the following win-win argument. Let us denote by $a_{x,y}$ the weight-2 assignment setting only $x$ and $y$ to 1. Fix any weight-$k$ satisfying assignment~$a$. If there are two variables $x_i=x_{i'}=1$ in $a$ such that $a_{x_i,x_{i'}}$ is not satisfying, then we can use this pair of variables to ``guide'' our search towards $a$. We guess $x_i,x_{i'}$, identify a falsified constraint (of arity $r$) and guess an additional third variable from the at most $r-2$ other variables in this constraint. This means that by guessing two variables ($n^2$ possibilities), we obtain an additional variable almost for free (guessing $r-2$ possibilities). That is, in the considered case we can identify 3 variables of $a$ with a guess of $(r-2)n^2$ possibilities, which is a significant gain compared to the $n^3$ possibilities of brute force. Otherwise, if $a$ has no such pair of variables, we observe that $a$ satisfies already a simpler formula that uses only $\NAND_2$'s: specifically, the conjunction of $\NAND(x_i,x_{i'})$ for all $i,i'$ such that assignment $a_{x_i,x_{i'}}$ violates the original formula. Furthermore, we show that since $\NAND_3$ is not representable, any solution of the simpler formula indeed remains a solution of the original formula.

Interestingly, this reduction generalizes also to hypergraphs so that a refutation of the $d$-uniform HyperClique conjecture would give a $f(k)n^{(1-\eps)k+c_\confam}$-time algorithm for $\NAND_{d+1}$-avoiding families.  

On the hardness side, analogously to the brute-force regime, it is rather straightforward to show that $k$-clique running time is indeed necessary for $\NAND_2$-representing constraint families (see Section~\ref{sec:hardness}), which thus concludes a tight bound on $g(k)$ of essentially $(\omega/3)k$ in this regime.

\textbf{Mildly hard regime:} This is the technically most interesting regime. If $\NAND_2$ is not representable, then $\SAT(\confam)$ might still not have an FPT algorithm, specifically, if it represents $\Impl$. Implicit in the $W[1]$-hardness proof in~\cite{Marx05} is a fine-grained lower bound of $n^{\Omega(\log k)}$ under the $k$-clique conjecture. By giving a careful adaptation of the lower bound of~\cite{Marx05}, we can strengthen this lower bound to $n^{\Omega(\sqrt[3]{k})}$. While it is conceivable that this lower bound can be strengthened to $n^{\Omega(\sqrt{k})}$, the structure of the construction suffers from a fundamental obstacle that makes a lower bound beyond $n^{\Omega(\sqrt{k})}$ seem unlikely. This raises the suspicion that a $n^{o(k)}$-time algorithm for $\NAND_2$-avoiding families could exist -- and indeed, we manage to develop a $n^{\Oh(\sqrt{k})}$-time algorithm, which is perhaps the most interesting technical contribution of our paper.   

To illustrate our approach, consider the problem \wDAGImplication: Given a DAG $G=(V, E)$ with node weights $w:V \to \nats$ and a parameter $k\in \nats$, the task is to find a set $S\subseteq V$ such that (1) $u\in S$ and $(u,v)\in E$ implies $v\in S$ and (2) $S$ has total weight $\sum_{s\in S} w(s) = k$. Without edges, this problem simplifies to \SubsetSum which we could solve in $\poly(k)$ time~\cite{KoiliarisX19, Bringmann17}. However, to enable a generalization to our precedence setting, we describe a different approach based on a combinatorial property inspired by the famous Frobenius coin problem: Given coins of denominations $2 \le d_1< d_2 < \dots < d_\ell$ with $\gcd(d_1,\dots, d_\ell)=1$, what is the largest number $x$ not representable as $x=\sum_{i=1}^\ell \alpha_i d_i$ for some non-negative values $\alpha_i \ge 0$? A proof attributed to Schur (see \cite{Bauer42, RamirezAlfonsin05, Gallier14}) yields an upper bound of $x\le (d_1-1)(d_\ell-1)$. Consequently, if $w_1\le \dots\le  w_\ell$ with $\gcd(w_1,\dots,w_\ell)\divides k$ are the weights occurring in an edgeless $G$, and $w_\ell\le \sqrt{k}$, then there always exists a set $S$ of total weight $k$, provided each weight occurs sufficiently often (say, at least $k$ times). Thus, if we can preprocess the instance such that each weight is bounded by $\sqrt{k}$ and occurs sufficiently often, we can determine the answer to the instance by simply computing the gcd of the weights. Intuitively, this is possible in time $n^{\Oh(\sqrt{k})}$ by guessing the $\Oh(\sqrt{k})$ vertices of weight larger than $\sqrt{k}$, as well as brute-forcing vertices of each weight class containing only few vertices.

Interestingly, this approach can be lifted to the setting with precedence constraints. To this end, assume that the graph consists of layers $V_1,\dots, V_\ell$ such that each $V_i$ consists of a sufficiently large number of vertices of weight $w_i$ and that all edges respect the layering (i.e., an edge between a vertex in $V_i$ and a vertex in $V_j$ implies $i> j$). We show the following property, which gives a generalization of Schur's bound to the precedence setting: \begin{center}If for each vertex $v$, the total weight of all its descendants (including $v$ itself) is at most $\sqrt{k/2}$, then there exists a solution of total size $k$ \emph{if and only if} $\gcd(w_1,\dots,w_\ell) \divides k$.\end{center} 
By an $n^{\Oh(\sqrt{k})}$-time preprocessing analogous to the intuitive arguments for the edge-less case, we can ensure that the preconditions are satisfied. 
We give the details of this approach in Section~\ref{sec:implalgo}.

The above algorithmic insight solves the \wDAGImplication problem in time $\Oh(n^{4\sqrt{k}})$. To obtain such a bound for all $\NAND_2$-avoiding families, we use a randomized reduction to \wDAGImplication. On a very high level, the approach is to create a \wDAGImplication instance $G$ that contains only solutions that satisfy the given formula $\phi$ by iteratively choosing random implications consistent with certain solutions of $\phi$. Doing this in an appropriate manner, a fixed feasible solution survives this process with $1/f(k)$ probability, which gives an algorithm running in time essentially $\Oh(f(k) n^{4\sqrt{k}})$. We give the details in Section~\ref{sec:redtoImpl}.

\textbf{Fast regime:}
For the remaining regime of families avoiding both $\Impl$ and $\NAND_2$, an $f(k)n^c$-time algorithm follows from~\cite{Marx05}, concluding the characterization.

\subsection{Related work}

Dichotomy theorems for constraint satisfaction have a rich history, starting with Schaefer's Theorem classifying Boolean Constraint Satisfaction Problems (CSPs) into either polynomial-time solvable or NP-complete~\cite{Schaefer78}. The subsequent \emph{Dichotomy conjecture}~\cite{FederV98}, which postulated that Schaefer's Theorem can be extended to any constant domain size beyond Boolean, was resolved positively only recently by Bulatov~\cite{Bulatov17} and Zhuk~\cite{Zhuk17}.    
Further classifications have been investigated in a number of related settings, including quantified CSP (see, e.g.,~\cite{CreignouKS01, ZhukM20}) and optimization variants (see, e.g.~\cite{Creignou95, KhannaSTW00}). Parameterizing by the solution size (as we do here), corresponding dichotomies have been obtained for Boolean~\cite{Marx05} and larger domain sizes~\cite{BulatovM14, Lin18}, with a characterization of kernelization for Boolean domain given in~\cite{KratschMW16} and a study of parameterized approximability given in~\cite{BonnetEM16}. A parameterized dichotomy for related local search tasks has been given in~\cite{KrokhinM12}. 

On a conceptual level, our work is related to a fine-grained classification result for model-checking first-order properties with a bounded number of quantifiers~\cite{BringmannFK19}, where a fine-grained dichotomy under the 3-uniform HyperClique conjecture is given. Note, however, that the hardness criterion and techniques developed there are substantially different due to the different nature of the problem settings.   
 
\subsection{Open Problems}

The main open problem raised by our work is to close the gap in the subexponential regime: Can we solve $\Implication =\SAT(\{\Impl\})$ already in $f(k)n^{\Oh(\sqrt[3]{k})}$ or can we improve our lower bound to $f(k)n^{\Omega(\sqrt{k})}$? 
Note that by our reductions, improved bounds directly transfer to all $\NAND_2$-avoiding families. 

Second, a natural direction is to extend our classification beyond the Boolean domain, i.e., give a fine-grained perspective building on~\cite{BulatovM14,Lin18}. 

Finally, interesting related settings include natural problem variants with different size restrictions (\emph{at most $k$} or \emph{at least $k$}), local search tasks as well as optimization settings with weights on the variables or on the constraints.  

\section{Preliminaries}
\label{sec:prelim}

We write $[n]\coloneqq \{1,\dots, n\}$ and for any set $S$ and integer $d$, let ${S \choose d}$ denote the set of $d$-element subsets of $S$.

For a finite constraint family $\confam$, we say its \emph{arity $r$} is the maximum arity of a function $f\in \confam$. Since in the constraints of $\SAT(\confam)$, we may use variables in arbitrary order, we use the following notation for convenience: For any $f:\{0,1\}^r \to \{0,1\}$ and partition $X_1,\dots, X_s$ of $[r]$, we write  
\[ f(\overbrace{\entry{x_1}}^{X_1},\dots, \overbrace{\entry{x_s}}^{X_s})\]
to denote the value of $f(u_1,\dots,u_r)$ where we plug in $x_j$ for each $u_i$ with $i\in X_j$. Correspondingly $g:\{0,1\}^d \to \{0,1\}$ can be obtained as a restriction of $f$ if and only if there is a partition $X_1,\dots, X_d, Z_0, Z_1$ of $[r]$ such that 
\[ g(x_1,\dots,x_d) = f(\overbrace{\entry{x_1}}^{X_1},\dots, \overbrace{\entry{x_d}}^{X_d}, \overbrace{\entry{0}}^{Z_0}, \overbrace{\entry{1}}^{Z_1}).\]

We say that an assignment $a: [n] \to \{0,1\}$ has \emph{weight} $k$ if $\sum_{i=1}^n a(i) = k$. Furthermore, we say that $a$ is dominated by an assignment $a': [n]\to \{0,1\}$, written $a\le a'$, if for all $i\in [n]$, we have $a(i) \le a'(i)$. For a subset $S\subseteq [n]$, we let $a_S$ denote the assignment that sets $a(i)=1$ if and only if $i\in S$. We let $\ones(a) \coloneqq \{ x_i \mid a(i) = 1\}$ denote the set of $1$-variables of $a$. For any constraint $C=f(\mathbf{x})$ where $\mathbf{x} = (x_{i_1},\dots,x_{i_r})$ with $i_1,\dots, i_r\in [n]$, we let $\vars(C) = \{x_{i_1},\dots,x_{i_r}\}$ denote the variable set involved in $C$.

All graphs considered in this paper are simple, i.e., we disallow multiple edges and self-loops. If $G=(V,E)$ is a directed graph, we call $S\subseteq V$ a \emph{closed} set if for all $(u,v)\in E$, we have that $u\in S$ implies that $v\in S$. We say that $v$ is a \emph{descendant} of $u$ if $v$ is reachable by a path from $u$ and let $D(u)$ denote the set of descendants of $u$ (including $u$ itself). Analogously, if $v$ is a descendant of $u$, we call $u$ an \emph{ascendant} of $v$. We extend the notation naturally to sets $S\subseteq V$ by defining $D(S) \coloneqq \bigcup_{u\in S} D(u)$.
For a graph $G=(V,E)$ with node weights $w: V\to \nats$ and $S\subseteq V$, we write $w(S) \coloneqq \sum_{v\in S}w(v)$. For any $S\subseteq V$, we let $G[S]$ denote the subgraph of $G$ induced by $S$, i.e., the subgraph obtained by deleting all vertices in $V\setminus S$ and adjacent edges.

\subsection{Hardness Assumptions}

Let $k$-clique denote the following problem: Given a (simple) undirected graph $G=(V,E)$, determine whether there is a \emph{clique} of size $k$, i.e., $S\subseteq V, |S|=k$ such that for all $\{u,v\} \in {S \choose 2}$ we have $\{u,v\} \in E$. A simple algorithm~\cite{NesetrilP85} solves $k$-clique in time $\Oh(n^{\omega/3 k})$ when $k$ is divisible by~$3$, which extends to time $\Oh(n^{\lfloor k/3\rfloor \omega + (k\bmod 3)})$ for arbitrary $k$ (for more precise bounds, see~\cite{EisenbrandG04}).
\label{sec:hardnessassumptions}
This running time is conjectured to be best possible, in the following sense.

\begin{hypo}[$k$-Clique Conjecture]
For no $c,\eps > 0$ and $f(k)$, there is an $f(k)n^{(\omega/3 - \epsilon)k+c}$-time algorithm for $k$-Clique.\footnote{Note: sometimes, the $k$-clique conjecture is stated as \[\inf \{ F \mid \text{$3k$-clique can be solved in time $n^{Fk+o(1)}$ for all (sufficiently large) constant $k$}\} = \omega,\] which can be seen to be equivalent to the above formulation via a standard self-reduction for $k$-clique.}
\end{hypo}
As without the use of matrix multiplication, no $\Oh(n^{(1-\eps)k+c})$-time algorithms are known, a variant of the conjecture postulates that there are even no $\Oh(n^{(1-\eps)k+c})$-time \emph{combinatorial} algorithms, i.e., algorithms avoiding the sophisticated algebraic techniques underlying current matrix multiplication algorithms. 
  
  By now, the $k$-clique conjecture has been used to explain hardness barriers in various contexts, such as the optimality of Valiant's parser for context-free grammar recognition~\cite{AbboudBVW18},  pattern matching in uncompressed and compressed strings~\cite{BringmannGL17, AbboudBBK17}, average-case hardness~\cite{Boix-AdseraBB19} and more. %
  For a more detailed discussion of this hardness assumption, we refer to~\cite{AbboudBVW18}.

  The $k$-clique problem naturally extends to hypergraphs: Given a $d$-uniform hypergraph $G=(V,E)$, the $d$-uniform $k$-HyperClique problem asks to determine whether there is a \emph{(hyper-)clique} of size $k$, i.e., $S\subseteq V, |S|=k$ such that for all subsets $S' \in {S \choose d}$, we have $S'\in E$. 

\begin{hypo}[$d$-Uniform $k$-HyperClique Conjecture]
Let $d\ge 3$. For no $c,\eps > 0$ and $f(k)$, there is an $f(k)n^{(1 - \epsilon)k+c}$-time algorithm for $d$-uniform $k$-HyperClique.
\end{hypo}

Similarly to the $k$-Clique conjecture, this hardness conjecture reveals hardness barriers in a number of contexts, such as hardness for problems on sparse graphs~\cite{LincolnVWW18}, for deciding or enumerating answers to first-order queries~\cite{BringmannFK19, CarmeliK19} and for the study of fine-grained average-case complexity~\cite{Boix-AdseraBB19}. It is known that it implies the Orthogonal Vectors conjecture~\cite{AbboudBDN18}, however, refuting this conjecture requires (at least) to give an $\Oh((2-\eps)^n)$-time exact algorithm for Max3SAT; for details and further discussion of the plausibility of this conjecture, we refer to \cite{LincolnVWW18}.   

\section{Algorithm for \Implication}
\label{sec:implalgo}

In this section, we give an algorithm for the problem $\Implication = \SAT(\{\Impl\})$ that is much faster than brute force and achieves $O(\sqrt{k})$ dependence of $k$ in the exponent $n$. 
For convenience, we reduce \Implication to the following problem. (Recall that for any graph $G=(V,E)$, we say that $S\subseteq V$ is \emph{closed}, if for all $(u,v)\in E$, we have $u \in S$ implies $v\in S$.)

\begin{problem}[\wDAGImplication]
Given an DAG $G=(V,E)$ with node weights $w: V\to \nats$ and parameter $k\in \nats$, determine whether there is a closed set $S \subseteq V$ of weight exactly $k$, i.e., $w(S) = k$.
\end{problem}
The easy reduction works as follows. For each variable $x_i$, we introduce a corresponding vertex $x_i$ of weight 1 and introduce an edge $(x_i, x_j)$ for every implication constraint $x_i \Rightarrow x_j$ of $\phi$. We contract each strongly connected component $C=\{v_1, v_2, \dots, v_\ell\}$ in $G$ to a single vertex $v_C$ of weight $\sum_{i=1}^\ell w(v_i)$ in time $\Oh(n+m)= \Oh(n^2)$~\cite{Tarjan72}. Observe that the resulting graph is a DAG which has a closed set of weight~$k$ if and only if $\phi$ has satisfying assignment of weight $k$.

Recall that for any $v\in V$, we let $D(v)$ denote the set of descendants of $v$, i.e., the set of nodes reachable from $v$ (including $v$).

As we will formally argue later, by a $f(k)n^{\Oh(\sqrt{k})}$-time preprocessing it is not difficult to preprocess a \wDAGImplication instance into the following form, which we call \emph{Frobenius instance}, as it admits a combinatorial characterization of solvability that is analogous to Schur's bound for the Frobenius coin problem.

\begin{defn}
A \emph{Frobenius instance with parameter $k$} is a weighted directed graph $G = (V,E, w)$ with $\ell$ parts $V = V_1 \cup V_2 \cup \cdots \cup V_\ell$ and weight function $w:V \to \nats$ such that the following properties hold: 
\begin{enumerate}[label=(P\arabic*)]
\item\label{item:structweights} there are weights $w_1, \dots, w_\ell$ such that $w(v) = w_i$ for all $v\in V_i$ and $i \in [\ell]$.
\item\label{item:structedges} for any edge $(u,v)\in E$, we have $u\in V_i$ and $v\in V_j$ for some $\ell \ge i > j \ge 1$,
\item\label{item:size} for all $i\in [\ell]$, we have $|V_i| \ge k$, 
\item\label{item:weights} for all $v\in V$, we have $w(D(v)) \le \sqrt{k/2}$.
\end{enumerate}
\end{defn}

Intuitively, the necessary preprocessing follows from the following arguments: To ensure~\ref{item:weights}, note that any weight-$k$ closed set $S$ has at most $\sqrt{2k}$ many vertices $v\in S$ with $w(D(v)) > \sqrt{k/2}$, which we can exhaustively enumerate with $n^{\Oh(\sqrt{k})}$-time overhead. By suitably arranging remaining nodes among the layers, it is straightforward to ensure~\ref{item:structweights}, \ref{item:structedges} and additionally that $\ell \le f(k)$, since by~\ref{item:weights}, each node has at most $\Oh(\sqrt{k})$ descendants. Finally, to ensure~\ref{item:size}, if any part $V_i$ is small (i.e., $|V_i| < k$), we can exhaustively try out including any subset of $V_i$, introducing an overhead of only $2^{\Oh(k)}$ per $V_i$; since $\ell \le f(k)$, this additional overhead is bounded by $f(k)2^{\Oh(k)}$.

If a Frobenius instance had no edges, then Schur's bound on the Frobenius coin problem implies that it has a solution if and only if $\gcd(w_1,\dots, w_\ell) \divides k$. We prove that this criterion holds even in the setting of precedence constraints.  

\begin{lem}\label{lem:Frobcrit}
Let $G$ be a Frobenius instance with  parameter $k$. Then $G$ has a closed set of weight~$k$ if and only if $\gcd(w_1,\dots, w_\ell) \divides k$.
\end{lem}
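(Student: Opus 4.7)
The ``only if'' direction is immediate: the weight $w(S) = \sum_i |S \cap V_i|\,w_i$ of any closed set $S$ is a non-negative integer combination of the $w_i$ and hence divisible by $d := \gcd(w_1, \ldots, w_\ell)$. For the converse, suppose $d \mid k$. My first step is to rescale: dividing $k$ and all $w_i$ by $d$ preserves properties~\ref{item:structweights}--\ref{item:weights} (notably~\ref{item:weights}, since $\sqrt{k/2}/d \leq \sqrt{(k/d)/2}$ for any integer $d \geq 1$), so I may assume $\gcd(w_1, \ldots, w_\ell) = 1$.

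I then plan to exploit the special role of $V_1$: by~\ref{item:structedges} its vertices have no outgoing edges and may be added freely to any closed set. This reduces the task to constructing a closed set $T \subseteq V_{\geq 2}$ in the induced subgraph with $w(T) \equiv k \pmod{w_1}$ and $w(D(T)) \leq k$ (where $D(T)$ denotes descendants in the full graph); one verifies that then padding with $(k - w(D(T)))/w_1$ fresh vertices from $V_1 \setminus D(T)$, feasible because $|V_1| \geq k$ by~\ref{item:size}, yields a closed set of weight exactly $k$.

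To find such a $T$, I would generalize Schur's Apéry-set proof of the Frobenius bound. Since $\gcd(w_1, \ldots, w_\ell) = 1$, the residues $\{w_i \bmod w_1\}_{i \geq 2}$ generate $\mathbb{Z}/w_1\mathbb{Z}$; a Cayley-graph diameter argument then produces non-negative coefficients $(\gamma_i)_{i\geq 2}$ with $\sum_{i\geq 2}\gamma_i\,w_i \equiv k \pmod{w_1}$ and $\sum_{i\geq 2}\gamma_i \leq w_1 - 1 \leq \sqrt{k/2}$. I realize this profile by a top-down greedy process selecting $\gamma_i$ vertices in each $V_i$: the bounds $|V_i| \geq k$, $w(D(v)) \leq \sqrt{k/2}$, and $\sum_i \gamma_i \leq \sqrt{k/2}$ jointly ensure that the aggregate ``blocked'' candidates in any layer stay at most $\sqrt{k/2} \cdot \sqrt{k/2} = k/2 < k$, so fresh picks always remain. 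The main obstacle I foresee is tracking the residue of $w(T)$ carefully: descendants of picked higher-layer vertices in an intermediate layer $V_j$ with $j \geq 2$ are forced into $T$ and thus contribute $w_j$ to $w(T)$, perturbing $w(T) \bmod w_1$ away from $\sum_i \gamma_i w_i$. The precedence-aware refinement of Schur's argument is then an iterative correction of the target profile $(\gamma_i)$ that absorbs these shifts on the fly; the tight $\sqrt{k/2}$-bound on all relevant quantities is exactly what will let this correction procedure converge while keeping $w(D(T)) \leq k$.
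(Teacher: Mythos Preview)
Your framework is sound: the rescaling to $\gcd=1$ is harmless, the padding-by-$V_1$ reduction is correct (indeed $|V_1\setminus D(T)|\ge |V_1|-w(D(T)\cap V_1)/w_1\ge k - w(D(T))/w_1 \ge (k-w(D(T)))/w_1$, so there is always room), and you have correctly isolated the real difficulty: forced descendants in intermediate layers $V_j$, $j\ge 2$, shift $w(T)\bmod w_1$ away from $\sum_i\gamma_i w_i$.

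The gap is that your proposed ``iterative correction of the target profile'' is not an argument. Concretely: when you pick a fresh vertex $v\in V_i$, the increment to $w(T)$ is $w\bigl(D(v)\cap V_{\ge 2}\setminus T\bigr)$, a sum of various $w_j$'s whose residue modulo $w_1$ you do not control. Adjusting later $\gamma_j$'s to compensate changes the residue by multiples of $w_j$, but $\gcd(w_j,w_1)$ need not divide the discrepancy, and re-routing through other layers may cascade further forced descendants. You assert that the $\sqrt{k/2}$ bounds make this converge, but nothing you wrote bounds the \emph{number} of correction rounds or guarantees you ever hit the target residue. This is exactly the heart of the lemma, and it is left open.

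The paper's proof takes a genuinely different route that sidesteps this difficulty. It argues by induction on the number of layers~$\ell$. In the inductive step one works modulo $d:=\gcd(w_1,\dots,w_{\ell-1})$ rather than modulo $w_1$: one picks $b<d$ vertices $S\subseteq V_\ell$ with $b\,w_\ell\equiv k\pmod d$, so $w(D(S))\le (d-1)\sqrt{k/2}\le k/2$. One then \emph{deletes} $V_\ell\cup D(S)$ and divides all remaining weights by~$d$, and checks that the result is again a Frobenius instance with parameter $k'=(k-w(D(S)))/d$. The rescaling is the key: it ensures $\sqrt{k/2}/d\le \sqrt{k'/2}$, so~\ref{item:weights} is preserved, and the induction closes. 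Because descendants of $S$ are deleted outright rather than absorbed into a running set~$T$, the residue-drift problem you encountered never arises. In effect, the paper changes the modulus at every step (from $d_{\ell-1}$ to $d_{\ell-2}$ to $\dots$), whereas your plan commits to the single modulus $w_1$ throughout; that commitment is what makes the correction step hard to control.
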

\begin{proof}
Since $\gcd(w_1,\dots, w_\ell) \divides w(S)$ for any $S\subseteq V$, the condition $\gcd(w_1,\dots,w_\ell) \divides k$ is necessary for $G$ to have a closed set of weight $k$.

We show that this condition is also sufficient via induction on $\ell$. In the base case $\ell=1$, let $S \subseteq V_1=V$ be an arbitrary subset of $k/w_1$ vertices (note that by $k/w_1 \le k \le |V_1|$, such a set indeed exists). By construction, $S$ has weight $|S|w_1 = k$ and is closed, as $G$ cannot contain any edges.

Thus let us assume that the claim holds for all $\ell' \le \ell - 1$ and consider a Frobenius instance with $d':=\gcd(w_1,\dots, w_\ell) \divides k$. Let $d := \gcd(w_1,\dots,w_{\ell-1})$. We may assume that $d \notdivides k$; otherwise, already the Frobenius instance $G[V_1 \cup \dots \cup V_{\ell-1}]$ satisfies the assumption $\gcd(w_1,\dots, w_{\ell-1})\divides k$ and we obtain a closed set by inductive hypothesis.

Intuitively, we want to use the variables in $V_\ell$ to reach the target weight $k$ modulo $d$; then we can reduce to a simpler instance where every weight (including the target weight) is divided by $d$.
Note that we may assume 
\begin{equation}\label{eq:boundsond}
 2 \le d \le \sqrt{k/2},
 \end{equation}
where the lower bound follows from $d\notdivides k$ and the upper bound follows from $d\le \min_{i\in [\ell-1]} w_i \le \sqrt{k/2}$, as $w(v) \le w(D(v)) \le \sqrt{k/2}$ for any $v\in V$.

Let $b$ be the smallest non-negative integer such that $b \cdot  w_\ell \equiv k \pmod d$. Such an integer exists and satisfies $b < d$: By B\'ezout's identity, since $\gcd(w_\ell, d) = d' \divides k$, there are coefficients $\beta, \gamma$ such that $\beta w_\ell + \gamma d = k$, and thus any $b$ with $b \equiv \beta \pmod d$ achieves the desired congruence. 

Let $S\subseteq V_\ell$ be an arbitrary subset of size $b< d$; such a set indeed exists as $d\le \sqrt{k/2} \le k \le |V_\ell|$. We observe that $S$ satisfies
\begin{equation}\label{eq:boundonwDs}
w(D(S)) \le \sum_{s \in S} w(D(s)) \le |S| \sqrt{k/2} \le d\sqrt{k/2} \le \frac{k}{2},
\end{equation}
where we used~\ref{item:weights} for the second inequality, and~\eqref{eq:boundsond} for the last inequality. 
Consider the graph $G' = (V', E')$ obtained as a copy from $G$ from which we delete $V_\ell \cup D(S)$ and define the node weights $w'(v') = w(v)/d$ for any $v\in V\setminus (V_\ell \cup D(S))$. We claim that $G'$ is a Frobenius instance with parameter $k' := (k-w(D(S)))/d$ (observe that $k'$ is indeed integer, as $w(D(S)) \equiv b w_\ell \equiv k \pmod d$, and that $k'\ge 0$ by~\eqref{eq:boundonwDs}). If this is indeed the case, then by inductive hypothesis $G'$ has a closed set $S'$ with $w'(S') = k'$, since the gcd of the weights $w'$ is 1. Observe that by construction, $D(S)\cup S'$ is a closed set in $G$ of weight $w(D(S)) + d \cdot w'(S') = w(D(S)) + (k-w(D(S))) = k$, as desired.

It remains to prove that $G'$ is indeed a Frobenius instance with parameter $k'$. First, observe $G'$ has $\ell-1$ layers $V'_i := V_i \setminus D(S), i\in [\ell-1]$ and that $w'$ is well defined, as $d\divides w_i$ for all $i\in [\ell-1]$. Conditions~\ref{item:structweights} and~\ref{item:structedges} of being Frobenius are fulfilled as $G'$ is a subgraph of $G$. To see~\ref{item:size}, note that 
\[ |V'_i| \ge |V_i| - |D(S)| \ge |V_i|- w(D(S)) \ge k-w(D(S)) \ge k'. \]
To see~\ref{item:weights}, we observe that by~\eqref{eq:boundonwDs} and~\eqref{eq:boundsond}, we have
\[ k' = \frac{k-w(D(S))}{d} \ge \frac{k-k/2}{d} = \frac{k}{2d} \ge \frac{k}{d^2}. \]
Thus, for any $v'\in V'$, we obtain
\[ w'(D(v')) \le \frac{w(D(v))}{d} \le \frac{\sqrt{k/2}}{d} = \sqrt{\frac{k}{2d^2}} \le \sqrt{k'/2}, \]
where we used condition~\ref{item:weights} of $G$ in the second inequality. Thus, $G'$ is indeed a Frobenius instance with parameter $k'$, concluding the claim and thus the proof of our lemma.
\end{proof}

The above criterion is the main technical tool in the algorithmic result of the session. What remains is to show that the instance can be preprocessed in a way that it becomes a Frobenius instance.
\begin{thm}\label{thm:implalgo}
We can solve \wDAGImplication in time $f(k)n^{4\sqrt{k}}$.
\end{thm}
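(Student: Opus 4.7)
My plan is to preprocess the \wDAGImplication instance into a Frobenius instance and then invoke Lemma~\ref{lem:Frobcrit}. The preprocessing has two phases: a branching phase that eliminates vertices with large descendant weight, and a layering phase that reorganizes the result into a graph meeting \ref{item:structweights}--\ref{item:size}.

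\textbf{Phase 1 (branching on heavy vertices).} I call $v$ \emph{heavy} if $w(D(v)) > \sqrt{k/2}$ in the current graph. As long as a heavy vertex $v$ exists, I branch: in the \emph{include} branch I commit $D(v) \subseteq S$ (consistent since $D(v)$ is itself closed in $G$), remove $D(v)$ from the graph, and set $k \leftarrow k - w(D(v))$; in the \emph{exclude} branch I delete $v$ together with all its ancestors (any ancestor in $S$ would force $v \in S$ by closure). Every include decision shrinks $k$ by more than $\sqrt{k/2}$, so iterating the recurrence $k_{i+1} \le k_i - \sqrt{k_i/2}$ shows that at most $O(\sqrt{k})$ includes can occur on any root-to-leaf path. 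Since every exclude decision removes at least one vertex, a path has at most $n + O(\sqrt{k})$ steps with at most $O(\sqrt{k})$ includes, bounding the number of leaves of the branching tree by $\binom{n + O(\sqrt{k})}{O(\sqrt{k})} = n^{O(\sqrt{k})}$. Each leaf corresponds to a residual subproblem with some current target $k' \le k$ and the guarantee $w(D(v)) \le \sqrt{k'/2}$ everywhere.

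\textbf{Phase 2 (layering and Frobenius check).} At each leaf I define $\mathrm{level}(v)$ as the length of the longest outgoing path from $v$. Because weights are positive integers and $w(D(v)) \le \sqrt{k'/2}$, both $w(v)$ and $\mathrm{level}(v)$ are at most $\sqrt{k'/2}$. Grouping vertices by the pair $(\mathrm{level}(v), w(v))$ ordered lexicographically yields at most $k'/2$ layers of constant weight; every edge $(u,v)$ satisfies $\mathrm{level}(u) > \mathrm{level}(v)$ and therefore points from a higher-indexed layer to a lower-indexed one, establishing \ref{item:structweights} and \ref{item:structedges}. To establish \ref{item:size}, I iteratively brute-force every layer $V_i$ with $|V_i| < k'$: for each of the $2^{|V_i|} \le 2^{k'}$ candidate subsets to include, I commit the induced closure, update $k'$ and the graph, and remove the handled layer. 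Over at most $k'/2$ layers the cumulative overhead is $2^{O(k'^2)}$, absorbed into $f(k)$. The surviving instance is a Frobenius instance, which Lemma~\ref{lem:Frobcrit} decides with a single gcd check.

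\textbf{Main obstacle.} The hardest part will be the interaction between layering and the closure requirement: brute-forcing a small layer commits descendants in lower layers, which can shrink those lower layers below $k'$ and violate \ref{item:size} that had been established for them. I plan to handle this cleanly by processing layers top-down, re-validating layer sizes and re-applying the brute force whenever a layer shrinks below the threshold, and by arguing that these cascades still occur $f(k)$ many times. Combining the two phases then yields the running time $f(k)\cdot n^{O(\sqrt{k})}$, matching the claimed bound $f(k) n^{4\sqrt{k}}$.
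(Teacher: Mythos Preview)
Your high-level plan---branch on vertices with heavy descendant sets, stratify by depth and weight, eliminate small strata, then invoke Lemma~\ref{lem:Frobcrit}---coincides with the paper's. However, your sequential split into Phase~1 followed by Phase~2 introduces a real gap.

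The obstacle you flag (cascading violations of~\ref{item:size}) is manageable and is not the crux. The genuine problem is~\ref{item:weights}. Phase~1 leaves you with $w(D(v))\le\sqrt{k'/2}$ for the then-current target~$k'$; every commit in Phase~2 decreases the target to some $k''<k'$ but does nothing to tighten the descendant-weight bound, so you finish with $w(D(v))\le\sqrt{k'/2}$ where Lemma~\ref{lem:Frobcrit} demands $w(D(v))\le\sqrt{k''/2}$. That condition is essential in the lemma's proof (it is what forces $w(D(S))\le k/2$ in the inductive step), so the final gcd check is unjustified. The paper avoids this by making the entire procedure \emph{recursive}: every ``include'' decision---whether on a heavy vertex (Step~1) or on a vertex of a small sublayer (Step~2)---is a full recursive call on the residual graph with the reduced parameter, and that call immediately reruns Step~1 to re-establish~\ref{item:weights} at the new, tighter threshold. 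In particular, the paper handles small sublayers bottom-up and vertex-by-vertex (include via recursion; exclude by deleting the vertex together with its ancestors, which then lie only in not-yet-processed higher layers), rather than by top-down subset enumeration.

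A second, smaller point: the theorem asserts the explicit exponent $4\sqrt{k}$, not merely $O(\sqrt{k})$. The paper extracts the constant~$4$ from the recurrence
\[
T(n,k)\;\le\; n\cdot T\bigl(n,\,k-\sqrt{k/2}\bigr)\;+\;\tfrac{k^2}{2}\cdot T(n,k-1)\;+\;O(n^2)
\]
via the elementary inequality $4\sqrt{k-\sqrt{k/2}}+1\le 4\sqrt{k}$. Your leaf-count $\binom{n+O(\sqrt{k})}{O(\sqrt{k})}$ yields only $n^{c\sqrt{k}}$ for an unspecified~$c$, so you still owe the constant.
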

\begin{proof}
Consider the following recursive algorithm, which proceeds in 4 steps:

\emph{Step 1:} For every $v\in V$ with $w(D(v)) \ge \sqrt{k/2}$, we return YES if a recursive call determines that $G[V \setminus D(v)]$ has a closed set of weight $k-w(D(v))$; otherwise, we delete $v$ and all its ancestors from $G$. From now on, $G$ satisfies $w(D(v)) \le \sqrt{k/2}$ for all $v\in V$.

\emph{Step 2:} We construct layers $L_1, \dots, L_{\sqrt{k/2}}$ by the following iterative process: for every $i=1, \dots, \sqrt{k/2}$, we let $L_i$ consists of all vertices in $V\setminus (L_1 \cup \cdots \cup L_{i-1})$ whose outgoing edges end in $L_1\cup \cdots \cup L_{i-1}$. Note that $L_1, \dots, L_{\sqrt{k/2}}$ partitions $V$; in particular, every vertex is included in some $L_i$, since if there was a vertex $v\in V\setminus (L_1 \cup \cdots \cup L_{\sqrt{k/2}})$, then by construction there exists a path from $v$ containing strictly more than $\sqrt{k/2}$ vertices, leading to the contradiction $w(D(v))\ge |D(v)| > \sqrt{k/2}$. 

We observe that each layer $L_i$ can be partitioned into sublayers $L_{i,j}, j\in \{1, \dots, \sqrt{k/2}\}$ such that each $v\in L_{i,j}$ has weight $w(v) = j$: there can be no vertex of larger weight, as otherwise $w(D(v)) \ge w(v) > \sqrt{k/2}$ yields a contradiction. We consider layers $L_{i,j}$ in increasing lexicographic order of $(i,j)$: If $|L_{i,j}| < k$, then for every $v\in L_{i,j}$, we return YES if a recursive call determines that $G[V\setminus D(v)]$ contains a closed set of size $k-w(D(v))$, and otherwise we delete $v$ and all its ancestors from~$G$. Observe that by the lexicographic ordering, we never delete vertices from already processed layers, so that at the end of the process, each $L_{i,j}$ is either empty or contains at least $k$ vertices. %

\emph{Step 3:} We let $V_1, \dots, V_\ell$ be an enumeration of all non-empty sublayers $L_{i,j}$ by the lexicographic order on $(i,j)$ so that any vertex $v\in V_{i}$ has only edges to vertices in $V_1 \cup \cdots \cup V_{i-1}$. Observe that by construction, this yields a Frobenius instance. Let $w_1,\dots, w_\ell$ be the weights of the Frobenius instance. We return YES if $\gcd(w_1,\dots, w_\ell) \divides k$ and NO otherwise. 

Using Lemma~\ref{lem:Frobcrit}, the correctness of the algorithm is easy to see.

\begin{claim}\label{cl:imp1}
The above algorithm is correct.
\end{claim}
\begin{proof}
If the algorithm returns YES, indeed there is a closed set of size $k$: 
If we return YES in Steps 1 or 2, we have found a vertex $v$ and a closed set $S'$ in $G[V\setminus D(v)]$ of size $k-w(D(v))$, which yields a closed set $S'\cup D(v)$ in $G$ of size $k$, as desired. Otherwise, we have arrived at a Frobenius instance and returned YES since $\gcd(w_1,\dots, w_\ell)\divides k$, which implies that $G$ has a closed set of size $k$ by Lemma~\ref{lem:Frobcrit}.

Conversely, fix a closed set $S$ of size $k$, and we show that the algorithm returns YES:
If $S$ contains a vertex $v$ investigated in Steps 1 or 2, then the recursive call to $G[V\setminus D(v)]$ (for the first such vertex $v$) will find a solution of size $|S|-w(D(v))$ (note that $D(v) \subseteq S$ if $v\in S$). Otherwise, we have arrived at a Frobenius instance which must satisfy $\gcd(w_1,\dots, w_\ell)$ by Lemma~\ref{lem:Frobcrit}, and we return YES.
\cqed\end{proof}

Finally, we need to bound the running time of the recursive algorithm. The analysis relies on the observation that the algorithm makes at most $n$ recursive calls with a parameter decrease of at least $\sqrt{k/2}$, and at most $O(k^2)$ recursive calls with a parameter decrease of one.
\begin{claim}\label{cl:imp2}
The above algorithm can be implemented in time $f(k) n^{4\sqrt{k}}$.
\end{claim}
\begin{proof}
Let $U$ be the set of vertices of small layers ($|L_{i,j}| < k$) considered in Step 2. We observe that the above algorithm can be implemented recursively with the following recurrence on its running time $T(n,k)$ on instances with $n$ vertices and parameter $k$.
\begin{align*}
T(n,k) & \le \sum_{v\in V, w(D(v)) \ge \sqrt{k/2}} T(n, k-w(D(v))) + \sum_{u \in U} T(n,k-w(D(u))) + \Oh(n^2) 
\end{align*}
We claim by induction on $k$ that this yields a bound of $T(n,k) \le f(k) n^{4\sqrt{k}}$ for some $f(k) = k^{\Oh(k)}$. It is not difficult to see that for $k\le 2$, we can solve the problem in time $\Oh(n^2) = \Oh(n^{4\sqrt{k}})$, yielding the base case. For $k\ge 3$, we thus obtain the following bound, using that in Step 2, we process less than $k$ vertices for each ``small'' sublayer $L_{i,j}, 1\le i,j \le \sqrt{k/2}$, i.e., $|U|\le k (k/2) = k^2/2$,
\begin{align*}
T(n,k) & \le \Oh(n \cdot f(k-\sqrt{k/2}) n^{4\sqrt{k-\sqrt{k/2}}} + k^2 f(k-1) n^{4\sqrt{k-1}} + n^2) \\
   & \le (f(k)/2) (n^{4\sqrt{k-\sqrt{k/2}}+1} + n^{4\sqrt{k}}) \le f(k)n^{4\sqrt{k}},
  \end{align*}
 where the second bound follows from choosing $f(k) = k^{\Oh(k)}$ large enough to ensure $k^2f(k-1) \le f(k)/2$ and 
 the last bound follows from the observation that $4\sqrt{k-\sqrt{k/2}} + 1\le 4\sqrt{k}$ if and only if 
   \begin{align*}
& &  \left(4\sqrt{k-\sqrt{k/2}} + 1\right)^2 & \le 16k\\
& \iff & 16(k-\sqrt{k/2}) + 8\sqrt{k-\sqrt{k/2}} + 1 & \le 16k\\
& \iff & 8\sqrt{k-\sqrt{k/2}} + 1 & \le 16\sqrt{k/2},
  \end{align*}
  where the last inequality holds since $8\sqrt{k}+1 \le 16\sqrt{k/2}$ as $k\ge 3$.
  \cqed\end{proof}
Claims \ref{cl:imp1} and \ref{cl:imp2} show the correctness of our algorithm for \wDAGImplication. By the reduction described at the beginning of the section, a similar algorithm follows for \Implication.
\end{proof}
\section{Algorithms for $\NAND_2$-avoiding $\confam$: Reduction to Implication}
\label{sec:redtoImpl}

In this section, we show that for any $\NAND_{2}$-avoiding constraint family $\confam$, we can reduce $\SAT(\confam)$ to \Implication. Specifically, we obtain the following theorem.
\begin{thm}\label{thm:redtoImpl}
Let $\confam$ be a $\NAND_2$-avoiding constraint family and let $T_\Impl(n,k)$ denote the optimal running time to solve \Implication. There is a constant $c_\confam$ and computable $f(k)$ such that we can solve $\SAT(\confam)$ in time 
$f(k)( T_\Impl(n,k) + n^{c_\confam})\log n$.
\end{thm}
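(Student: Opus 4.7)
The plan is to build a randomized reduction from $\SAT(\confam)$ to \Implication: given a formula $\phi$ and integer $k$, the reduction produces in time $n^{c_\confam}$ an \Implication-instance $I$ with the following properties: (\textbf{soundness}) every weight-$k$ satisfying assignment of $I$ is a weight-$k$ satisfying assignment of $\phi$; (\textbf{completeness}) for any fixed weight-$k$ satisfying assignment $a^*$ of $\phi$, the assignment $a^*$ satisfies $I$ with probability at least $1/f(k)$ over the coins of the reduction. Given such a reduction, the theorem follows by running it $\Theta(f(k)\log n)$ independent times, invoking the $T_\Impl(n,k)$-time \Implication solver on each instance, and returning YES iff some instance is satisfiable. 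The failure probability over all trials is $(1-1/f(k))^{\Theta(f(k)\log n)} \le n^{-\Omega(1)}$, and a single final $n^{c_\confam}$-time verification pass handles false positives.

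The construction of $I$ processes each constraint $C = f(\mathbf{x})$ of $\phi$ and appends a set $\mathcal{I}_C$ of binary implications over $\vars(C)$. The key local lemma exploits the $\NAND_2$-avoidance of $f$: for every $f \in \confam$ of arity $r$, the satisfying set $f^{-1}(1)$ can be covered by a bounded family of ``implicational'' subsets, each being the satisfying set of some conjunction of implications together with constant pins on an $O(1)$-sized subset of $\vars(C)$ (the pins are simulated via dedicated gadget vertices built during the $n^{c_\confam}$ preprocessing). Thus, for any $\alpha \in f^{-1}(1)$, a constant-sized random guess depending only on $r$ produces, with probability at least $1/g(r)$, an implicational subset $\mathcal{I}_C$ with $\alpha \models \mathcal{I}_C$ and with every assignment satisfying $\mathcal{I}_C$ also satisfying $C$. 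In particular, soundness is automatic by construction, and the \emph{only} randomness-dependent event is whether $\mathcal{I}_C$ is consistent with $a^*$.

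The main technical obstacle is lifting the per-constraint success probability of $1/g(r)$ into an overall $1/f(k)$ success probability, independent of the number of constraints $m$. Naive independent sampling per constraint collapses the success probability to $g(r)^{-m}$, which is much too small. To circumvent this, we correlate the randomness across constraints by sampling a constant number of random bits \emph{per variable} (rather than per constraint), and letting each constraint $C$ deterministically compute $\mathcal{I}_C$ as a function of the random bits attached to the variables in $\vars(C)$. Because $a^*$ has only $k$ ones, the event ``$a^*$ satisfies $\mathcal{I}_C$'' reduces to conditions on the random bits of these $k$ variables together with at most an arity-bounded number of variables in $\vars(C) \setminus \ones(a^*)$ per constraint; grouping constraints by the pattern they induce on $\ones(a^*)$ shows that the entire collection of bad events is governed by at most $f(k)$ independent random choices. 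Carrying out this counting --- in particular, verifying that the correlated-sampling scheme simultaneously preserves per-constraint soundness for \emph{every} constraint while keeping the global failure probability at most $1-1/f(k)$ --- is the central challenge and requires applying the $\NAND_2$-avoidance-based covering lemma uniformly across all constraints, together with a careful design of the per-variable randomness and the deterministic dispatch rule.
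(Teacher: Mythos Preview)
Your proposal has a genuine gap at exactly the point you flag as ``the central challenge'': the correlated-sampling step is asserted but not carried out, and there is no evident way to make it work in the per-constraint framework you set up. Concretely, your dispatch rule for a constraint $C$ reads the random bits of \emph{all} variables in $\vars(C)$, including the zero-variables of $a^*$. Grouping constraints by their pattern on $\ones(a^*)$ does not help: constraints in the same group still differ in their zero-variables, and those zero-variables carry independent random bits that the dispatch sees. For the completeness event to be governed by only $f(k)$ random choices, you would need the dispatch to output a piece containing $a^*\!\restriction_{\vars(C)}$ \emph{regardless} of the zero-variable bits (once the one-variable bits are ``correct''). But the dispatch cannot tell which variables are one-variables; if it treats a zero-variable as a one-variable it will typically select a piece that either fails soundness (the piece escapes $f^{-1}(1)$) or fails completeness (the piece misses $a^*\!\restriction_{\vars(C)}$). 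You give no mechanism to avoid this. Relatedly, your ``key local lemma'' is essentially trivial in the $0$-valid case---for each satisfying $a_S$ take the piece $\{a_\emptyset,a_S\}$, expressible by $0$-pins plus equivalences---and uses $\NAND_2$-avoidance nowhere; so the avoidance hypothesis does no work anywhere in your argument, which is a strong sign that something essential is missing.

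The paper's proof takes a different route that sidesteps the per-constraint correlation problem entirely. It builds the \Implication instance \emph{globally} and \emph{iteratively}: starting from the empty graph on $\{x_1,\dots,x_n\}$, while some $v$ has $a_{D(v)}$ violating $\phi$, compute the minimal satisfying extensions of $a_{D(v)}$ of weight $\le k$ (there are $\le r^{O(k)}$ of them), collect their new one-variables into a set $X$, and add a single random edge $(v,x)$ with $x\in X$. The point is that each vertex is visited at most $k$ times, so preserving a fixed solution $a^*$ requires only that the $\le k^2$ random choices made at the $k$ vertices of $\ones(a^*)$ land correctly---this is where the $f(k)$ bound comes from, with no per-constraint bookkeeping. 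Soundness uses $\NAND_2$-avoidance in a structural way you do not have: once every $a_{D(v)}$ satisfies $\phi$, an induction on a topological order shows that \emph{every} closed set satisfies $\phi$; the inductive step reduces to the statement ``$f(a_{S\cap T})=f(a_S)=f(a_T)=1 \Rightarrow f(a_{S\cup T})=1$'', whose failure would exhibit $\NAND_2$ as a restriction of $f$. This is where the hypothesis is actually consumed.
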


Together with Theorem~\ref{thm:implalgo}, this gives an $f(k)n^{4\sqrt{k} + c_\confam}$-time algorithm for any $\NAND_2$-avoiding constraint family $\confam$.

To prove the above theorem, we prepare some notation and helpful facts. Let $\phi$ be an arbitrary formula.  For any assignment $a$, we call $a'$ a \emph{minimal satisfying extension} of $a$, if $a'$ satisfies $\phi$, $a\le a'$, and no other satisfying assignment $a''\notin \{a,a'\}$ fulfills $a\le a'' \le a'$. The following lemma shows that there are only $f(k)$ many minimal extensions of weight at most $k$, and these minimal extensions can be computed in time $f(k) n^{c}$ for some constant $c$ independent of $k$. Intuitively, this follows by using the bounded search tree technique over violated constraints, where the depth of the search tree is bounded by $k$ and each branching step has at most $r$ possibilities.

\begin{lem}[{\cite[Lemma 2.3]{BulatovM14}}]\label{lem:minextensions}
Let $\confam$ be a finite constraint family of bounded arity $r$. There is a constant $c_\confam'$ such that given any instance $\phi$ of $\SAT(\confam)$ and assignment $a$, there are at most $\Oh(r^k)$ minimal extensions of $a$ of weight $k$, and we can compute these extensions in time $\Oh(r^k n^{c'_\confam})$. 
\end{lem}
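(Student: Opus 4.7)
The plan is a bounded search tree argument, as the excerpt hints.

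I would implement a recursive procedure that starts from $\tilde{a} := a$ and, at each node: if $\tilde{a}$ satisfies $\phi$, record $\tilde{a}$ in a candidate set $S$; if $|\tilde{a}| > k$, prune; otherwise pick any constraint $C$ of $\phi$ violated by $\tilde{a}$ and, for each variable $x_j \in \vars(C)$ with $\tilde{a}(x_j) = 0$, recurse on $\tilde{a}$ with $x_j$ flipped to $1$. Every recursive call strictly increases the weight of $\tilde{a}$ by one, so the recursion tree has depth at most $k - |a| \leq k$ and branching factor at most $r$, giving $|S| \leq r^k$. Each node does $\Oh(n^r)$ work to locate a violated constraint (there are at most $\Oh(n^r)$ constraints, each of arity $\leq r$), so enumerating $S$ costs $\Oh(r^k n^r)$.

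The core technical step is completeness: every minimal satisfying extension $a'$ of $a$ with $|a'| \leq k$ appears in $S$. I would prove this by induction on $|a'| - |\tilde{a}|$. If $\tilde{a} \leq a'$ and $\tilde{a}$ already satisfies $\phi$, then minimality of $a'$ forces $\tilde{a} = a'$ and $a'$ is recorded. Otherwise some constraint $C$ is violated by $\tilde{a}$; since $a'$ satisfies $C$ and dominates $\tilde{a}$ pointwise, there is at least one $x_j \in \vars(C)$ with $\tilde{a}(x_j) = 0$ and $a'(x_j) = 1$, so the algorithm does explore the branch that sets this particular $x_j$ to $1$. The key observation — and the main obstacle of the proof — is that $a'$ remains a \emph{minimal} satisfying extension of $\tilde{a} + e_{x_j}$: any satisfying $a''$ with $\tilde{a} + e_{x_j} \leq a'' < a'$ would also satisfy $a \leq a'' < a'$ (since $a \leq \tilde{a} \leq \tilde{a} + e_{x_j}$), contradicting the minimality of $a'$ as an extension of $a$. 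The inductive hypothesis then finishes the argument.

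Given completeness, the bound on the number of weight-$k$ minimal extensions follows at once, since they all lie in $S$ and $|S| \leq r^k$. To output only the minimal extensions, I would post-process $S$ to keep the elements that (a) have weight exactly $k$ and (b) are minimal in $S$ under $\leq$; the equivalence ``minimal in $S$ iff minimal satisfying extension of $a$'' also follows from the completeness step, since any strictly smaller satisfying assignment $a'' < a'$ would itself have some minimal satisfying extension $a''' \in S$ with $a''' \leq a'' < a'$. The filter runs in $\Oh(|S|^2 n)$ pairwise-dominance time. The total cost is dominated by $r^{\Oh(k)} n^{\Oh(1)}$, which we absorb into the stated $\Oh(r^k n^{c'_\confam})$ bound by choosing $c'_\confam := r + \Oh(1)$ and noting that $r$ is a constant depending only on $\confam$. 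The only nontrivial step throughout is the minimality-preservation argument in the inductive completeness proof; the enumeration, filtering, and running-time analyses are routine.
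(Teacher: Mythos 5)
Your proof is correct and implements exactly the bounded search tree approach that the paper's surrounding remark sketches; the paper itself cites this lemma from~\cite{BulatovM14} without giving a proof. The crux you single out is indeed the right invariant: since $a'$ and $\tilde{a}$ must disagree on some variable of the violated constraint $C$ (else $a'$ would also violate $C$), the branch on that variable preserves $a\le\tilde{a}+e_{x_j}\le a'$, and any satisfying assignment strictly between $\tilde{a}+e_{x_j}$ and $a'$ is also strictly between $a$ and $a'$, contradicting minimality of $a'$ over $a$.

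Two small technical points. First, as you observe, the recorded set $S$ genuinely contains non-minimal satisfying assignments (distinct branches can terminate at comparable assignments), so a filter is needed; but your criterion ``keep $a'$ iff it is minimal in $S$'' is off in the degenerate case where $a$ itself already satisfies $\phi$: then $a\in S$ dominates everything, and the filter would discard every proper extension even though some of them \emph{are} minimal satisfying extensions under the paper's definition (which excludes $a''\in\{a,a'\}$). The fix is to require ``no element of $S$ strictly between $a$ and $a'$''. This corner case never arises in the paper's uses, where the lemma is only invoked on assignments that violate $\phi$, but the lemma is stated for arbitrary $a$. Second, the pairwise filter costs $\Oh(|S|^2 n)=\Oh(r^{2k}n)$, and $r^{2k}$ is \emph{not} $\Oh(r^k)$, so choosing a larger $c'_\confam$ cannot absorb it into the stated $\Oh(r^k n^{c'_\confam})$ bound. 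This overshoot is harmless for every application in the paper, which only needs an $f(k)\cdot n^{\Oh(1)}$ bound, but your time analysis is looser than the lemma as written.
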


As an immediate useful consequence, we obtain that for our algorithmic results, we may assume without loss of generality that $\confam$ is $0$-valid, i.e., each $f\in \confam$ is satisfied by the all-zeroes assignment.
\begin{cor}[{see also \cite[Lemma 4.1]{Marx05}}]\label{cor:wlog0valid}
We can reduce any instance of $\SAT(\confam)$ with parameter~$k$ to $\Oh(r^k)$ many instances of $\SAT(\confam')$ with a parameter bounded by $k$, where $\confam'$ is the set of all $0$-valid $f'$ that are represented by $\confam$.
\end{cor}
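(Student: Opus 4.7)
The plan is to enumerate, using Lemma~\ref{lem:minextensions} applied with starting assignment $\mathbf{0}$, all minimal satisfying extensions $a$ of $\mathbf{0}$ of weight at most $k$; this yields $\Oh(r^k)$ such candidates in time $\Oh(r^k n^{c'_\confam})$. For each such $a$ of weight $w = w(a)$, I would build one sub-instance $\phi_a$ on the variable set $V\setminus \ones(a)$ with parameter $k - w$, by replacing every original constraint $C = f(x_{i_1},\dots,x_{i_r})$ with $g(\mathbf{x}')$, where $g$ is the restriction of $f$ obtained by plugging in the constant $1$ at each position occupied by a variable of $\ones(a)$, and $\mathbf{x}'$ is the tuple of remaining arguments of $C$.

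The crux of the construction is the claim that every such $g$ belongs to $\confam'$, i.e., is a $0$-valid restriction of some $f\in\confam$. Since $a$ satisfies $\phi$, evaluating $C$ at $a$ yields $1$; but this is precisely $g$ evaluated at the all-zeroes assignment on its free arguments, so $g(\mathbf{0}) = 1$. Hence $\phi_a$ is a valid instance of $\SAT(\confam')$ with parameter $k-w \le k$, and we produce at most $\Oh(r^k)$ such sub-instances.

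Correctness is a routine two-direction check. Backward: any weight-$(k-w)$ satisfying assignment $a'$ of $\phi_a$ combines with $a$ into a weight-$k$ satisfying assignment of $\phi$, directly from the defining property of the restriction $g$. Forward: given any weight-$k$ satisfying assignment $b$ of $\phi$, pick a $\le$-minimal satisfying assignment $a^*$ with $a^*\le b$ (exists by finiteness of the poset below $b$); then $a^*$ is a minimal satisfying extension of $\mathbf{0}$, because any satisfying $a''$ with $\mathbf{0}\le a''\le a^*$ also lies below $b$, so $a''=a^*$ by minimality. Since $w(a^*)\le w(b)=k$, the assignment $a^*$ is among those enumerated, and restricting $b$ to $V\setminus \ones(a^*)$ yields a satisfying weight-$(k-w(a^*))$ assignment of $\phi_{a^*}$. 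No substantive obstacle is expected — the argument is essentially bookkeeping once Lemma~\ref{lem:minextensions} is invoked, with the single point requiring verification being the $0$-validity of the restricted constraints, which is immediate from $a$ satisfying $\phi$.
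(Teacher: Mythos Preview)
Your proposal is correct and is precisely the intended argument: the paper does not spell out a proof of this corollary, treating it as an immediate consequence of Lemma~\ref{lem:minextensions} (and citing \cite[Lemma~4.1]{Marx05}), and your write-up is the natural way to fill in those details. The one point worth verifying---that each restricted constraint $g$ lands in $\confam'$ because $a$ satisfies $\phi$---is exactly the observation that makes the corollary work, and you handle it correctly.
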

\noindent 
By definition, if $\confam$ does not represent $\NAND_2$, then also $\confam'$ does not represent $\NAND_2$, and it remains to give an $f(k)(T_\Impl(n,k) + n^{c_\confam})\log n$-time algorithm for \emph{$0$-valid} $\NAND_2$-avoiding $\confam$.  

In the remainder of this section, we will use the graph formulation of the \Implication problem: We are given a directed graph $G=(V,E)$ and the task is to find a closed set $S$ (recall that $S$ is closed, if for all $(u,v)\in E$ we have that $u\in S$ implies $v\in S$) of size $k$. Recall that for any vertex set $S\subseteq V$, $D(S)$ denotes the set of descendants of any vertex $s \in S$ (including the vertices in $S$).

Our aim is the following: Given a formula $\phi$ of $\SAT(\confam)$, we give a randomized construction of an \Implication instance~$G$ such that 
\begin{enumerate}[label=(\roman*),nosep]
\item any closed set $S$ in $G$ corresponds to a satisfying assignment of $\phi$, and 
\item with large enough probability, $G$ contains a closed set of size $k$ if $\phi$ has a weight-$k$ solution.
\end{enumerate}
To this end, we let $V=\{x_1, \dots, x_n\}$ and recall that, for any set $S\subseteq V$, we let $a_S: [n]\to \{0,1\}$ denote a corresponding assignment with $a_S(i) = 1$ iff $x_i \in S$. From now on, we often synonymously speak of closed sets $S\subseteq V$ in $G$ and the corresponding assignment $a_S$ for $\phi$.

The rough outline is as follows: we start with the graph $G=(V,\emptyset)$, and try to repeatedly ``fix'' some closed set $S$ that violates $\phi$, by determining a (random) implication consistent with a minimal satisfying extension of $S$. The main insight is that if $\confam$ avoids $\NAND_2$, then it suffices to make sure that all sets $D(v)$ for $v\in V$ are satisfying and this will automatically ensure that every closed set is satisfying.

Let us formally describe the algorithm: %
\begin{enumerate}
\item Given $\phi$, initialize $G= (V,E)$ with $V=\{x_1,\dots, x_n\}$ and $E=\emptyset$.
\item While there exists some $v\in V$ such that $a_{D(v)}$ violates $\phi$, do the following:
\begin{enumerate}
\item Compute the set $A_v$ of minimal satisfying extensions of $a_{D(v)}$ of weight at most $k$.
\item Let $X$ consist of all $x_i \in V\setminus D(v)$ such that there is some $a\in A_v$ with $a(i) = 1$.
\item If $X = \emptyset$, delete all ascendants of $v$ (including $v$) from $G$. Otherwise, pick $x$ uniformly at random from $X$ and add the edge $(v,x)$ to $E$.
\end{enumerate}
\end{enumerate}

The important properties of the algorithm are captured in the following lemma.
\begin{lem}\label{lem:randprop}
Let $\confam$ be a finite $0$-valid constraint family. There is a constant $c_\confam$ and a function $g(k)$ such that the following properties hold.
\begin{enumerate}[label=(P\arabic*)]
\item\label{item:iters} During the process, each vertex $v$ is considered at most $k$ times in the while loop. Thus, the algorithm can be implemented to run in time $\Oh(g(k)n^{c_\confam})$.
\item\label{item:satisclosedwhp} If $\phi$ has a satisfying assignment of weight $k$, then with probability at least $g(k)^{-1}$, there is a closed set $S$ in $G$ of size $k$.
\item\label{item:closedissat} If $\confam$ avoids $\NAND_2$, any closed set $S\subseteq V$ in the constructed graph yields a satisfying assignment $a_S$ for $\phi$. 
\end{enumerate}
\end{lem}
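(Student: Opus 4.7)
The plan is to verify the three properties separately: \ref{item:iters} via a short potential argument, \ref{item:closedissat} via a structural characterization of $0$-valid $\NAND_2$-avoiding constraints combined with induction on closed sets, and \ref{item:satisclosedwhp} via a careful probabilistic analysis tracking $\ones(a^*)$ as an invariant.

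For \ref{item:iters}, each loop iteration processing some $v$ either deletes $v$ (when $X = \emptyset$) or adds a new edge $(v,x)$ with $x \notin D(v)$, strictly enlarging $D(v)$. Whenever $|D(v)| > k$, no weight-$\le k$ satisfying extension of $a_{D(v)}$ can exist, so $A_v = X = \emptyset$ and $v$ is deleted. Hence each vertex is processed at most $k+1$ times, giving $\Oh(nk)$ iterations in total. Each iteration computes $A_v$ in time $\Oh(r^k n^{c'_\confam})$ by Lemma~\ref{lem:minextensions} plus polynomial graph upkeep, yielding total running time $g(k) n^{c_\confam}$ for suitable computable $g(k)$ and constant $c_\confam$.

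For \ref{item:closedissat}, I would first invoke the following characterization: since $\confam$ is $0$-valid (by Corollary~\ref{cor:wlog0valid}) and avoids $\NAND_2$, every $f \in \confam$ satisfies the implication that $f(\alpha) = f(\beta) = f(\alpha \wedge \beta) = 1$ forces $f(\alpha \vee \beta) = 1$. Otherwise, identifying $\ones(\alpha) \setminus \ones(\beta)$ with a new variable $y_1$, identifying $\ones(\beta) \setminus \ones(\alpha)$ with $y_2$, and fixing $\ones(\alpha) \cap \ones(\beta)$ to $1$ and the rest to $0$ would realize $\NAND_2$ as a restriction of $f$, a contradiction. Using this, I would prove $a_S$ satisfies $\phi$ for every closed $S$ by strong induction on $|S|$. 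The base case $S = \emptyset$ is $0$-validity. For $|S| > 0$, pick a source strongly connected component $C$ of $G[S]$ and any $v \in C$. If $D(v) = S$, then $a_S = a_{D(v)}$ satisfies $\phi$ by the termination condition. Otherwise set $T_1 = D(v)$ and $T_2 = \bigcup_{u \in S \setminus D(v)} D(u)$: both are closed and $T_1 \cup T_2 = S$; since $C$ is a source SCC, no $u \in S \setminus D(v)$ reaches $C$, so $C \subseteq T_1 \setminus T_2$, giving $T_1 \cap T_2 \subsetneq T_1$ and $T_2 \subsetneq S$. By the inductive hypothesis $a_{T_2}$ and $a_{T_1 \cap T_2}$ satisfy $\phi$, by termination $a_{T_1}$ satisfies $\phi$, and applying the characterization constraint by constraint yields $a_S = a_{T_1 \cup T_2}$ satisfying $\phi$.

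For \ref{item:satisclosedwhp}, I would fix a weight-$k$ satisfying assignment $a^*$ of $\phi$, let $T = \ones(a^*)$, and maintain the invariant that $T$ is closed under the current edge set. Deletions preserve the invariant: if $v \notin T$ is deleted together with its ancestors, closedness of $T$ forbids any ancestor of $v$ from lying in $T$, so $T$-vertices are never touched. Edge additions preserve the invariant whenever an added edge $(v,x)$ with $v \in T$ satisfies $x \in T$; in that case the invariant gives $D(v) \subseteq T$, so $a^*$ itself is a satisfying extension of $a_{D(v)}$ of weight $k$, some minimal extension $a \in A_v$ with $a \le a^*$ exists, and $\ones(a) \setminus D(v)$ is a non-empty subset of $T \cap X$. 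Since $|X| \le |A_v| \cdot k \le \Oh(r^k k)$ by Lemma~\ref{lem:minextensions}, the random $x$ lands in $T$ with probability at least $1/\Oh(r^k k)$. By \ref{item:iters} each vertex has at most $k$ outgoing edges added, so the number of critical random choices (edges leaving $T$) is at most $k|T| = k^2$; combining, with probability at least $(\Oh(r^k k))^{-k^2} =: g(k)^{-1}$ all critical choices succeed, the invariant holds throughout, and $T$ remains a closed set of size $k$ in the final graph. The hard part here is precisely this probabilistic accounting: the crucial insight is to focus the union bound on the $\Oh(k^2)$ critical choices rather than all $\Oh(nk)$ loop iterations, so that the success probability depends only on $k$ and not on $n$.
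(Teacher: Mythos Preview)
Your proposal is correct and follows essentially the same approach as the paper: the arguments for \ref{item:iters} and \ref{item:satisclosedwhp} are virtually identical (including the $\Oh(kr^k)$ bound on $|X|$ and the $k^2$ critical choices), and for \ref{item:closedissat} you isolate the same $\NAND_2$-avoidance consequence ($f(\alpha)=f(\beta)=f(\alpha\wedge\beta)=1\Rightarrow f(\alpha\vee\beta)=1$) that the paper derives at the point of contradiction. The only cosmetic difference is that for \ref{item:closedissat} you induct on $|S|$ and split via a source SCC of $G[S]$, whereas the paper first contracts SCCs and inducts along a reverse topological order; both yield the same decomposition $S=D(v)\cup D(S\setminus D(v))$ and the same contradiction.
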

\begin{proof}
For \ref{item:iters}, note that whenever $v\in V$ is considered in the while loop, it is either deleted, or an edge $(v,x)$ with $x\notin D(v)$ is added to the graph. Thus, when $v$ is considered for the $k$-th time, we have $|D(v)|\ge k$, and thus there can be no satisfying extension of $a_{D(v)}$ of weight at most $k$. Consequently, we must have $A_v=\emptyset$, and thus $X=\emptyset$, which forces $v$ to be deleted. Thus, we have at most $kn$ iterations of the while loop, where each iteration can be implemented in time $\Oh(r^k n^{c'_\confam})$ by Lemma~\ref{lem:minextensions}.

For~\ref{item:satisclosedwhp}, assume that there is a set $S$ of size $k$ such that $a_S$ satisfies $\phi$. We show that with large enough probability, we will maintain as invariant that $D(v)\subseteq S$ for every $v\in S$, and thus $S$ will be a closed set in $G$. To this end, we first observe that for $D(v) \subseteq S$ to hold for all $v\in S$, it suffices that the following property holds:
\begin{equation}\label{eq:processgood}
\text{In each iteration that considers a vertex $v\in S$, the selected vertex $x$ is in $S$.}
\end{equation}
Indeed, if this is the case, then no $v\in S$ is ever deleted. Furthermore, we have that $D(v)\subseteq S$ for all $v\in S$, and thus $S$ is a closed set in $G$. It remains to give a lower bound on the probability that~\eqref{eq:processgood} holds throughout the process.

To this end, consider the event that some $v\in V$ is considered in the while loop, conditioned that~\eqref{eq:processgood} has not been violated in a previous iteration. Under this event, $D(v) \subseteq S$, and thus there is a minimal satisfying extension $D(v) \subsetneq S' \subseteq S$ such that $a_{S'}$ satisfies $\phi$ and thus $a_{S'}\in A_v$. Let $s\in S'\setminus D(v)$ be arbitrary, then $s\in X$ by construction (note that $s$ has not been deleted). By Lemma~\ref{lem:minextensions}, we have that $|A_v| \le \Oh(r^k)$. Since each $a\in A_v$ has weight at most $k$, this yields $|X|\le k|A_v|\le \Oh(k r^k)$. Thus, the probability that the random choice is $x=s$ is at least $1/|X|\ge \Omega(1/(kr^k))$. Finally, we observe that by~\ref{item:iters}, for each $v\in S$, there are at most $k$ iterations considering $v$, where each iteration has a probability of at least $\Omega(1/(kr^k))$ of not violating~\eqref{eq:processgood}. Thus, we obtain that~\eqref{eq:processgood} holds with probability at least $\Omega(1/(kr^k)^{k|S|}) = \Omega(1/(kr^k)^{k^2})$, and the claim follows by setting $g(k) \coloneqq (kr^k)^{-k^2}$.

Finally, for \ref{item:closedissat}, note that at the end of the process, the property holds that
\begin{equation}\label{eq:Nvclaim}
\text{For all (remaining) } v\in V, a_{D(v)} \text{ satisfies } \phi.
\end{equation}
We will leverage this fact to show that $a_S$ satisfies $\phi$ for \emph{all} closed sets $S=D(v_1)\cup ... \cup D(v_\ell)$ for $v_1,\dots, v_\ell\in V$. We first transform the graph $G$ to a DAG by contracting all strongly connected components $C=\{v_1, \dots, v_c\}$ to a single vertex $v_C$ representing the set $C$. Note that the closed sets in the DAG remain in a one-to-one correspondence to the closed sets of the original graph (and the corresponding assignments to $\phi$), thus this transformation is without loss of generality. Thus, we may assume that $G$ has a topological ordering $v_1,\dots, v_{n'}$ of its vertices ($n'\le n$). We will prove by induction on $i=n', ..., 1$ that for all closed sets $S\subseteq \{v_i,...,v_{n'}\}$, $a_S$ satisfies $\phi$.

For the base case $i=n'$, we only need to verify that (i) the all-0 assignment satisfies $\phi$, which holds by 0-validity of $\confam$, and (ii) that $a_{v_{n'}}$ satisfies $\phi$, which holds by~\eqref{eq:Nvclaim} (as $D(v_{n'}) = \{v_{n'}\}$). Thus, for $i<n'$, let us assume that the claim holds for $i+1$. Consider any closed set $U\subseteq \{v_i, \dots, v_{n'}\}$. If $U$ does not contain $v_i$, the claim follows by inductive assumption, thus let us assume that $v_i \in U$ and thus $U\supseteq D(v_i)$, as $U$ is closed. If $U = D(v_i)$, $a_U$ satisfies $\phi$ by~\eqref{eq:Nvclaim}. Thus, it remains to consider $U \supsetneq D(v_i)$, for which we assume for contradiction that $a_U$ violates~$\phi$. Let $W := U\setminus D(v_i)$, and note that $D(W)\subseteq U$ is a closed set in $\{v_{i+1},\dots, v_{n'}\}$. Thus, by inductive assumption, $a_{D(W)}$ satisfies $\phi$. Furthermore, observe that $Z:=D(v_i)\cap D(W)$ is a closed set in $\{v_{i+1}, \dots, v_{n'}\}$ (since the intersection of any two closed sets yields a closed set). Thus, $a_Z$ satisfies $\phi$ by inductive assumption. It remains to show that the fact that $a_{D(v_i)}$, $a_{D(W)}$ and $a_Z=a_{D(v_i)\cap D(W)}$ all satisfy $\phi$, while $a_U=a_{D(v_i)\cup D(W)}$ violates $\phi$, gives a contradiction to $\confam$ avoiding $\NAND_2$.

To this end, let $C$ be a constraint violated by $a_U$ and note that $C=f(x_{i_1},\dots,x_{i_r})$ for some $f\in \confam$ and $i_1,\dots, i_r\in [n]$. Note that we can view $f$ as $f: \{0,1\}^{V_c} \to \{0,1\}$ for some appropriate variable set $V_C$. We show how to obtain $\NAND_2$ as a restriction of $f$ by partitioning $V_C$ into $X'\coloneqq (D(v_i) \setminus Z) \cap V_C, Y' \coloneqq (D(W)\setminus Z)\cap V_C, Z_1 \coloneqq Z\cap V_C, Z_0 \coloneqq V_C \setminus (X'\cup Y' \cup Z_1)$ and observing that 
\begin{align*}
\begin{matrix}
f( \overbrace{\entry{0}}^{X'},& \overbrace{\entry{0}}^{Y'}, & \overbrace{\entry{0}}^{Z_0}, & \overbrace{\entry{1}}^{Z_1}) & = & 1, & \text{[since $a_Z$ satisfies $C$]} \\
f( \entry{1},& \entry{0}, & \entry{0}, & \entry{1}) & = & 1, & \text{[since $a_{D(v_i)}$ satisfies $C$]} \\
f( \entry{0},& \entry{1}, & \entry{0}, & \entry{1}) & = & 1, & \text{[since $a_{D(W)}$ satisfies $C$]} \\
  \hline
f( \entry{1},& \entry{1}, & \entry{0}, & \entry{1}) & = & 0. & \text{[since $a_{D(W)\cup D(v_i)}$ violates $C$]} \\
\end{matrix}
\end{align*}
\end{proof}

It remains to give the proof of Theorem~\ref{thm:redtoImpl}.
\begin{proof}[Proof of Theorem~\ref{thm:redtoImpl}]
By Corollary~\ref{cor:wlog0valid}, we may assume without loss of generality that $\confam$ is 0-valid.
We repeat the following process $g(k)$ many times: We use the above algorithm to generate an \Implication instance $G$, and return YES if $G$ contains a closed set of size $k$, which we determine using an optimal \Implication algorithm. If none of the $g(k)$ iterations were successful, we return NO. Note that this approach can be implemented in time $g(k)\Oh(g(k)n^{c_\confam} + T_\Impl(n,k))$ by~\ref{item:iters}, and correctly decides the instance with probability at least $1-(1-1/g(k))^{g(k)} \ge 1-1/e$ by~\ref{item:satisclosedwhp} and~\ref{item:closedissat}.

The algorithm described above can be derandomized using the standard technique of Color Coding \cite{AlonYZ95}. In each iteration when vertex $v$ is considered, a random vertex $x$ is selected from a set $X$ of at most $K=\Oh(kr^k)$ vertices. As each vertex is considered at most $k$ times, we can represent the random choices by a function $r:V\to [K]^k$,  with the meaning that $r(v)$ is the vector of choices made when considering vertex $v$.  As discussed in the proof of Lemma~\ref{lem:randprop}, when considering vertices $v\in S$, these random choices need to be consistent with $S$ to ensure that $S$ is a closed set in the resulting graph. That is, for each $v\in S$ there is a vector $c(v)\in [K]^k$ such that if the random choice satisfies $r(v)=c(v)$ for every $v\in S$, then $S$ is a closed set.

We say that a family $\mathcal{H}$ of functions $h:[n]\to [k]$ is a {\em $(n,k)$-perfect family of hash functions} if for every $S\subseteq V$ of size $k$, there is an $h\in \mathcal{H}$ that is injective on $S$, i.e., assigns different values to different elements of $S$. It is known that a $(n,k)$-perfect family of size $2^{O(k)}\log n$ can be computed in time $2^{O(k)}n\log n$ \cite{AlonYZ95}. The derandomized algorithm would first compute such a family $\mathcal{H}$ over $V$ and would iteratively go through every $h\in \mathcal{H}$ and function $q:[k]\to [K]^k$. For a given choice of $h$ and $q$, we define the function $r(v)=q(h(v))$ and run the randomized algorithm using this function $r$ instead of the random choices. It is easy to see that the definition of $(n,k)$-perfect hash functions implies that there is at least one choice of $h$ and $q$ where $r(v)$ is exactly the prescribed value $c(v)$ for every $v\in S$ and therefore the randomized algorithm correctly finds the solution $S$. As we are considering at most $|\mathcal{H}|=2^{O(k)}\log n$ functions $h$ and $K^{k^2}$ different functions $q$, there is a function $f(k)$ such that the total running time is at most $f(k)\log n$ times a single run of the randomized algorithm.
\end{proof}

\section{Algorithms for $\NAND$-representing $\confam$: Reduction to Clique}
\label{sec:redToClique}

In this section, we develop algorithm for constraint families that might represent $\NAND_2$, but avoid $\NAND_d$ for some $d\ge 3$. To this end, we give a reduction to $(d-1)$-uniform HyperClique for $\NAND_d$-avoiding families, giving in particular a $f(k)n^{(\omega/3)k+c_\confam}$-time algorithm for $\NAND_3$-avoiding families. 

We first start with a natural reduction of $\SAT(\confam)$ for any $\confam$ with arity bounded by $r$ to $r$-uniform HyperClique, based on color-coding. To this end, let $\THC{d}(n,k)$ denote the optimal running time of finding a $k$-clique in a $d$-uniform hypergraph.
\begin{prop}\label{prop:aritybaseline}
Let $\confam$ be a constraint family of arity at most $r$. Then $\SAT(\confam)$ can be solved in time $f(k)(n^{2r} + \THC{r}(n,k))\log n$.
\end{prop}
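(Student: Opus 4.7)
The plan is to reduce $\SAT(\confam)$ to $r$-uniform $k$-HyperClique via a color-coding argument essentially identical to the derandomization appearing at the end of Section~\ref{sec:redtoImpl}. First, compute an $(n,k)$-perfect hash family $\mathcal{H}$ of size $2^{\Oh(k)}\log n$ in time $2^{\Oh(k)} n\log n$ using \cite{AlonYZ95}. By the defining property of $\mathcal{H}$, if $\phi$ has a satisfying assignment $a$ with $\ones(a) = \{x_{i_1},\dots,x_{i_k}\}$, then some $h\in \mathcal{H}$ is injective on $\{i_1,\dots,i_k\}$, so it suffices to search, for each $h\in \mathcal{H}$, for a \emph{colorful} satisfying assignment of weight $k$ (i.e., one whose $k$ variables set to $1$ receive pairwise distinct colors under $h$).

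For each fixed $h: [n]\to [k]$, I would construct an $r$-uniform hypergraph $G_h$ on vertex set $[n]$ as follows. For every $r$-subset $T=\{v_1,\dots,v_r\}\subseteq [n]$ whose colors $h(v_1),\dots,h(v_r)$ are pairwise distinct, include the hyperedge $T$ in $G_h$ iff \emph{every} constraint $C$ of $\phi$ with $\chi(\vars(C)) \subseteq \{h(v_1),\dots,h(v_r)\}$ (where $\chi$ abbreviates $h$ applied to the indices of a variable set) is satisfied under the partial assignment that sets $v_1,\dots,v_r$ to $1$ and all other variables in $\vars(C)$ to $0$. Since $\confam$ has at most $\Oh(n^r)$ distinct constraints and each hyperedge can be checked against all relevant constraints in time $\Oh(n^r)$, the hypergraph $G_h$ can be built in time $\Oh(n^{2r})$. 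Then it remains to invoke a $k$-HyperClique algorithm on $G_h$ in time $\THC{r}(n,k)$; summing over $h\in \mathcal H$ yields the claimed bound $f(k)(n^{2r}+\THC{r}(n,k))\log n$ with $f(k)=2^{\Oh(k)}$.

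The key step, and the only nontrivial one, is to verify the correspondence between colorful satisfying assignments of weight $k$ and $k$-(hyper)cliques in $G_h$. In one direction, a satisfying assignment $a$ of weight $k$ on which $h$ is injective gives $k$ vertices $v_1,\dots,v_k$ with distinct colors; any $r$-subset automatically has distinct colors, and any constraint $C$ with $\vars(C)$ supported only on those $r$ colors is satisfied by $a$ and hence by the partial assignment used to define the hyperedge (the remaining variables in $\vars(C)$ with those same colors are not selected by $a$, so they get $0$ in both assignments). Conversely, given a $k$-clique $\{v_1,\dots,v_k\}$ in $G_h$, colorfulness follows from $r\ge 2$ and the fact that every $r$-subset of the clique has pairwise distinct colors. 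Then I claim $a_{\{v_1,\dots,v_k\}}$ satisfies $\phi$: any constraint $C$ has $|\chi(\vars(C))|\le r$, so (assuming $k\ge r$, else $\SAT(\confam)$ is solved by brute force in time $f(k)$) we may pick an $r$-subset $T$ of the clique whose colors contain $\chi(\vars(C))$; the hyperedge $T$ witnesses that $C$ is satisfied, and the values assigned to $\vars(C)$ by $a_{\{v_1,\dots,v_k\}}$ coincide with those used in the hyperedge's partial assignment, precisely because colors are distinct across the clique and so no variable of $\vars(C)$ colored by $T$ other than the chosen $v_i$'s is set to $1$.

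The main obstacle to watch out for is exactly this last point: ensuring that the partial assignment defining the hyperedge matches the global colorful assignment on the relevant variables, which hinges on the ``distinct colors across the clique'' property together with the injectivity guaranteed by $\mathcal{H}$. Once this is handled cleanly, correctness is immediate and the running time bookkeeping (building the hypergraph, running HyperClique, iterating over $\mathcal{H}$) is straightforward.
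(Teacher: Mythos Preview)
Your proposal is correct and follows essentially the same approach as the paper: both use an $(n,k)$-perfect hash family to reduce to the colorful version, build an $r$-uniform hypergraph whose hyperedges are colorful $r$-sets $T$ for which every constraint with variables colored inside $T$'s color set is satisfied by $a_T$, and argue that $k$-cliques correspond exactly to colorful weight-$k$ satisfying assignments. The only cosmetic difference is that the paper phrases the construction via the explicit $k$-partition $X_1,\dots,X_k$ rather than the coloring $h$, and it does not single out the edge cases $k<r$ or $r=1$ (which are indeed trivial).
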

\begin{proof}
Let $\phi$ be an arbitrary $\SAT(\confam)$ formula. Observe that any constraint $C$ of $\phi$ depends only on a set $\vars(C)\subseteq \{x_1,\dots,x_n\}$ of at most $r$ variables. For an assignment $a$, we let $C(a)\in \{0,1\}$ denote whether $C$ is satisfied by $a$.

We first show how to determine, given a partition of $x_1, \dots, x_n$ into $k$ sets $X_1,\dots, X_k$, whether there is a solution that sets precisely one variable in each $X_i$ to true. To this end, we construct a hypergraph $G$ with vertex set $X_1 \cup X_2 \cup \cdots \cup X_k$ and the following set of hyperedges: we include each possible hyperedge $e=\{x_{j_1}, \dots, x_{j_r}\}$ with $x_{j_1}\in X_{j_1}, \dots, x_{j_r} \in X_{j_r}$ and distinct $j_1, \dots, j_r\in [k]$ \emph{unless} there exists a clause $C$ with $\vars(C) \subseteq X_{j_1} \cup \cdots \cup X_{j_r}$ which is violated by the assignment that sets precisely the variables $e=\{x_{j_1}, \dots, x_{j_r}\}$ to $1$, i.e., $C(a_{e}) = 0$.

We claim that $H := \{x_{i_1}, \dots, x_{i_{k}}\}$ with  $x_{i_1} \in X_1, \dots, x_{i_k} \in X_k$ yields a $k$-clique in $G$ if and only if the assignment $a_{H}$ satisfies $\phi$. Indeed, assume that there is a clause $C$ violated by $a_{H}$. Note that as $C$ has arity at most $r$, we have $\vars(C) \subseteq X_{j_1} \cup \cdots \cup X_{j_r}$ for some distinct $i_1, \dots, i_r \in [k]$ (if $C$ involves variables of less than $r$ sets, we may use arbitrary additional sets). Thus, $e:=\{x_{j_1}, \dots, x_{j_r}\}$ cannot be an edge in $G$, since $a_H$ violates $C$, $a_e$ and $a_H$ agree on $\vars(C)$, and thus also $a_{e}$ violates~$C$. Conversely, if there is some $e:= \{x_{i_1}, \dots, x_{i_r}\}$ with distinct $i_1, \dots, i_r \in [k]$ such that $e$ is not an edge in $G$, then there exists some clause $C$ with $\vars(C) \subseteq X_{i_1} \cup \dots \cup X_{i_r}$ which is violated by $a_e$. Since $a_H$ and $a_e$ agree on $\vars(C)$, we conclude that also $a_H$ violates $C$ and thus $\phi$.

To create the desired $k$-partition of variables, we use a (deterministic) color-coding scheme: Let $\mathcal{H}$ be a \emph{$(n,k)$-perfect family of hash functions} $h : [n]\to [k]$ -- recall that this means that for any $S=\{s_1,\dots, s_k\}\subseteq [n]$, there exists some $h\in \mathcal{H}$ such that $\{h(s_1),\dots, h(s_k)\} = \{1,\dots, k\}$. Known efficient constructions~\cite{SchmidtS90, AlonYZ95} produce such assignments with $\ell = 2^{\Oh(k)}\log(n)$ in time $2^{\Oh(k)}n \log n$. Given this family, we create for each $h \in \mathcal{H}$ the $k$-partition $X^{(h)}_1, \dots, X^{(h)}_k$ with $X^{(h)}_j = \{ x_{s} \mid h(s) = j\}$ and solve the corresponding $r$-uniform HyperClique instance in time $\THC{r}(n,k)$. If any of these instances returns a solution, then indeed $\phi$ has a satisfiable assignment of weight $k$. Conversely, if $a_S$ is a weight-$k$ satisfying assignment for $\phi$, then by construction, there exists a hash function $h \in \mathcal{H}$ such that $|S\cap X^{(h)}_j|=1$ for $j=1,\dots, k$, and thus the corresponding $r$-uniform HyperClique instance indeed contains a solution. For each of the $2^{\Oh(k)}\log(n)$ hash functions, the time to construct and solve the $d$-uniform HyperClique instance is bounded by $\Oh(n^{2r} + \THC{r}(n,k))$, concluding the claim. 
\end{proof}

The main result in this section is the following reduction from $\NAND_{d+1}$-avoiding constraint families to $d$-uniform HyperClique.

\begin{thm}\label{thm:NANDavoidingalgo}
Let $d\ge 2$ and $\confam$ be an $\NAND_{d+1}$-avoiding constraint family. If there are constants $\gamma \ge d/(d+1)$ and $c$, and a computable $g(k)$ such that $d$-uniform HyperClique can be solved in time $g(k) n^{\gamma k + c}$, then there is a constant $c'$ and computable $g'(k)$ such that $\SAT(\confam)$ can be solved in time $g'(k) n^{\gamma k + c'}$.
\end{thm}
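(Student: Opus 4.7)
The plan is to use a win-win argument. First, I apply Corollary~\ref{cor:wlog0valid} to assume WLOG that $\confam$ is $0$-valid; this preserves $\NAND_{d+1}$-avoidance (since a restriction of a restriction is a restriction). Fix a sought weight-$k$ satisfying assignment $a$ with $U \coloneqq \ones(a)$. The algorithm splits on whether $a$ is in the ``clique-like'' regime (\emph{Case B}: every non-empty $S \subseteq U$ with $|S| \le d$ satisfies that $a_S$ satisfies $\phi$) or not (\emph{Case A}: some such $S^*$ has $a_{S^*}$ violating $\phi$).

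For Case B, I reduce to $d$-uniform $k$-HyperClique. Consider the auxiliary $\NAND$-formula
\[ \phi' \coloneqq \bigwedge_{\substack{S \subseteq V,\, 1 \le |S| \le d \\ a_S \text{ violates } \phi}} \NAND_{|S|}(S), \]
and the $d$-uniform hypergraph $H$ on $V$ (after deleting vertices $v$ with $a_{\{v\}}$ violating $\phi$) whose edges are the $d$-subsets $S$ such that $a_{S'}$ satisfies $\phi$ for every non-empty $S' \subseteq S$. A direct check shows $k$-cliques in $H$ correspond exactly to weight-$k$ satisfying assignments of $\phi'$, which are found in $g(k)n^{\gamma k + c}$ time by hypothesis. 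Case B solutions clearly satisfy $\phi'$, so it remains to establish the crucial \emph{Key Lemma}: any weight-$k$ satisfying assignment of $\phi'$ also satisfies $\phi$.

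To prove the Key Lemma, I argue by contradiction. Suppose $a$ satisfies $\phi'$ yet violates some constraint $C$ of $\phi$ with constraint function $f \in \confam$, and let $Y \coloneqq \vars(C) \cap U$; by $0$-validity $|Y| \ge 1$. If $|Y| \le d$, then $a_Y$ agrees with $a$ on $\vars(C)$ and hence also violates $C$, so $\NAND_{|Y|}(Y)$ belongs to $\phi'$---but $a$ sets all of $Y \subseteq U$ to $1$, contradicting $a$ satisfying $\phi'$. If $|Y| > d$, I pick any $d+1$ elements $y_1, \ldots, y_{d+1} \in Y$ and restrict $f$ by plugging in $1$ for every position indexed by $Y \setminus \{y_1, \ldots, y_{d+1}\}$ and $0$ for every position in $\vars(C) \setminus Y$. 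The resulting function $f'$ on $d+1$ arguments satisfies $f'(1, \ldots, 1) = 0$ (matching $a$'s restriction to $\vars(C)$), but by $\NAND_{d+1}$-avoidance $f' \neq \NAND_{d+1}$, so $f'(u_1, \ldots, u_{d+1}) = 0$ for some $(u_1, \ldots, u_{d+1}) \neq (1, \ldots, 1)$. Unpacking this produces a proper subset $Y' \subsetneq Y$ of $U$ with $a_{Y'}$ still violating $\phi$; iterating strictly decreases $|Y|$ until $|Y| \le d$ and the base case applies.

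For Case A, I branch on a guessed triple $(S, C, x)$: for each $S \subseteq V$ with $1 \le |S| \le d$ such that $a_S$ violates some constraint $C$, and each rescue candidate $x \in \vars(C) \setminus S$, I force $S \cup \{x\}$ to $1$ and recurse with budget $k - |S| - 1$ (re-invoking Corollary~\ref{cor:wlog0valid} to restore $0$-validity on the substituted sub-instance, which stays $\NAND_{d+1}$-avoiding since its constraints are restrictions of those in $\confam$). Writing $T(n,k)$ for the running time, this yields
\[ T(n,k) \le g(k) n^{\gamma k + c} + \sum_{s=1}^{d} n^s \cdot r \cdot f(k) \cdot T(n, k-s-1), \]
which a standard induction solves as $T(n,k) \le g'(k) n^{\gamma k + c'}$: each branching term contributes a factor $n^{s - \gamma(s+1)}$ above the target, which is $\le 1$ precisely because the hypothesis $\gamma \ge d/(d+1)$ implies $\gamma \ge s/(s+1)$ for every $s \le d$. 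The hard part will be the Key Lemma's inductive restriction argument---laying out the plugged-in values so that $\NAND_{d+1}$-avoidance can be invoked on $f$ and that $|Y|$ strictly decreases in each iteration.
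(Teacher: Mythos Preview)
Your proposal is correct and follows essentially the same win-win argument as the paper: the split into $d$-robust (your Case~B) versus non-$d$-robust (your Case~A) assignments, the auxiliary $\NAND$-formula $\phi'$ (the paper's $\phi_d$), the Key Lemma (the paper's Lemma~\ref{lem:NANDrep}(P1) via Proposition~\ref{prop:NANDd1}), and the recurrence analysis all match. The only cosmetic differences are that you invoke Corollary~\ref{cor:wlog0valid} to assume $0$-validity (the paper instead lets the branching include the $w=0$ case), and you build the $d$-uniform hypergraph for $\phi'$ directly (the paper routes through the general color-coding reduction of Proposition~\ref{prop:aritybaseline}); neither change affects the argument.
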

In particular, since we can find $k$-cliques in graphs in time $\Oh(n^{\frac{\omega}{3}k + 1})$, we obtain an $g(k) n^{\frac{\omega}{3}k + c'}$-time algorithm for solving $\SAT(\confam)$ for all $\NAND_3$-avoiding constraint families. Similarly, if for $d\ge 3$ the $d$-uniform HyperClique conjecture is refuted by exhibiting a $g(k)n^{(1-\eps)k+c}$-time algorithm for some constants $0< \eps < 1/(d+1)$ and $c$, we would obtain a $g'(k) n^{(1-\eps)k+c'}$-time algorithm for $\SAT(\confam)$ for $\NAND_{d+1}$-avoiding families $\confam$.

In the remainder of the section, we give the proof of Theorem~\ref{thm:NANDavoidingalgo}. 
The main task of the algorithm is to detect \emph{robust} assignments, defined as follows.

\begin{defn}
Let $a: [n] \to \{0,1\}$ be a weight-$k$ assignment that satisfies $\phi$. We say that $a$ is \emph{$d$-robust} if there is no assignment $a' \le a$ of weight at most $d$ that violates $\phi$.
\end{defn}

The first step of the algorithm is the easier task of detecting satisfying assignments that are \emph{not} $d$-robust (if there exists any):
Intuitively, an assignment that is not $d$-robust offers an advantage to find it: Assume we correctly guess an assignment $a'\le a$ of weight $w \le d$ such that some clause $C$ is violated by $a'$, then to extend $a'$ to the satisfying assignment $a$, we know that at least one additional variable in $C$ must be set to true. By bruteforcing over the at most $r-w \le r$ many possibilities, we gain an advantage. Specifically, by enumerating $O(n^{w}r) = \Oh(n^{w})$ many possibilities, we can fix $w+1$ true variables in our solution.

Let $T(n,k)$ denote the time our algorithms takes to solve an arbitrary $\SAT(\confam)$ instance for a $\NAND_{d+1}$-avoiding family $\confam$. 
In a preprocessing step, we first enumerate all assignments $a'$ of weight at most $d$. If there exists a clause $C_{a'}$ that is violated by $a'$, then we enumerate all variables $x \in \vars(C_{a'})\setminus \ones(a')$ (recall that $\vars(C)$ is the set of variables involved in $C$ and $\ones(a)$ denotes the set of variables set to $1$ under $a$). We recursively determine satisfiability of the formula $\phi_{a',x}$ obtained by restricting all variables in $\ones(a') \cup \{x\}$ to true. Disregarding the time to determine existence of violated clauses $C_{a'}$, this step takes time 
\begin{equation}\label{eq:rec1}
\sum_{w=0}^d \sum_{\substack{\text{weight-}w\\ \text{assignment }a'}} \sum_{x\in \vars(C_{a'})\setminus \ones(a')} T(n, k-(w+1)) \le \sum_{w=0}^d O(n^w) T(n, k-(w+1)). 
\end{equation}
To determine a violated clause $C_{a'}$ (if it exists) for all weight-($\le d$) assignments $a'$, we simply traverse each clause $C$, determine the at most $\sum_{w=0}^d {r \choose w}= O(1)$ weight-($\le d$) assignments violating $C$ and store $C$ as violated for each of these assignments (if no other clause is already stored). This step takes time $O(m) = O(n^r)$ in the beginning.

After this preprocessing, it remains to consider $d$-robust assignments. To determine whether a $d$-robust assignment satisfies $\phi$, we define a formula $\phi_d$ that is satisfied only by satisfying assignments of $\phi$, and particularly by all $d$-robust satisfying assignments of $\phi$. To this end, let $F_d$ contain all assignments of weight at most $d$ that violate some clause $C$ of $\phi$, and define
\[ \phi_d := \bigwedge_{a \in F_d} \NAND(\ones(a)). \]
  \begin{lem}\label{lem:NANDrep}
  The constructed formula $\phi_d$ has the following properties:
\begin{enumerate}[label=(P\arabic*)]
  \item If $\confam$ is $\NAND_{d+1}$-avoiding, then any satisfying assignment $a$ of $\phi_d$ is a satisfying assignment of $\phi$.
  \item If $a$ is a $d$-robust satisfying assignment of $\phi$, then $a$ satisfies $\phi_d$.
    \end{enumerate}
  \end{lem}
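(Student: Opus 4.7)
For (P2), I would argue by contrapositive: suppose $a$ violates some conjunct of $\phi_d$, i.e., there is an $a' \in F_d$ with $\NAND(\ones(a'))$ falsified by $a$. This means every variable in $\ones(a')$ is set to $1$ by $a$, i.e., $\ones(a') \subseteq \ones(a)$, which is exactly $a' \le a$. Since $a'$ has weight at most $d$ and violates some clause of $\phi$, this contradicts $d$-robustness of $a$.

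For (P1), which is the main part, I would argue by contradiction. Suppose $a$ satisfies $\phi_d$ but violates some clause of $\phi$. Among all assignments $a'' \le a$ that violate some clause of $\phi$, let $a'$ be one of \emph{minimum weight} $w$, and let $C$ be a clause violated by $a'$. I would split on $w$:

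\textbf{Case $w \le d$.} Then $a' \in F_d$ and so $\NAND(\ones(a'))$ is a conjunct of $\phi_d$. But $\ones(a') \subseteq \ones(a)$ (since $a' \le a$) falsifies this NAND, contradicting that $a$ satisfies $\phi_d$.

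\textbf{Case $w \ge d+1$.} The plan is to show that in this case $\confam$ represents $\NAND_{d+1}$ via a restriction of the constraint function $f$ underlying $C = f(\mathbf{x})$, contradicting the assumption. First, I use minimality to argue that $\ones(a') \subseteq \vars(C)$: for any $x \in \ones(a')$, the assignment $a' - x$ satisfies $a' - x \le a$ and has weight $w-1$, hence satisfies every clause of $\phi$; since flipping $x$ from $1$ to $0$ turns a violating assignment of $C$ into a satisfying one, $x$ must belong to $\vars(C)$. Next, pick any $d+1$ positions $z_1, \dots, z_{d+1}$ from $\ones(a')$, and define the restriction of $f$ that:
\begin{itemize}
\item uses $y_j$ in every position of $\vars(C)$ occupied by $z_j$, for $j = 1, \dots, d+1$;
\item hardcodes $1$ in every position of $\vars(C)$ occupied by a variable in $\ones(a') \setminus \{z_1, \dots, z_{d+1}\}$;
\item hardcodes $0$ in every position occupied by a variable in $\vars(C) \setminus \ones(a')$.
\end{itemize}
Plugging $(y_1,\dots,y_{d+1}) = (1,\dots,1)$ reproduces $a'|_{\vars(C)}$, which falsifies $f$, so the restriction outputs $0$. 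For any other input, some $z_j$ is set to $0$, and the resulting substitution into $f$ equals $a''|_{\vars(C)}$ for an assignment $a'' \lneq a'$ (still $\le a$), which by minimality of $a'$ satisfies every clause including $C$; hence the restriction outputs $1$. This is exactly $\NAND_{d+1}(y_1,\dots,y_{d+1})$, contradicting that $\confam$ avoids $\NAND_{d+1}$.

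The main obstacle is setting up the restriction in the case $w \ge d+1$ correctly so that the ``all ones'' input reproduces the violating assignment $a'$ on $\vars(C)$ while any other input reproduces a strictly smaller-weight assignment, which minimality then forces to satisfy $C$. The key point that makes this work is the observation $\ones(a')\subseteq\vars(C)$, so that every $1$ that $a'$ prescribes actually appears as a position of $f$ that we can fix or promote to a free variable.
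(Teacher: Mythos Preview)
Your proof is correct and follows essentially the same approach as the paper. The paper isolates the key step as a separate proposition (every violating assignment has a dominated violating assignment of weight at most $d$, else one can extract $\NAND_{d+1}$), phrased as an induction on the weight of $C$ under $a$, whereas you fold this directly into the contradiction argument by taking $a'$ minimal over \emph{all} clauses and then observing $\ones(a')\subseteq\vars(C)$; the resulting $\NAND_{d+1}$ restriction is identical in both proofs.
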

  \begin{proof}
  To prove (P1), we will make use of the following property.
  \begin{prop}\label{prop:NANDd1}
  Let $\confam$ be a $\NAND_{d+1}$-avoiding family. Then if an assignment $a$ violates some clause $C$ (chosen from $\confam$), there is an assignment $a' \le a$ of weight at most $d$ that violates~$C$.
  \end{prop}
  \begin{proof}
  We prove the claim via induction on the weight $w$ of the clause $C$ under $a$. If $w \le d$, the claim trivially holds. To prove the inductive step, we may assume for contradiction that there is an assignment $a$ of weight $w\ge d+1$ violating some clause $C=f(\bar{x})$, but no assignment $a' \le a$ of weight at most $w-1$ violates $C$. We will show that $\NAND_{d+1}$ can be obtained as a restriction of $f$. 
  To this end, choose some set $S \subseteq \ones(a) \cap \vars(C)$ of size $d+1$ (which is possible as $w\ge d+1$), and partition $\vars(C)$ into $S$, $Z_1 \coloneqq (\ones(a) \cap \vars(C)) \setminus S$ and $Z_0 \coloneqq \vars(C) \setminus (S\cup Z_1)$. Observe that we have   
  \begin{align*}
  \begin{matrix}
f( \overbrace{y_1\dots y_{d+1}}^{S},& \overbrace{\entry{0}}^{Z_0}, & \overbrace{\entry{1}}^{Z_1}) & = & 1, & \quad \text{if $(y_1,\dots,y_{d+1})\ne (1,\dots, 1)$}\\
  \hline
f( \; \entry{1}\quad,& \entry{0}, & \entry{1}) & = & 0. & \quad\text{[since $a$ violates $C$]}
    \end{matrix}
    \end{align*}
    where the first line follows since no assignment $a' \le a$ of weight at most $w-1$ violates $C$, yielding a contradiction.
  \cqed \end{proof}
  To prove (P1), assume that an assignment $a$ violates some clause $C$ of $\phi$. Since $\confam$ is $\NAND_{d+1}$-avoiding, by Proposition~\ref{prop:NANDd1} there exists an assignment $a'\le a$ of weight at most $d$ such that $a'$ violates $C$. Thus, $\phi_d$ contains a clause $\NAND(\ones(a'))$, which is violated by $a$, as $a' \le a$.

  To prove (P2), assume for contradiction that a $d$-robust assignment $a$ satisfies $\phi$ but not $\phi_d$. Then there is some $a' \in F_d$ such that $\NAND(\ones(a'))$ is violated by $a$, i.e., $a' \le a$. As $a' \in F_d$, there must be a clause $C$ of $\phi$ that is violated by $a'\le a$, which proves that $a$ is not $d$-robust and thus yields a contradiction.
  \end{proof}

  Note that $\phi_d$ is a $\SAT(\confam')$ formula with constraint family $\confam' = \{ \NAND_j \mid 2 \le j \le d\}$ of arity~$d$. Thus, by Proposition~\ref{prop:aritybaseline}, we can determine satisfiability of $\phi_d$ in time $f(k) (n^{2d} + \THC{d}(n, k))\log n$. We obtain the following recurrence by combining~\eqref{eq:rec1}, the $O(m)$-time preprocessing to determine violated classes $C_{a'}$, and $f(k)(n^{2d} + \THC{d}(n,k))\log n$ to solve $\phi_d$:  
  \begin{equation}\label{eq:rec}
  T(n,k) =  \Oh(m) + f(k)(n^{2d} + \THC{d}(n,k))\log n + \sum_{w=0}^d O(n^w) T(n, k-(w+1))
  \end{equation}

  Assume that there are $\gamma \ge d/(d+1)$ and $c$ such that $\THC{d}(n,k) \le g(k)n^{\gamma k+c}$. We will show that  $T(n,k) = g'(k)\Oh(n^{\gamma k + c'})$ for any $c'>\max\{c, 2r\}$ and $g'(k) = f(k)g(k)$. 

  We prove the claim via induction on $k$. The base case is $k < c'$, in which case we can solve $\SAT(\confam)$ in time $f(k)(n^{2r} + \THC{r}(n,k))\log n = f(k)\Oh((n^{2r}+n^k)\log n) \le \Oh(n^{c'})$, satisfying the claim. Thus, let us assume that $k \ge c'$ and that the claim holds for all $k'\le k-1$. Using~\eqref{eq:rec}, we obtain
\begin{align*}
T(n,k) & \le \Oh(m) + f(k)(n^{2d} + g(k) n^{\gamma k + c})\log n + g'(k) \left(\sum_{w=0}^d O(n^w) n^{\gamma (k-w+1) + c'}\right) \\
    & \le g'(k) \log n\cdot \Oh\left(n^{2r} + n^{\gamma k + c} + \sum_{w=0}^d n^{w + \gamma(k-(w+1)) + c'}\right) \\
    & \le g'(k) \log n\cdot \Oh\left(n^{2r} + n^{\gamma k + c} + n^{\gamma k + c'}\right) = g'(k) \Oh(n^{\gamma k + c'}),
\end{align*}
where in the second line, we used that $g'(k) = f(k)g(k)$, and in the last line we used that $\gamma(w+1) \ge w$ as $\gamma \ge d/(d+1) \ge w/(w+1)$ for $w\le d$, as well as our choice of $c'$ which satisfies $c' > c$ and $\gamma k + c' \ge c' > 2r$.

\section{Hardness Results}
\label{sec:hardness}

In this section, we give our hardness results. To this end, we first consider $\Implication = \SAT(\Impl)$ and give a $f(k)n^{(\omega/6-o(1))\sqrt[3]{k}}$-lower bound under the $k$-clique conjecture. Afterwards, we handle the case of $\NAND_{d}$- or $\Impl$-representing families, by reducing from $d$-uniform (Hyper)Clique or \Implication, respectively.

\subsection{Hardness for \Implication}

\begin{thm}\label{thm:implhardness}
If \Implication can be solved in time $f(k) n^{(\omega/6-\eps)\sqrt[3]{k}+c}$ for some $\eps>0, c$ and $f(k)$, then the $k$-Clique conjecture fails.
\end{thm}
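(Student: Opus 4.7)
The plan is to reduce $k^\star$-Clique on an $N$-vertex graph to an \Implication instance with $n = \Oh(N)$ vertices and parameter $k' = \Theta((k^\star)^3)$. Combined with the hypothetical $f(k')n^{(\omega/6-\eps)\sqrt[3]{k'}+c}$-algorithm, this solves $k^\star$-Clique in time
$
f'(k^\star)\, N^{(\omega/6-\eps)\cdot 2k^\star+c'} = f'(k^\star)\, N^{(\omega/3 - 2\eps)k^\star+c'},
$
contradicting the $k$-Clique conjecture with $\eps':=2\eps$. The parameter target $k' = \Theta((k^\star)^3)$ is essentially the threshold: the constant $C$ in $\sqrt[3]{k'} \le C k^\star$ must satisfy $C(\omega/6-\eps) < \omega/3$, i.e.\ $C < 2$, so $k' < 8(k^\star)^3$.

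\paragraph{Construction.}
I would carefully adapt the $W[1]$-hardness reduction of~\cite{Marx05}. That reduction has an essentially exponential parameter blowup ($k' = 2^{\Theta(k^\star)}$ via a recursive doubling of selectors), which is why it only implies the $n^{\Omega(\log k')}$ bound mentioned in the introduction. To replace the exponential blowup by a cubic one, first apply color-coding (using a standard perfect hash family) to partition $V(G)$ into $k^\star$ color classes $C_1,\dots,C_{k^\star}$ and look for a colorful clique. For each color $i$ and each $v\in C_i$, install a selector node together with auxiliary descendant chains whose total sizes encode the identity of $v$. For every unordered triple of colors $\{i,j,\ell\}$, install an edge-gadget whose descendants are shared with the selector chains in such a way that the closed-set weight can hit the target $k'$ only when the three currently selected vertices in $C_i, C_j, C_\ell$ pairwise span edges of $G$. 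The $\binom{k^\star}{3} = \Theta((k^\star)^3)$ triple-gadgets are what make $k'$ cubic and simultaneously enforce pairwise edge-consistency across all triples of colors.

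\paragraph{Parameter bookkeeping.}
I would then verify that the resulting DAG satisfies $|V_H| = \Oh(N) + (k^\star)^{\Oh(1)}$, that $k' \le 8(k^\star)^3 - \Omega((k^\star)^3)$, and that $G$ has a colorful $k^\star$-clique iff $H$ has a closed set of weight exactly $k'$. Correctness of the one direction (clique $\Rightarrow$ closed set) will follow from construction. For the converse, the key invariant is that any closed set of weight $k'$ must ``pay'' the full descendant cost of exactly one selector per color class and of exactly one triple-gadget per unordered triple of colors — forcing the selected vertices to agree on every pairwise edge. A standard perfect-hash derandomization (analogous to the one in the proof of Theorem~\ref{thm:redtoImpl}) finally removes the coloring assumption with only a $2^{\Oh(k^\star)}\log N$ overhead absorbed into $f'(k^\star)$.

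\paragraph{Main obstacle.}
The delicate part is designing the triple-gadgets so that the descendant-size budget $k'$ is attained \emph{only} by closed sets corresponding to colorful $k^\star$-cliques, while keeping $n = \Oh(N)$. A reduction with merely quadratic blowup $k' = \Oh((k^\star)^2)$ would yield $n^{\Omega(\sqrt{k'})}$ hardness, exactly matching the $n^{\Oh(\sqrt{k})}$ algorithm of Theorem~\ref{thm:implalgo} — but, as the paper notes, no such reduction is known, and the cube-root bound appears to be the current structural limit of implication-only encodings of clique. Marx's recursive encoding, at the other extreme, forces $k' = 2^{\Theta(k^\star)}$ and yields only $n^{\Omega(\log k')}$ hardness. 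The triple-gadget construction I propose sits between these extremes and matches the cube-root exponent; most of the technical work lies in the careful weight bookkeeping so that the target $k'$ can only be achieved by a clique, and in arranging the gadget sizes so that $\sqrt[3]{k'} < (2-\delta) k^\star$, which is precisely what produces the factor $(\omega/6-\eps)$ in the theorem statement.
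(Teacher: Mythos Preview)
Your plan is not a proof: the entire technical content is deferred to ``designing the triple-gadgets,'' which you explicitly flag as the delicate part and then do not do. With only implication constraints, it is far from clear how a gadget of size $\poly(k^\star)$ shared among three color classes can test whether three \emph{arbitrary} selected vertices span a triangle in $G$; that test depends on $\Theta(N^2)$ bits of edge information, and nothing in your sketch explains where that information lives if $|V_H|=\Oh(N)$.

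More importantly, you have made the problem much harder than it is by insisting on $n=\Oh(N)$. The paper's proof drops that constraint and becomes almost trivial. Build a \emph{weighted} DAG with a node $v_u$ of weight $K:=\binom{k^\star}{2}+1$ for each vertex $u\in V(G)$ and a node $v_e$ of weight $1$ for each edge $e\in E(G)$, with arcs $(v_e,v_u),(v_e,v_w)$ for $e=\{u,w\}$. Set the target $k':=k^\star K+\binom{k^\star}{2}\le (k^\star)^3$. A closed set of weight exactly $k'$ must contain exactly $k^\star$ vertex nodes (the weight $K$ is chosen just large enough to force this) and hence exactly $\binom{k^\star}{2}$ edge nodes, each of which drags in its two endpoints---so the $k^\star$ chosen vertices form a clique. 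Converting weights to cycles gives an \Implication instance on $\Oh(k^2 N^2)$ variables, and plugging $n=\Oh(N^2)$ into the assumed $f(k')n^{(\omega/6-\eps)\sqrt[3]{k'}+c}$ algorithm yields $N^{(\omega/3-2\eps)k^\star+\Oh(1)}$, refuting the $k$-Clique conjecture.

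The point you missed is that the factor $2$ turning $\omega/6$ into $\omega/3$ can come from squaring the \emph{instance size} rather than from keeping $n$ linear and engineering the parameter; once you allow $n=\Oh(N^2)$, no color-coding, no selectors, and no triple-gadgets are needed.
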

\begin{proof}

Let $G=(V, E)$ be an undirected graph.
We construct an \wDAGImplication instance $G' = (V', E', w')$ with parameter $k' = k\cdot K + {k \choose 2}$ with $K\coloneqq {k \choose 2}+1$ as follows.
The vertex set $V'$ is the disjoint union of \emph{vertex nodes} $V'_V \coloneqq \{v_u \mid u\in V\}$  and \emph{edge nodes} $V'_E \coloneqq \{v_e \mid e\in E\}$. For every $e=\{u,w\}\in E$, we introduce the edges $(v_e, v_u), (v_e, v_w)$ to $E'$. Furthermore, we set the weights of vertex nodes to $K$, and the weights of edge nodes to $1$. 

\begin{claim}
There is a closed set $X$ of weight $k'$ in $G'$ if and only if there is a $k$-clique in $G$.
\end{claim}
\begin{proof}
Let $C=\{v_1, \dots, v_k\}$ be a $k$-clique in $G$. Observe that $X = \{ v_u \mid u\in C\} \cup \{ v_e \mid e\in {C\choose 2}\}$ is a closed set in $G'$ of weight $|C|K + {|C| \choose 2} = k\cdot K + {k \choose 2} = k'$.

For the converse, assume that $X$ is a closed set in $G'$ of weight $k'$. Setting $X_V := X\cap V'_V$ and $X_E := X\cap V'_E$, we show the following sequence of facts:
\begin{enumerate}[label=\arabic*)]
\item $X_E \subseteq {X_V \choose 2}$: note that $X$ is only closed if for all $v_{\{u,w\}} \in X_E$, we have $v_u, v_w \in X_V$.
\item $|X_V| = k$ and $|X_E| = {k \choose 2}$: note that if $|X_V|< k$, then $|X_E| \le {k-1 \choose 2}$ by 1) and thus the weight of $X$ is $|X_V|K+|X_E| \le (k-1)K + {k-1 \choose 2} < kK+ {k \choose 2} = k'$. Furthermore, if $|X_V|>k$, then the weight of $X$ is at least $|X_V|K \ge (k+1)K = kK + {k \choose 2} + 1> k'$. Thus, we have $|X_V| = k$, and hence we must have $|X_E| = {k \choose 2}$ for $|X_V|K + |X_E| = k'$ to hold.
\item $X_V$ forms a $k$-clique in $G$: Facts 1) and 2) require that $X_E = {X_V \choose 2}$, which implies that $E$ contains all edges between vertices of $X_V$. 
\end{enumerate}
The last statement concludes the proof of the claim.
\cqed
\end{proof}

Assume that for some $c$ and $\eps >0$, there is an \Implication algorithm running in time $f(k)n^{(\omega/6-\eps)\sqrt[3]{k}+c}$. Given a $k$-clique instance $G$, we run the above reduction to create a \wDAGImplication instance $G'$ with parameter $k'\le (k+1)({k \choose 2} + 1) = (k^3 + k+2)/2 \le k^3$ for $k\ge 2$. Observe that $G'$ has $\Oh(n^2)$ nodes and can be converted to an equivalent \Implication instance $G''$ with the same parameter $k'$ and $\Oh(k^2n^2)$ nodes by simulating each node weight~$w$ by a cycle of $w$ nodes. Now, we determine whether  $G''$ has a closed set of weight $k' \le k^3$ using the \Implication algorithm and thus decide $k$-clique in time $f(k^3)\Oh((k^2n^2)^{(\omega/6-\eps) k + c})=f(k^3)k^{\Oh(k)}n^{(\omega/3-2\eps) k + 2c}$, refuting the $k$-Clique conjecture.
\end{proof}

\subsection{Hardness for $\SAT(\confam)$}

In this section, we give our hardness results for general constraint families $\confam$ by reducing from ($d$-uniform Hyper-)Clique either via the independent set problem or via \Implication.

To obtain these results, we frequently have to plug-in constant $0$s or $1$s to obtain our desired constraints. Technically, this is a non-trivial step, as we need to enforce some variables to be assigned fixed values without blowing up the number of variables or the weight of the desired solution. To facilitate our proofs, we first formalize the problem variant that allows us to plug-in constants freely.
\begin{defn}
Let $\confam$ be an arbitrary constraint family and $\Sigma\subseteq \{0,1\}$. The problem $\SAT_\Sigma(\confam)$ asks to determine whether a given formula $\phi$ with Boolean variables $x_1,\dots, x_n$ is has a satisfying assignment of weight $k$, where $\phi$ is a conjunction of $m$ constraints of the form $f(\mathbf{x})$, where $f:\{0,1\}^r \to \{0,1\}$ is a constraint function in $\confam$ and $\mathbf{x}$ is an $r$-tuple over $\{x_1,\dots, x_n\} \cup \Sigma$ (any variable or constant $c\in \Sigma$ may be used repeatedly). Note that $\SAT_\emptyset(\confam) = \SAT(\confam)$.
\end{defn}

Ideally, we would like to show that $\SAT_{\{0,1\}}(\confam)$ is equivalent to $\SAT(\confam)$. More specifically, we would like to employ reductions of the following form. 
\begin{defn}
Let $\confam$ be an arbitrary constraint family, and $\Sigma,\Sigma' \subseteq \{0,1\}$ be disjoint. We say that $\SAT_\Sigma(\confam)$ \emph{expresses} $\Sigma'$, if there is a constant $c$ such that the following holds: For any formula $\phi$ of $\SAT_{\Sigma\cup \Sigma'}(\confam)$ and parameter $k$, we can compute, in linear time, a formula $\phi'$ of $\SAT_\Sigma(\confam)$ with parameter $k':= k + c$ such that $\phi$ has a satisfying assignment of weight $k$ if and only if $\phi'$ has a satisfying assignment of weight $k'$.
\end{defn}

Indeed, for $0$-invalid $\confam$, we can show that $\SAT(\confam)$ expresses $\{0,1\}$ (this is straightforward and was already shown in~\cite{Marx05}). For $0$-valid $\confam$, however, expressing the constant $1$ in general appears impossible. To still give tight hardness results for $\confam$ whenever it represents a hard function $g$, we make use of a stronger notion that captures whether we can obtain $g$ already as a restriction that avoids the constant $1$.
Formally, let $f:\{0,1\}^r \to \{0,1\}, g:\{0,1\}^s\to \{0,1\}$ be arbitrary Boolean functions. We say that a function $f$ \emph{contains $g$ as a $0$-restriction} if $g$ is obtained from $f$ by replacing each argument of $f$ either by an argument of $g$ or the constant $0$, i.e., we can partition $[r]$ into $X_1,\dots, X_s, Z_0$ such that 
\[ g(x_1,\dots,x_s) = f(\overbrace{\entry{x_1}}^{X_1}, \dots, \overbrace{\entry{x_s}}^{X_s}, \overbrace{\entry{0}}^{Z_0}).\]

  Using careful constructions, we can prove the following central technical lemma.

\begin{lem}\label{lem:representingconstants}
Let $\confam$ be an arbitrary constraint family and let $g$ be $\Impl$ or $\NAND_d$ for some $d\ge 2$. If some $f\in \confam$ contains $g$ as a restriction, then $\SAT(\confam)$ expresses $\{0,1\}$, or $\SAT(\confam)$ expresses $0$ and $f$ contains $g$ already as a $0$-restriction.
\end{lem}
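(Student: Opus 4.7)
My plan is to split on whether $\confam$ is $0$-valid. In the first case, $\confam$ contains some $h$ with $h(\vec 0) = 0$, and I show $\SAT(\confam)$ expresses $\{0, 1\}$ via a gadget construction following~\cite{Marx05}: a minimum-weight satisfying assignment $a^*$ of $h$ (of weight $w \ge 1$) yields a constant-size gadget---namely $h$ applied to a disjoint fresh vector of variables, iterated a constant number of times---whose weight-$O(w)$ satisfying configurations force a canonical pattern of $1$'s and $0$'s on designated variables. Substituting those designated variables for the constants appearing in the input formula realises the first alternative with a constant parameter overhead.

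If $\confam$ is $0$-valid, I establish both parts of the second alternative. Fix $f \in \confam$ and the partition $[r] = X_1 \cup \cdots \cup X_s \cup Z_0 \cup Z_1$ witnessing $g$ as a restriction of $f$; I may assume $Z_1 \ne \emptyset$, otherwise $g$ is already a $0$-restriction of $f$. To express $0$, I use that $g \in \{\Impl, \NAND_d\}$ has a falsifying input (namely $(1, 0)$ and $\vec 1$), which transfers to a falsifying input $\vec z^*$ of $f$ with zeros throughout $Z_0$. Identifying the $1$-positions of $\vec z^*$ with a fresh variable $v$ and the $0$-positions with a fresh variable $u$ produces a binary restriction $\tilde h(u, v)$ of $f$ satisfying $\tilde h(1, 0) = 0$ and $\tilde h(0, 0) = 1$; a short case analysis on the remaining two truth-table entries of $\tilde h$ furnishes a constant-sized gadget that pins $v$ to $0$ in every satisfying assignment of the augmented formula.

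The main obstacle is the combinatorial claim that $f$ itself contains $g$ as a $0$-restriction. The plan is to analyse $\tilde f(x_1, \ldots, x_s, y) := f(X_1\!: x_1, \ldots, X_s\!: x_s, Z_0\!: 0, Z_1\!: y)$, which by hypothesis equals $g(x_1, \ldots, x_s)$ at $y = 1$ and equals $1$ at $\vec 0$ by $0$-validity, leaving only the boundary values $\tilde f(\vec x, 0)$ with $\vec x \ne \vec 0$ free. For $g \in \{\Impl, \NAND_d\}$ I plan to show, by case analysis on these boundary values, that at least one of the natural identifications---replacing $y$ by $0$ (absorbing $Z_1$ into $Z_0$) or by some $x_i$ (merging $Z_1$ into an $X_i$ block)---reproduces $g$ as a $0$-restriction of $f$. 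When none of these identifications succeed, the incriminating pattern of boundary values reveals an additional $0$-invalid consequence of $f$ obtainable via pure $0$-substitutions, contradicting $0$-validity of $\confam$ and thus reducing to the first alternative.

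I expect this last case analysis, particularly for $g = \NAND_d$ with $d \ge 3$, to be the most delicate step: the number of free boundary entries of $\tilde f$ grows with $d$, so I plan to exploit the symmetry of $\NAND_d$ and a careful choice of which $X_i$ block to merge $Z_1$ into in order to keep the verification tractable.
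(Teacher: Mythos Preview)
Your split on $0$-validity and the overall plan match the paper's structure, and your handling of $g=\Impl$ in the $0$-valid case is essentially the paper's Lemma~\ref{lem:Implas0restr}. The genuine gap is in your fallback for $g=\NAND_d$ in the $0$-valid case.

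You claim that when none of the candidate identifications (absorbing $Z_1$ into $Z_0$, or merging $Z_1$ into some $X_i$) reproduce $\NAND_d$, the failing pattern yields a \emph{$0$-invalid} consequence of $f$ obtainable by pure $0$-substitutions. This is impossible: any function obtained from a $0$-valid $f$ by identifying variables and substituting $0$'s is again $0$-valid, since evaluating at the all-zero input still feeds $\vec 0$ into $f$. So your proposed contradiction never arises, and you have no way to reach the first alternative of the lemma from this branch.

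What actually happens---and this is the key step you are missing---is that the obstruction to the merging is an $\Impl$ hidden as a $0$-restriction of $f$. Concretely, merging $Z_1$ into $X_d$ fails only if $f(a_{X_I})=0$ for some $I\subsetneq [d]$ with $d\notin I$; but then $f(a_\emptyset)=1$, $f(a_{Z_1})=1$, $f(a_{X_I\cup Z_1})=1$, $f(a_{X_I})=0$ exhibit $\Impl$ with the roles $X=X_I$, $Y=Z_1$, using only $0$-substitutions elsewhere. The paper packages this as Lemma~\ref{lem:subsetdiff}: absent $\Impl$ as a $0$-restriction, the satisfying assignments of a $0$-valid $f$ are closed under set difference of nested sets, and this closure is exactly what verifies that the merged partition works (Lemma~\ref{lem:NANDas0restr}). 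When $\Impl$ \emph{is} a $0$-restriction, one obtains the first alternative not by exhibiting $0$-invalidity but by using $\Impl$ to express the constant $1$ directly (Lemma~\ref{lem:0validexpresses1}): the implications $x_j\Rightarrow y$ for all $j$ force $y=1$ in any nonzero satisfying assignment. Your plan should route the failure case through this $\Impl$ detour rather than through a nonexistent $0$-invalid restriction.
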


Postponing the proof of the above lemma to the Sections~\ref{sec:0invalid} and~\ref{sec:0valid}, we can give the proof of our hardness results.

\begin{thm}[Hardness for $\SAT(\confam)$]
Let $\confam$ be a constraint family.
\begin{enumerate}
\item If $\confam$ represents $\Impl$, then $\SAT(\confam)$ cannot be solved in time $f(k)\Oh(n^{(\omega/6-\eps)\sqrt[3]{k}+c})$ for any computable $f(k)$ and constants $c, \eps > 0$, unless the $k$-Clique conjecture fails.
\item If $\confam$ represents $\NAND_{2}$, then $\SAT(\confam)$ cannot be solved in time $f(k)\Oh(n^{(\omega/3-\eps)k+c})$ for any computable $f(k)$ and constants $c, \eps > 0$, unless the $k$-Clique conjecture fails. 
\item If $\confam$ represents $\NAND_{d}$ for some $d\ge 3$, then $\SAT(\confam)$ cannot be solved in time $f(k)\Oh(n^{(1-\eps)k+c})$ for any computable $f(k)$ and constants $c, \eps > 0$, unless the $d$-uniform HyperClique conjecture fails. 
\end{enumerate}
\end{thm}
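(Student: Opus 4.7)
The plan is to give three reductions, all following the same general scheme: starting from \Implication (for part 1), $k$-Clique (for part 2), or $d$-uniform $k$-HyperClique (for part 3), I would produce an equivalent $\SAT(\confam)$ instance by expressing each source constraint as a restriction of some $f \in \confam$ together with hard-wired constants, then invoke Theorem~\ref{thm:implhardness} or the respective (Hyper)Clique conjecture.

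For parts 2 and 3, I would start from a $d$-uniform $k$-HyperClique instance $G=(V,E)$ (with $d=2$ for part 2), introduce one variable $x_i$ per vertex (intending $x_i=1$ iff $v_i$ lies in the clique), and for every $d$-subset $\{v_{i_1},\dots,v_{i_d}\}$ that is \emph{not} a hyperedge of $G$ impose the constraint $\NAND_d(x_{i_1},\dots,x_{i_d})$. A weight-$k$ satisfying assignment then corresponds exactly to a $k$-clique in $G$ (equivalently, to a $k$-independent set in the complement hypergraph), and this intermediate formula uses only $\NAND_d$-constraints. For part 1, the intermediate formula is already a conjunction of $\Impl$-constraints, namely the given \Implication instance itself.

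The main step is to replace each occurrence of $g \in \{\Impl, \NAND_d\}$ by a constraint of $\confam$. Since $\confam$ represents $g$, there is $f \in \confam$ and a partition of its argument positions into $X_1,\dots,X_d,Z_0,Z_1$ yielding $g$ as a restriction of $f$. If the constants $0,1$ were at my disposal, I could directly substitute this restriction for each copy of $g$, obtaining an equivalent $\SAT_{\{0,1\}}(\confam)$ instance with unchanged parameter $k$. Lemma~\ref{lem:representingconstants} now handles the two cases uniformly: either $\SAT(\confam)$ expresses $\{0,1\}$, in which case feeding this instance into the reduction from $\SAT_{\{0,1\}}(\confam)$ to $\SAT(\confam)$ guaranteed by the definition of ``expresses'' yields a $\SAT(\confam)$ instance with parameter $k+c_\confam$; or $\SAT(\confam)$ expresses $\{0\}$ and $f$ contains $g$ already as a $0$-restriction (so $Z_1=\emptyset$), in which case the intermediate formula only uses the constant $0$ and I invoke the expressibility of $\{0\}$ to obtain a $\SAT(\confam)$ instance with parameter $k+c_\confam$.

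Running a hypothetical fast $\SAT(\confam)$ algorithm on this final instance then yields, respectively, an $f'(k)n^{(\omega/6-\eps')\sqrt[3]{k}+c'}$-time algorithm for \Implication, an $f'(k)n^{(\omega/3-\eps')k+c'}$-time algorithm for $k$-Clique, or an $f'(k)n^{(1-\eps')k+c'}$-time algorithm for $d$-uniform $k$-HyperClique, where $\eps'$ remains a positive constant after absorbing the additive $c_\confam$ into the additive term of the exponent; each of these refutes Theorem~\ref{thm:implhardness} (hence the $k$-Clique conjecture) or the corresponding hypothesis directly. I expect the only real obstacle to be bookkeeping around the constants: if expressing $0$ or $1$ could incur a parameter blow-up depending on $k$, or a polynomial blow-up of $n$ depending on $k$, the leading term in the exponent would degrade and the reduction would fail. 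This is exactly why the notion of ``expresses'' is defined with a purely \emph{additive} constant overhead on the parameter, and why Lemma~\ref{lem:representingconstants} -- rather than a naive substitution of constants -- is the central tool here.
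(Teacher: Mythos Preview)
Your proposal is correct and follows essentially the same approach as the paper: reduce (Hyper)Clique to a formula over $\NAND_d$-constraints via the complement hypergraph (and take \Implication itself for part~1), then replace each $g$-constraint by an $f$-constraint using Lemma~\ref{lem:representingconstants} to simulate the needed constants with only an additive-constant parameter increase, and finally invoke Theorem~\ref{thm:implhardness} or the respective conjecture. The paper organizes this via the intermediate inequality $T_{\SAT(g)}(n,k)\le f'(k)\,T_{\SAT(\confam)}(n,k+c')$ and then chains it with $T_{\Implication}(n,k)\le \Oh(T_{\SAT(\Impl)}(n,k))$ and $\THC{d}(n,k)\le \Oh(T_{\SAT(\NAND_d)}(n,k))$, but the content is the same as your outline.
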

\begin{proof}
First, we observe that \Implication reduces to $\SAT(\Impl)$ such that 
\begin{equation}\label{eq:impltoSAT}
T_\Implication(n,k)\le O(T_{\SAT(\Impl)}(n,k)).
\end{equation}
Indeed, given any directed graph $G=(V,E)$ with $V=\{v_1,\dots,v_n\}$, we define the formula $\phi$ with variables $x_1,\dots,x_n$ and the set of constraints obtained by including $x_i \Rightarrow x_j$ for all $(v_i,v_j)\in E$. Note that for any $S\subseteq [n]$, $\{v_i\}_{i\in S}$ is a valid set in $G$ iff $a_S$ is a satisfying assignment of $\phi$, yielding~\eqref{eq:impltoSAT}.

Similarly, we observe that the $d$-uniform HyperClique problem reduces to $\SAT(\NAND_d)$ such that 
\begin{equation}\label{eq:HCtoSAT}
\THC{d}(n,k)\le O(T_{\SAT(\NAND_d)}(n,k)).
\end{equation}
Indeed, given any $d$-uniform hypergraph $G=(V,E)$ with $|V|= n$, we define the formula $\phi$ with variables $x_1,\dots,x_n$ and the constraints obtained by including, for each distinct $v_{i_1},\dots,v_{i_d}\in V$ such that $(v_{i_1},\dots,v_{i_d}) \notin E$, the constraint $\NAND_d(x_{i_1}, \dots, x_{i_d})$. Observe that $(v_{i_1},\dots,v_{i_k})\in V^k$ is a hyperclique in $G$ iff the weight-$k$ assignment with $x_{i_\ell}=1$ for all $\ell \in [k]$ satisfies $\phi$, yielding~\eqref{eq:HCtoSAT}. 

It remains to show that whenever some $f\in \confam$ contains $g \in \{\Impl\} \cup \{\NAND_d \mid d\ge 2\}$ as a restriction, then there is a computable $f'(k)$ and constant $c'$ such that
\begin{equation}\label{eq:restToFam}
T_{\SAT(g)}(n,k) \le f'(k) \cdot T_{SAT(\confam)}(n,k+c').
\end{equation}
Indeed, if $\SAT(\confam)$ expresses $\{0,1\}$, then 
\[ T_{\SAT(g)}(n,k) \le \Oh(T_{\SAT_{\{0,1\}}(\confam)}(n,k)) \le f'(k) \Oh(T_{\SAT(\confam)}(n,k+c')). \]
Here the first inequality follows by replacing each occurrence of a constraint $g(x_{i_1},\dots,x_{i_d})$ of $\SAT(g)$ by the corresponding restriction $f(g_1(x_{i_1},\dots,x_{i_d}),\dots, g_r(x_{i_1},\dots,x_{i_d}))$ of $\SAT_{\{0,1\}}(\confam)$. The second inequality follows from the definition of $\SAT(\confam)$ expressing $\{0,1\}$.

In the other case, $\SAT(\confam)$ expresses only $0$, but $f$ contains $g$ already as a $0$-restriction. Then we have
\[ T_{\SAT(g)}(n,k) \le \Oh(T_{\SAT_{\{0\}}(\confam)}(n,k)) \le f'(k) \Oh(T_{\SAT(\confam)}(n,k+c')), \]
as replacing each occurrence of a constraint $g(x_{i_1},\dots,x_{i_d})$ of $\SAT(g)$ by the corresponding restriction $f(g_1(x_{i_1},\dots,x_{i_d}),\dots, g_r(x_{i_1},\dots,x_{i_d}))$ does not require the use of the constant $1$. The second inequality again follows from the definition of $\SAT(\confam)$ expressing $0$.

As a consequence, by \eqref{eq:impltoSAT} and \eqref{eq:restToFam}, a $f(k) \cdot O(n^{(\omega/6-\eps)\sqrt[3]{k}+c})$ $\SAT(\confam)$ algorithm for an $\Impl$-representing family $\confam$ would then give an \Implication algorithm running in time  
\[f(k) f'(k) \Oh(n^{(\omega/6-\eps)\sqrt[3]{k+c'}+c})= f''(k) \Oh(n^{(\omega/6-\eps)\sqrt[3]{k}+c''}),\]
where $f''(k) = f(k) f'(k)$ and $c'' \le c+\sqrt[3]{c'}$. This would refute the $k$-Clique conjecture by Theorem~\ref{thm:implhardness}, concluding 1.

Similarly, a $f(k) \cdot O(n^{\gamma k+c})$ $\SAT(\confam)$ algorithm for an $\NAND_d$-representing family $\confam$ would give a $d$-uniform HyperClique algorithm running in time  
\[f(k) f'(k) \Oh(n^{\gamma k+c+c'})= f''(k) \Oh(n^{\gamma k +c''}),\]
where $f''(k) = f(k) f'(k)$ and $c'' = c+c'$. This yields 2. and 3. by the $k$-Clique or $d$-uniform HyperClique conjecture, respectively.
\end{proof}

In the remainder of the section, we prove Lemma~\ref{lem:representingconstants}. We split the proof in two cases, depending on whether $f$ is 0-invalid (Lemma~\ref{lem:0invalidexpressesboth}) or 0-valid (Corollary~\ref{cor:0validfmainclaim}).

\subsection{Proof of Lemma~\ref{lem:representingconstants}: 0-invalid case}
\label{sec:0invalid}

Let $f$ be such that we can obtain $\Impl$ or $\NAND_d$ for $d\ge 2$ as a restriction. Note that if it contains $\NAND_{d},d>2$ then it also must contain $\NAND_2$ as a restriction. 

In this section, we consider the case that $f(y_1,\dots,y_r)$ is not $0$-valid, i.e., the all-zeroes assignment $u_1=\cdots = u_r= 0$ does not satisfy $f$.

\begin{lem}\label{lem:0invalidexpressesboth}
If $f$ contains $\Impl$ or $\NAND_2$ as a restriction and $f$ is 0-invalid, then $\SAT(\confam)$ expresses $\{0,1\}$.
\end{lem}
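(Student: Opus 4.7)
To prove the lemma, the plan is to construct two rigid gadgets $G_1(t, \vec u)$ and $G_0(z, t, \vec v)$ -- built from constraints in $\confam$ over fresh variables -- with the properties that (i) every satisfying configuration of $G_1$ of bounded weight has $t = 1$, and (ii) given $t = 1$, every satisfying configuration of $G_0$ of bounded weight has $z = 0$; additionally, the auxiliaries $\vec u, \vec v$ are forced to $0$ in the canonical minimum-weight configuration. Given such gadgets of combined weight $c$ (constant depending on $\confam$), the claim follows by forming $\phi'$ from $\phi$ by replacing each constant $1$ by $t$, each constant $0$ by $z$, and appending $G_1 \wedge G_0$, with new parameter $k' := k + c$; the weight accounting then yields the required bi-implication.

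For the $1$-gadget, I will exploit the $0$-invalidity of $f$ directly. Let $a^*$ be a minimum-weight satisfying assignment of $f$ (one exists, as the $\Impl$ or $\NAND_2$ restriction furnishes one), with weight $w \ge 1$ and support $S := \{i : a^*_i = 1\}$. The core building block is the constraint $C(t, \vec y) := f(b)$ with $b_i = t$ for $i \in S$ and $b_i = y_i$ (fresh) for $i \notin S$: the configuration $(t, \vec y) = (1, \vec 0)$ encodes $a^*$ and satisfies $C$ at weight $1$, while any satisfying configuration with $t = 0$ must set at least $w$ of the $y_i$ to $1$ -- otherwise $f$ would admit a satisfying assignment of weight $<w$, contradicting the minimality of $a^*$. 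Taking a number $N$ of disjoint copies of $C$ all sharing only the variable $t$ amplifies the cost of $t = 0$ to $\geq Nw$; for $N$ chosen sufficiently large (a function of $k$, compatible with the reduction's budget), the $t = 0$ case exceeds the target weight and is thereby ruled out. Similar copy-multiplication, applied independently per auxiliary, forces each $y_i$ to $0$ in the canonical minimum-weight configuration.

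For the $0$-gadget, I will leverage the $\Impl$ or $\NAND_2$ restriction of $f$. In the $\NAND_2$-case, the restriction
\[
    f(\overbrace{\entry{x_1}}^{X_1}, \overbrace{\entry{x_2}}^{X_2}, \overbrace{\entry{0}}^{Z_0}, \overbrace{\entry{1}}^{Z_1}) = \NAND_2(x_1, x_2)
\]
lets me substitute $x_1 := t$, $x_2 := z$, use the already-forced $t$ in place of the constant $1$ at positions $Z_1$ (nonempty by $0$-invalidity), and place fresh auxiliaries -- themselves forced to $0$ via the same multi-copy technique -- at positions $Z_0$. With $t = 1$ and the auxiliaries at $0$, the resulting constraint simplifies to $\NAND_2(1, z) = \lnot z$, forcing $z = 0$. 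The $\Impl$-case is more delicate because $\Impl(1, z) = z$ fails to force $z = 0$ directly; here I plan to exploit $0$-invalidity again by placing $z$ in a position $i \notin S$ of a suitable copy of $C$ and arguing, via minimality of $a^*$, that setting $z = 1$ would yield a satisfying assignment of $f$ strictly dominating $a^*$ -- impossible by the choice of $a^*$.

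The main obstacle I foresee is precisely this $\Impl$-case of the $0$-gadget, where the asymmetry of $\Impl$ between its two arguments prevents the symmetric construction available for $\NAND_2$, and where the ``strict-domination'' argument may need additional case analysis on the behaviour of $f$ on inputs outside the $\Impl$-restriction (alternatively, one may show that under $0$-invalidity together with an $\Impl$-restriction, $f$ already represents a function amenable to the $\NAND_2$-style construction, reducing to the easier case). The remaining steps -- constant substitution in $\phi$, tight weight accounting, and verification of the bi-implication -- are then essentially routine once both gadgets are in hand.
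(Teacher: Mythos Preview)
Your $1$-gadget idea is sound and close in spirit to the paper's, but your $0$-gadget has a genuine gap in the $\Impl$ case. The sentence ``setting $z = 1$ would yield a satisfying assignment of $f$ strictly dominating $a^*$ --- impossible by the choice of $a^*$'' is simply false: you chose $a^*$ to be of \emph{minimum weight}, and nothing about that choice rules out satisfying assignments that strictly dominate $a^*$ (indeed, such assignments typically exist). So the argument as written does not force $z=0$, and your suggested fallback (``$f$ already represents a function amenable to the $\NAND_2$-style construction'') is not established either. There is also a secondary issue: your plan to force each auxiliary $y_i$ to $0$ ``via the same multi-copy technique'' is not justified --- the multi-copy technique forces $t=1$, not auxiliaries to $0$ --- and without this the exact weight accounting $k' = k + c$ can break.

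The paper sidesteps both difficulties with a single clean observation that treats $\Impl$ and $\NAND_2$ uniformly: from either restriction one can read off sets $S \subsetneq T \subseteq [r]$ with $f(a_S) = 1$ and $f(a_T) = 0$ (for $\Impl$: $S=Z_1$, $T=X\cup Z_1$; for $\NAND_2$: $S=Z_1$, $T=X\cup Y\cup Z_1$). The paper then uses a \emph{single} family of shared auxiliaries $z_1,\dots,z_{k'+1}$ rather than fresh ones per copy. The weight bound guarantees at least one $z_{j^*}=0$; plugging $y$ into $S$ and $z_j$ elsewhere forces $y=1$ (else $C_{j^*}$ is all-zero and falsified by $0$-invalidity --- no minimum-weight assumption needed). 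Then, for every pair $j,j'$, plugging $y$ into $S$, $z_j$ into $T\setminus S$, and $z_{j'}$ into $[r]\setminus T$ gives a constraint which, at $y=1$ and $z_{j^*}=0$, evaluates to $f(a_S)$ or $f(a_T)$ according to $z_j$, hence forces $z_j=0$ for \emph{every} $j$. This simultaneously handles the $\Impl$ case and makes the auxiliaries' values (and hence the weight shift $c=1$) rigid.
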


The above result in fact follows from the following claim.

\begin{claim}\label{claim:0invalidconst}
Let $f$ be as above. Given a parameter $k'$, we can compute, in time $\Oh(k')$, a formula $\phi_{0,1}$ of $\SAT(\confam)$ with variables $y, z_1, \dots, z_{k'+1}$ such that the only satisfying assignment of weight at most $k'$ is $y=1, z_1=\cdots = z_{k'+1} = 0$.
\end{claim}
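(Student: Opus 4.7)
The plan is to build $\phi_{0,1}$ as the conjunction of two groups of constraints on the variables $y, z_1, \dots, z_{k'+1}$, using $f$ as the only atomic building block. Fix any satisfying assignment $(u_1, \dots, u_r)$ of $f$; since $f$ contains $g \in \{\Impl, \NAND_2\}$ as a restriction $f$ is satisfiable, and by $0$-invalidity at least one $u_j = 1$. Let $X_1, X_2, Z_0, Z_1$ be the partition of $[r]$ witnessing $g$ as a restriction of $f$. A quick observation is that $Z_1 \neq \emptyset$: otherwise $f(0, \dots, 0)$ would equal $g(0,0) = 1$, contradicting $0$-invalidity. Hence, once we have forced $y=1$, substituting $y$ into every $Z_1$-position serves as a valid proxy for the constant $1$.

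\emph{First group (``pin'' constraints $A_1, \dots, A_{k'+1}$).} For each $i$, let $A_i := f(v_1^{(i)}, \dots, v_r^{(i)})$ with $v_j^{(i)} = y$ if $u_j = 1$ and $v_j^{(i)} = z_i$ if $u_j = 0$. A direct evaluation shows that $A_i$ is satisfied at $(y, z_i) = (1, 0)$ (it plugs in the pattern $u$) and violated at $(y, z_i) = (0, 0)$ (it plugs in all zeros, using $0$-invalidity). Hence $A_i$ enforces $y \lor z_i$, and any satisfying assignment with $y=0$ must set all $k'+1$ of the $z_i$'s to $1$, exceeding weight $k'$. So $y=1$ is forced.

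\emph{Second group (``chain'' constraints $B_1, \dots, B_{k'+1}$).} For each $i$, define $B_i$ by substituting into the partition $X_1, X_2, Z_0, Z_1$: put $y$ at every $Z_1$-position (proxy for $1$), $z_{(i+1) \bmod (k'+1)}$ at every $Z_0$-position (proxy for $0$), and choose the $X_1, X_2$-substitutions to encode a forward cyclic implication. Concretely, for $g = \NAND_2$ set $X_1 \to y$, $X_2 \to z_i$; for $g = \Impl$ set $X_1 \to z_i$, $X_2 \to z_{(i+1) \bmod (k'+1)}$. A direct case check shows $B_i$ is satisfied at the target assignment and violated exactly at $(y, z_i, z_{(i+1) \bmod (k'+1)}) = (1, 1, 0)$, so $B_i$ enforces $(y=1 \land z_i=1) \Rightarrow z_{(i+1) \bmod (k'+1)} = 1$.

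Combining: any weight-$\le k'$ satisfying assignment has $y=1$ by the $A_i$'s, and the chain constraints then force the set $I := \{i : z_i = 1\}$ to be closed under the cyclic successor map, so $I \in \{\emptyset, [k'+1]\}$. Since $I = [k'+1]$ would give weight $\ge k'+2$, we conclude $I = \emptyset$, and the unique satisfying assignment of weight $\le k'$ is $y=1$, $z_1 = \cdots = z_{k'+1} = 0$. The construction emits $2(k'+1)$ constraints, each of constant arity (bounded by $r$), and thus runs in $\Oh(k')$ time. The main delicate point is that the $Z_0$-substitutions are variables, not literal constants, so the Boolean behaviour of $B_i$ matches the restriction's pattern only when $z_{(i+1) \bmod (k'+1)} = 0$; but this is exactly the situation we can guarantee in the ``bad'' case, since whenever $I$ is a proper non-empty subset of $[k'+1]$ there must be some $i$ with $z_i = 1$ and $z_{(i+1) \bmod (k'+1)} = 0$, and that specific $B_i$ fires.
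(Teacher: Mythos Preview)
Your argument is correct, and the second group of constraints is organized differently from the paper's. The paper also starts with essentially your first group (plugging a fixed satisfying pattern into $f$ to force $y=1$), but for the second group it proceeds more uniformly: from the $\Impl$- or $\NAND_2$-restriction it extracts sets $S\subsetneq T$ with $f(a_S)=1$ and $f(a_T)=0$, and then for every pair $j,j'\in[k'+1]$ introduces a constraint $C'_{j,j'}$ obtained by plugging $y$ into the $S$-positions, $z_j$ into $T\setminus S$, and $z_{j'}$ elsewhere. Since some $z_{j^*}=0$ by the weight bound, the constraint $C'_{j,j^*}$ then evaluates on the $(a_S,a_T)$-axis and forces $z_j=0$ for every $j$. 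So the paper avoids your case split on $g$ at the cost of a quadratic number of constraints in $k'$, whereas your cyclic chain uses only $k'+1$ constraints and hence actually meets the stated $\Oh(k')$ bound. One small wording issue: your claim that $B_i$ is ``violated \emph{exactly} at $(1,1,0)$'' is not justified (and need not be true) once $z_{(i+1)\bmod(k'+1)}=1$ pushes the $Z_0$-positions outside the restriction's domain; but your final paragraph correctly observes that only the boundary case $(1,1,0)$ is needed, so the argument goes through. Also, the observation $Z_1\neq\emptyset$ is true but not strictly necessary---if $Z_1$ were empty the substitution would simply be vacuous.
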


Indeed, let us assume the above claim, and take any formula $\phi$ of $\SAT_{\{0,1\}}(\confam)$ with parameter~$k$. We construct $\phi_{0,1}$ with parameter $k':=k+1$ and define the formula $\phi'$ on variable set $x_1,\dots,x_n, y, z_1,\dots, z_{k'+1}$ where we include all constraints of $\phi_{0,1}$ and all constraints of $\phi$, \emph{replacing each use of the constant $0$ by $z_1$ and each use of the constant $1$ by $y$}. This yields a formula of $\SAT(\confam)$ with the property that for any weight-$k$ solution $x_1,\dots,x_n$ of $\phi$, the corresponding assignment that sets $y=1$ and $z_1=\cdots=z_{k'+1} = 0$ is a weight-$(k+1)$ solution of $\phi'$. Conversely, any $(k+1)$-weight solution of $\phi'$ must set $y=1$ and $z_1=0$ by the above claim, and hence the assignment to $x_1,\dots,x_n$ must also satisfy $\phi$. Observe that this proves Lemma~\ref{lem:0invalidexpressesboth}. 

\begin{proof}[Proof of Claim~\ref{claim:0invalidconst}]

We first give a set of constraints that enforces $y=1$. To this end, let $S\subseteq [r]$ be such that $a_S$ satisfies $f$; observe that $S$ exists and is non-empty (otherwise $f$ contains neither $\Impl$ nor $\NAND_{2}$ as a restriction). For each $j=1,\dots,k'+1$, define the constraint $C_j$ obtained by plugging in $y$ for each $u_i$ with $i\in S$ (i.e., all arguments set to $1$ under $a_S$), and $z_j$ for all other values. We claim that any weight-$(\le k')$ assignment satisfying $\bigwedge_{j=1}^{k'+1} C_j$ sets $y=1$: by the weight restriction, at least one of $z_{1},\dots,z_{k'+1}$ must be equal to $0$, say $z_{j^*}$. Then setting $y=0$ would falsify $C_{j^*}$, as then all its arguments are $0$. Note, however, that the desired assignment $y=1, z_1=\cdots=z_{k'+1}=0$ satisfies $\bigwedge_{j=1}^{k'+1} C_j$.

It remains to give additional constraints enforcing that $z_j = 0$ for \emph{all} $j\in [k'+1]$. As a first step, we find $S\subsetneq T$ such that $f(a_S)=1$ but $f(a_T)= 0$: Since $f$ represents $\Impl$ or $\NAND_2$, there is a partition of $[r]$ into $X, Y, Z_0, Z_1$ such that one of the following set of equalities hold:%
\[
\begin{matrix}
f( \overbrace{\entry{0}}^{X},& \overbrace{\entry{0}}^{Y}, & \overbrace{\entry{0}}^{Z_0}, & \overbrace{\entry{1}}^{Z_1}) & = & 1,\\
f( \entry{0},& \entry{1}, & \entry{0}, & \entry{1}) & = & 1,\\
f( \entry{1},& \entry{1}, & \entry{0}, & \entry{1}) & = & 1,\\
  \hline
f( \entry{1},& \entry{0}, & \entry{0}, & \entry{1}) & = & 0.
\end{matrix}
\qquad 
\begin{matrix}
f( \overbrace{\entry{0}}^{X},& \overbrace{\entry{0}}^{Y}, & \overbrace{\entry{0}}^{Z_0}, & \overbrace{\entry{1}}^{Z_1}) & = & 1,\\
f( \entry{1},& \entry{0}, & \entry{0}, & \entry{1}) & = & 1,\\
f( \entry{0},& \entry{1}, & \entry{0}, & \entry{1}) & = & 1,\\
  \hline
f( \entry{1},& \entry{1}, & \entry{0}, & \entry{1}) & = & 0.
\end{matrix}
\]
In both cases, the first and fourth line yield sets $S \subsetneq T$ with $f(a_S)=1$ and $f(a_T)=0$ (specifically, for $S=Z_1$ and $T=X \cup Z_1$ or for $S=Z_1$ and $T=X \cup Y \cup Z_1$).

Given such $S,T$, for each $j,j' \in {[r] \choose 2}$, we define the constraint $C'_{j,j'}$ obtained from $f(u_1,\dots,u_r)$ by plugging-in $y$ for all $u_i$ with $i\in S$, $z_j$ for all $i\in T\setminus S$ and $z_{j'}$ for all other $i$. Note that any satisfying assignment of weight at most $k$ sets at least one of $z_1,\dots, z_{k'+1}$ to $0$, say $z_{j^*}$. Observe that the constraint $C'_{j,j^*}$ is satisfied iff $z_j=0$, as setting $z_j$ to 0 or 1 corresponds to the assignments $a_S$ (satisfying) or $a_T$ (unsatisfying), respectively. Furthermore, observe that setting $y=1$ and $z_1=\cdots = z_{k'+1}=0$ indeed satisfies all $C'_{j,j'}$.
This concludes the claim that the only satisfying assignment of weight at most $k'$ is $y=1, z_1=\cdots = z_{k'+1} = 0$.
\end{proof}

\subsection{Proof of Lemma~\ref{lem:representingconstants}: 0-valid case}
\label{sec:0valid}

In this section, we consider the case that $f(y_1,\dots,y_r)$ is $0$-valid, i.e., the all-zeroes assignment $u_1=\cdots = u_r= 0$ satisfies $f$. We first observe that we can still express at least the constant 0.

\begin{lem}\label{lem:0validexpresses0}
If some $f\in \confam$ contains $\Impl$ or $\NAND_2$ as a restriction and $f$ is 0-valid, then $\SAT(\confam)$ expresses $0$.
\end{lem}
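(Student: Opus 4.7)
The plan is to introduce a single fresh variable $z$ intended to play the role of the constant $0$, replace every occurrence of the constant $0$ in $\phi$ by $z$, and then add a short gadget built from $f$ (without using any constants) that forces $z = 0$ in any satisfying assignment of the resulting formula $\phi'$ of weight at most $k$. The forward direction is then immediate: by $0$-validity of $f$, extending a weight-$k$ satisfying assignment $a$ of $\phi$ by $z = 0$ gives a weight-$k$ assignment that satisfies both the rewritten $\phi$-constraints and the gadget, so I aim for parameter shift $c = 0$. The work is in constructing the gadget and verifying the backward direction.

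I would split into two subcases depending on whether $f$ is also $1$-valid. In the easy case $f(1, \dots, 1) = 0$, the single constraint $f(z, z, \dots, z)$ already forces $z = 0$, since it is satisfied by $z = 0$ (by $0$-validity) but falsified by $z = 1$. No auxiliary variables are needed, and the rest of the reduction is straightforward.

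The main obstacle is the subcase where $f$ is simultaneously $0$-valid and $1$-valid; then every constant-free restriction of $f$ is satisfied by both the all-$0$ and the all-$1$ assignment, so no constant-size gadget can force $z = 0$. My plan is to first construct a binary constant-free restriction $g(y_0, y_1)$ of $f$ satisfying $g(0, 0) = g(1, 1) = 1$ and $g(0, 1) = 0$. To do so, I fix any unsatisfying assignment $a^* \in \{0, 1\}^r$ of $f$ (such an $a^*$ exists because $f$ contains $\Impl$ or $\NAND_2$ as a restriction, and it is neither $(0,\dots,0)$ nor $(1,\dots,1)$ by the two validity assumptions), and set
\[
g(y_0, y_1) \coloneqq f(y_{a^*_1}, y_{a^*_2}, \dots, y_{a^*_r}),
\]
plugging in $y_0$ for each position $i$ with $a^*_i = 0$ and $y_1$ for each position with $a^*_i = 1$. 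The three target values then follow from $0$-validity, from $f(a^*) = 0$, and from $1$-validity, respectively. The constraint $g(x_i, z)$ is then equivalent to ``$z = 1 \Rightarrow x_i = 1$'', so I would add $g(x_i, z)$ for \emph{every} original variable $x_i$ of $\phi$: if any satisfying assignment of $\phi'$ has $z = 1$, the gadget forces all original variables to $1$, making the total weight at least $n + 1$; since any weight-$k$ satisfying $\phi$ requires $n \geq k$, this exceeds the budget $k$ and forces $z = 0$. Once $z = 0$ is guaranteed, the backward direction is immediate.

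Finally, the trivial edge case $n < k$ (where $\phi$ is vacuously unsatisfiable) must be handled separately in the $1$-valid subcase, since otherwise the all-$1$ assignment of $\phi'$ could spuriously satisfy both the rewritten $\phi$-constraints (as $f(1,\dots,1) = 1$) and every gadget constraint (as $g(1, 1) = 1$) when $n = k - 1$. I would detect this case in linear time and output a fixed trivially unsatisfiable instance, e.g., a formula with $k - 1$ variables, no constraints, and parameter $k$. The technical core of the argument is thus the realization that no constant-size gadget can force $z$ in the doubly-valid subcase, forcing one to use the original variables themselves as witnesses whose enforced weight contribution exceeds the budget.
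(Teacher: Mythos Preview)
Your argument in the doubly-valid subcase has a genuine gap. You define $g(y_0,y_1)\coloneqq f(y_{a^*_1},\dots,y_{a^*_r})$ and assert that the constraint $g(x_i,z)$ is equivalent to the implication $z\Rightarrow x_i$. This requires $g(1,0)=1$, i.e., $f(\overline{a^*})=1$, but nothing you have assumed guarantees that. If $g(1,0)=0$, then $g(x_i,z)$ together with $z=0$ forces $x_i=0$, so the forward direction of your reduction fails: a weight-$k$ satisfying assignment of $\phi$ that sets some $x_i=1$, extended by $z=0$, violates the gadget constraint $g(x_i,z)$. A concrete counterexample is the ternary $f$ with $f(0,0,0)=f(1,0,0)=f(0,1,0)=f(1,0,1)=f(0,1,1)=f(1,1,1)=1$ and $f(1,1,0)=f(0,0,1)=0$; here $f$ is $0$- and $1$-valid and contains $\NAND_2$ as the $0$-restriction $f(x,y,0)$, yet for either choice of falsifying assignment $a^*\in\{(1,1,0),(0,0,1)\}$ your $g$ is binary equality, so the gadget forces all $x_i$ to equal $z$ and destroys every nontrivial solution.

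The paper avoids this issue by never relying on the uncontrolled value $f(\overline{a^*})$: it introduces $k+1$ fresh variables $z_1,\dots,z_{k+1}$ and, in the $1$-valid case, uses the \emph{pair} of constraints $C_{i,i'}\wedge C_{i',i}$ (your $g(z_{i'},z_i)\wedge g(z_i,z_{i'})$) to force $z_i=z_{i'}$ regardless of whether $g(1,0)$ is $0$ or $1$; the weight bound $k$ on $k+1$ equal variables then forces all of them to $0$. Your approach can be salvaged by an additional case split on $g(1,0)$: if $g(1,0)=1$ proceed as you wrote, and if $g(1,0)=0$ note that $g$ is exactly equality and use $k+1$ fresh equal $z$-variables as the paper does.
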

\begin{proof}
Observe that it suffices to show how to construct, given a parameter $k$, a formula on variables $z_1,\dots, z_{k+1}$ such that the only satisfying assignment of weight at most $k$ sets $z_1=\cdots = z_{k+1}=0$.

To this end, assume first that $f$ is not satisfied by the all-ones assignment. Then, the formula $\bigwedge_{i=1}^{k+1} f(z_i,\dots,z_i)$ is trivially only satisfied by the assignment $z_1=\cdots = z_{k+1} = 0$.

Otherwise, observe that there must be a non-empty set $S\subsetneq [r]$ such that $a_S$ does not satisfy $f$ (otherwise $f$ would be a trivial constraint and could contain neither of $\Impl$ and $\NAND_2$). For each $i,i'\in [k+1]$, we define the constraint $C_{i,i'}$ obtained by using $z_i$ for all arguments in $S$, and $z_{i'}$ for all arguments not in $S$. Observe that $C_{i,i'} \wedge C_{i',i}$ forces $z_i = z_{i'}$, and thus $z_1= \cdots = z_{k+1}$, which is satisfied by an assignment of weight at most $k$ if and only if the common value is $0$. 
\end{proof}

Interestingly, for $0$-valid $f$, containing $\Impl$ as a restriction is equivalent to containing $\Impl$ already as a $0$-restriction.

\begin{lem}\label{lem:Implas0restr}
If $f$ contains $\Impl$ as a restriction and is $0$-valid, then $f$ contains $\Impl$ already as a $0$-restriction.
\end{lem}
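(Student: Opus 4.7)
\emph{Approach.} My plan is to perform a case analysis on the three values of $f$ that are not pinned down by the hypotheses. Writing $a_S$ for the assignment whose $1$-coordinates are exactly $S$, the restriction of $f$ to $\Impl$ via $(X_1, X_2, Z_0, Z_1)$, combined with $0$-validity, determines five of the eight corners of the cube indexed by the coordinate groups $(X_1, X_2, Z_1)$ (with $Z_0$ fixed to $0$):
\[
f(\mathbf{0}) = f(a_{Z_1}) = f(a_{X_2 \cup Z_1}) = f(a_{X_1 \cup X_2 \cup Z_1}) = 1, \qquad f(a_{X_1 \cup Z_1}) = 0.
\]
The three remaining corners are the undetermined values
\[
\alpha := f(a_{X_2}), \quad \beta := f(a_{X_1}), \quad \gamma := f(a_{X_1 \cup X_2}).
\]
I will exhibit, for every outcome of $(\alpha, \beta, \gamma) \in \{0,1\}^3$, a partition $(X_1', X_2', Z_0')$ of $[r]$ obtained by re-grouping $X_1, X_2, Z_1$ such that $f(X_1' \to x_1, X_2' \to x_2, Z_0' \to 0) = \Impl(x_1, x_2)$. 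The degenerate case $Z_1 = \emptyset$ needs no work since the given restriction is already a $0$-restriction, so I may assume $Z_1 \neq \emptyset$.

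\emph{Case selection.} Four sub-cases suffice to cover $\{0,1\}^3$. If $\beta = 0$, I take $(X_1', X_2', Z_0') := (X_1,\, X_2 \cup Z_1,\, Z_0)$; the four corners then evaluate to $f(\mathbf{0}), f(a_{X_2 \cup Z_1}), f(a_{X_1}), f(a_{X_1 \cup X_2 \cup Z_1}) = 1,1,0,1$. If $\beta = 1$ and $\gamma = 0$, I take $(X_1 \cup X_2,\, Z_1,\, Z_0)$, and the corners become $f(\mathbf{0}), f(a_{Z_1}), f(a_{X_1 \cup X_2}), f(a_{X_1 \cup X_2 \cup Z_1}) = 1,1,0,1$. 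If $\alpha = \beta = \gamma = 1$, I take $(X_1 \cup Z_1,\, X_2,\, Z_0)$, and the corners become $f(\mathbf{0}), f(a_{X_2}), f(a_{X_1 \cup Z_1}), f(a_{X_1 \cup X_2 \cup Z_1}) = 1,1,0,1$. In each of these three branches, exactly one of $\alpha, \beta, \gamma$ supplies the falsifying corner at $(1,0)$, and the other three corners reduce to known values.

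\emph{The hardest case.} The remaining branch $\alpha = 0$ with $\beta = \gamma = 1$ is the only one requiring a genuine idea: merging $Z_1$ into $X_1$ fails because the only visible falsifier of $f$ is $a_{X_1 \cup Z_1}$, and every visible strict subset of $X_1 \cup Z_1$ that still contains $X_1$ or $Z_1$ satisfies $f$. However, $\alpha = 0$ reveals that $a_{X_2}$ itself falsifies $f$, while $a_{X_2 \cup Z_1}$ still satisfies $f$ by the original restriction. I therefore swap the antecedent and consequent roles of $X_1$ and $X_2$, taking $(X_1', X_2', Z_0') := (X_2,\, Z_1,\, X_1 \cup Z_0)$; the four corners become $f(\mathbf{0}), f(a_{Z_1}), f(a_{X_2}), f(a_{X_2 \cup Z_1}) = 1,1,0,1$, as required (and here the assumption $Z_1 \neq \emptyset$ is needed to keep $X_2'$ nonempty). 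The main conceptual hurdle is noticing that this swap is available precisely in the otherwise-stuck branch; once this is seen, each sub-case reduces to a direct four-corner check and the proof is routine.
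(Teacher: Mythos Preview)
Your proof is correct, and the underlying approach---case analysis on the undetermined corners of the $(X_1,X_2,Z_1)$-cube, followed by re-grouping the blocks---is the same as the paper's. The difference is that you branch on all three unknowns $(\alpha,\beta,\gamma)$ and end up with four sub-cases, whereas the paper branches only on $\alpha = f(a_{X_2})$ and needs just two. Concretely, your Case~3 partition $(X_1\cup Z_1,\,X_2,\,Z_0)$ already works whenever $\alpha=1$ (the corners are $f(\mathbf{0}),\,f(a_{X_2}),\,f(a_{X_1\cup Z_1}),\,f(a_{X_1\cup X_2\cup Z_1}) = 1,\alpha,0,1$, and none of $\beta,\gamma$ appear), and your Case~4 partition $(X_2,\,Z_1,\,X_1\cup Z_0)$ works whenever $\alpha=0$; your Cases~1 and~2 are therefore redundant. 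Note also that the separate handling of $Z_1=\emptyset$ is not strictly needed: if $Z_1=\emptyset$ then $\alpha = f(a_{X_2}) = f(a_{X_2\cup Z_1}) = 1$ by the original restriction, so the $\alpha=1$ branch applies and just returns the original (already $0$-)restriction.
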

\begin{proof}
Since $f:\{0,1\}^r \to \{0,1\}$ contains $\Impl$ as a restriction, we can partition $[r]$ into sets $X,Y,Z_0,Z_1$ and write
\begin{align}\label{eq:containsImpl}
\begin{matrix}
f( \overbrace{\entry{0}}^{X},& \overbrace{\entry{0}}^{Y}, & \overbrace{\entry{0}}^{Z_0}, & \overbrace{\entry{1}}^{Z_1}) & = & 1,\\
f( \entry{0},& \entry{1}, & \entry{0}, & \entry{1}) & = & 1,\\
f( \entry{1},& \entry{1}, & \entry{0}, & \entry{1}) & = & 1,\\
  \hline
f( \entry{1},& \entry{0}, & \entry{0}, & \entry{1}) & = & 0.\\
\end{matrix}
\end{align}
Assume first that 
\begin{equation}\label{eq:casedist}
f(\overbrace{\entry{0}}^{X}, \overbrace{\entry{1}}^Y, \overbrace{\entry{0}}^{Z_0}, \overbrace{\entry{0}}^{Z_1}) = 0.
\end{equation}
Then, we obtain $\Impl$ as a 0-restriction by setting $X'\coloneqq Y, Y'\coloneqq Z_1, Z' \coloneqq X\cup Z_0$ and observing that
\begin{align*}
\begin{matrix}
f( \overbrace{\entry{0}}^{X'=Y},& \overbrace{\entry{0}}^{Y'=Z_1}, & \overbrace{\entry{0}}^{Z'=X\cup Z_0}) & = & 1, & \quad \text{[$f$ is $0$-valid]}\\
f( \entry{0},& \entry{1}, & \entry{0}) & = & 1, & \quad\text{[by \eqref{eq:containsImpl}]}\\
f( \entry{1},& \entry{1}, & \entry{0}) & = & 1, & \quad\text{[by \eqref{eq:containsImpl}]}\\
  \hline
f( \entry{1},& \entry{0}, & \entry{0}) & = & 0. & \quad\text{[by \eqref{eq:casedist}]}\\
\end{matrix}
\end{align*}
Otherwise, if \eqref{eq:casedist} does not hold, then we obtain $\Impl$ as a 0-restriction by setting $X'\coloneqq X\cup Z_1, Y' \coloneqq Y, Z' \coloneqq Z_0$ and observing that 
\begin{align*}
\begin{matrix}
f( \overbrace{\entry{0}}^{X'=X\cup Z_1},& \overbrace{\entry{0}}^{Y'=Y}, & \overbrace{\entry{0}}^{Z'=Z_0}) & = & 1, & \quad \text{[$f$ is $0$-valid]}\\
f( \entry{0},& \entry{1}, & \entry{0}) & = & 1, & \quad\text{[by $\neg$\eqref{eq:casedist}]}\\
f( \entry{1},& \entry{1}, & \entry{0}) & = & 1, & \quad\text{[by \eqref{eq:containsImpl}]}\\
  \hline
f( \entry{1},& \entry{0}, & \entry{0}) & = & 0. & \quad\text{[by \eqref{eq:containsImpl}]}\\
\end{matrix}
\end{align*}
\end{proof}

It remains to handle the case that $f$ contains $\NAND_d$ as a restriction.
We first observe that if $f$ contains $\Impl$ as a $0$-restriction, then $\SAT_0(\confam)$ even expresses the constant $1$. (Thus, afterwards, we may assume that $f$ does not contain $\Impl$ as a 0-restriction.)

\begin{lem}\label{lem:0validexpresses1}
If some $f\in \confam$ contains $\Impl$ as a $0$-restriction, then $\SAT_0(\confam)$ expresses $1$. 
\end{lem}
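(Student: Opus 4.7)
The plan is to introduce a single fresh variable $y$ that will play the role of the constant $1$, together with enough additional constraints to force $y=1$ in any satisfying assignment of the target weight. Concretely, given a formula $\phi$ of $\SAT_{\{0,1\}}(\confam)$ with parameter $k$, I construct $\phi'$ as follows: (i) introduce a new variable $y$; (ii) replace every occurrence of the constant $1$ in the constraints of $\phi$ by $y$; (iii) for each original variable $x_i$, add a constraint expressing $\Impl(x_i, y)$. The new parameter is $k' := k+1$. Since $f$ contains $\Impl$ as a $0$-restriction, there is a partition $[r] = X \cup Y \cup Z_0$ with
\[ \Impl(x,y) = f(\overbrace{\entry{x}}^{X},\overbrace{\entry{y}}^{Y},\overbrace{\entry{0}}^{Z_0}), \]
so each implication in step (iii) is indeed a valid $\SAT_0(\confam)$ constraint, and the total construction is linear time.

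For the forward direction, given a weight-$k$ satisfying assignment $a$ of $\phi$, extend it by setting $y=1$; this is a weight-$(k+1)$ assignment. Every rewritten constraint is satisfied because $y$ now carries the value $1$ that was previously the constant; the new implications $\Impl(x_i,y)$ hold trivially since $y=1$. For the backward direction, suppose $a'$ is a weight-$(k+1)$ satisfying assignment of $\phi'$. If $y=0$ in $a'$, then each constraint $\Impl(x_i,y)$ forces $x_i=0$, so $a'$ has weight $0 \neq k+1$, a contradiction (we may harmlessly assume $k\ge 0$, and the case $k=0$ is already trivial). Hence $y=1$, and then restricting $a'$ to $\{x_1,\dots,x_n\}$ yields a weight-$k$ assignment satisfying $\phi$, because under the substitution $y=1$ the rewritten constraints coincide with the original ones.

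The only nontrivial point is step (iii): expressing $\Impl(x_i, y)$ inside $\SAT_0(\confam)$ without access to the constant $1$. This is exactly what the $0$-restriction hypothesis buys us, via the partition $X,Y,Z_0$ above, so there is no real obstacle. Thus $\SAT_0(\confam)$ expresses $1$ with the additive constant $c=1$.
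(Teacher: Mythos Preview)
Your proof is correct and follows essentially the same approach as the paper: introduce a fresh variable $y$, add the implications $x_i \Rightarrow y$ for all $i$ using the $0$-restriction of $\Impl$, replace each constant $1$ by $y$, and set $k' = k+1$. The correctness argument is identical.
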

\begin{proof}
Given any formula $\phi$ of $\SAT_{\{0,1\}}(\confam)$ on variables $x_1,\dots,x_n$, we construct a formula $\phi'$ on variables $x_1,\dots, x_n, y$ as follows: Since some $f\in \confam$ contains $\Impl$ as a $0$-restriction, we can express, for any variables $v,v'$, the implication $v \Rightarrow v'$ by a corresponding constraint of $\SAT_0(\confam)$. We construct $n$ such constraints to enforce $\bigwedge_{j=1}^{n} (x_j\Rightarrow  y)$. Subsequently, we may use $y$ to replace any use of the constant $1$ to convert the constraints of $\phi$ to constraints of the $\SAT_{0}(\confam)$-formula $\phi'$. 

To argue correctness, note that any satisfying weight-$k$ assignment of $\phi$ yields a satisfying weight-$(k+1)$ assignment of $\phi'$ by setting $y=1$. Conversely, note that any weight-$(k+1)$-assignment of $\phi'$ must set $y=1$ (since $k\ge 1$ implies that at least one variable $x_i$ is set to one, which enforces $y=1$ by the corresponding implication $x_i \Rightarrow y$) and thus corresponds to a weight-$k$ assignment to $x_1,\dots, x_n$ satisfying $\phi$. 
\end{proof}

In the remainder of this section, we assume that $f$ contains $\NAND_d$ as a restriction, but does not contain $\Impl$ as a $0$-restriction, and the aim is to find $\NAND_d$ already as a $0$-restriction.

\begin{lem}\label{lem:subsetdiff}
For any $0$-valid $f$, if $f$ does not contain $\Impl$ as a 0-restriction, then whenever $f(a_S) = f(a_T) = 1$ with $S\subseteq T$, then $f(a_{T\setminus S}) = 1$.
\end{lem}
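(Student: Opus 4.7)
The plan is to prove the contrapositive: assume that there exist $S \subseteq T \subseteq [r]$ with $f(a_S) = f(a_T) = 1$ but $f(a_{T \setminus S}) = 0$, and use exactly these three assignments (together with 0-validity) to exhibit $\Impl$ as a 0-restriction of $f$. The key observation is that these four assignments (including the all-zero assignment) only use constants $0$ and entries restricted to $S$ or $T \setminus S$, so they naturally correspond to the four rows of the truth table of a 2-variable restriction with no constant $1$.

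Concretely, I would partition the argument positions of $f$ as $X_1 := T \setminus S$, $X_2 := S$, and $Z_0 := [r] \setminus T$, and define
\[ g(y_1, y_2) := f\bigl(\overbrace{\entry{y_1}}^{X_1},\, \overbrace{\entry{y_2}}^{X_2},\, \overbrace{\entry{0}}^{Z_0}\bigr). \]
By construction this is a $0$-restriction of $f$ (no constant $1$ is plugged in). The verification that $g = \Impl$ is then a four-line check, matching the hypotheses to the rows of $\Impl$'s truth table:
\begin{align*}
g(0,0) &= f(a_\emptyset) = 1 && \text{[by $0$-validity of $f$],}\\
g(0,1) &= f(a_S) = 1 && \text{[by assumption],}\\
g(1,1) &= f(a_T) = 1 && \text{[by assumption],}\\
g(1,0) &= f(a_{T \setminus S}) = 0 && \text{[by assumption].}
\end{align*}
These agree with $\Impl(y_1,y_2) = \overline{y_1} \vee y_2$, so $f$ contains $\Impl$ as a $0$-restriction, contradicting the hypothesis of the lemma.

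There is no real obstacle here: once the contrapositive is stated, the only thing one has to get right is the bookkeeping of which set plays the role of the ``negated'' argument of $\Impl$ — since the only assignment giving value $0$ is $a_{T \setminus S}$, the positions in $T \setminus S$ must be identified with $y_1$ (the negated variable), leaving $S$ to play the role of $y_2$. The fact that $\Impl$ is nontrivial on the restricted coordinates (i.e.\ $g$ genuinely depends on both $y_1$ and $y_2$) is automatic from the four distinct output values.
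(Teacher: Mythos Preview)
Your proof is correct and essentially identical to the paper's: both argue by contradiction (your contrapositive), use the same partition $X_1 = T\setminus S$, $X_2 = S$, $Z_0 = [r]\setminus T$, and verify the four rows of $\Impl$ using $0$-validity and the three assumptions. The only cosmetic difference is that the paper explicitly disposes of the case $S = T$ first, whereas in your contrapositive formulation this case is automatically excluded (since then $f(a_{T\setminus S}) = f(a_\emptyset) = 1$ by $0$-validity).
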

\begin{proof}
If $S=T$, there is nothing to show, so let $S\subsetneq T$ and assume for contradiction that $f(a_{T\setminus S}) = 0$. We obtain $\Impl$ as a 0-restriction as follows:
\begin{align*}
\begin{matrix}
f( \overbrace{\entry{0}}^{X=T\setminus S},& \overbrace{\entry{0}}^{Y=S}, & \overbrace{\entry{0}}^{Z=[r]\setminus T}) & = & 1, & \quad \text{[$f$ is $0$-valid]}\\
f( \entry{0},& \entry{1}, & \entry{0}) & = & 1, & \quad\text{[$f(a_S)=1$]}\\
f( \entry{1},& \entry{1}, & \entry{0}) & = & 1, & \quad\text{[$f(a_T)=1$]}\\
  \hline
f( \entry{1},& \entry{0}, & \entry{0}) & = & 0. & \quad\text{[by assumption]}\\
\end{matrix}
\end{align*}
This yields the claim.
\end{proof}

We can finally obtain $\NAND_d$ as a $0$-restriction.

\begin{lem}\label{lem:NANDas0restr}
If $f$ contains $\NAND_d$ as a restriction, does not contain $\Impl$ as a 0-restriction and is $0$-valid, then $f$ contains $\NAND_d$ already as a $0$-restriction.
\end{lem}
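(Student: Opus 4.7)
My plan is to construct an explicit $0$-restriction by absorbing the ``always-$1$'' positions $Z_1$ into the first variable block $X_1$. Starting from the partition $X_1,\dots,X_d,Z_0,Z_1$ of $[r]$ that witnesses $f$ containing $\NAND_d$ as a restriction, I propose the candidate $0$-restriction partition $X'_1 \coloneqq X_1 \cup Z_1$, $X'_i \coloneqq X_i$ for $2 \le i \le d$, and $Z'_0 \coloneqq Z_0$. I will then verify that this new partition yields $\NAND_d$ as a $0$-restriction by a short case analysis on $J \coloneqq \{i \in [d] : x_i = 1\}$, showing that the plugged-in assignment $a_{U_J}$ satisfies $f(a_{U_J}) = \NAND_d(x_1,\dots,x_d)$ for every $(x_1,\dots,x_d) \in \{0,1\}^d$.

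The case $1 \in J$ is immediate: the plugged-in assignment becomes $a_{Z_1 \cup \bigcup_{i \in J} X_i}$, which is precisely the assignment that the \emph{original} $\NAND_d$ restriction produces on the input $(x_1,\dots,x_d)$. Hence $f$ evaluates to $\NAND_d(x_1,\dots,x_d)$ for free here (including the subcase $J = [d]$, which gives $f(a_T) = 0$). The subcase $J = \emptyset$ is handled by $0$-validity, since the plugged-in assignment is $a_\emptyset$.

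The only nontrivial case is $1 \notin J$ with $J \neq \emptyset$, where the plugged-in assignment is $a_{\bigcup_{i \in J} X_i}$, i.e., the ``original'' assignment with $Z_1$ stripped off. Here $J \subsetneq [d]$, so $\NAND_d$ demands the value $1$. For this I plan to invoke Lemma~\ref{lem:subsetdiff}: from the original restriction we have $f(a_{Z_1}) = 1$ (corresponding to the all-zero input of $\NAND_d$) and $f(a_{Z_1 \cup \bigcup_{i \in J} X_i}) = 1$ (corresponding to the input $J \subsetneq [d]$), together with the inclusion $Z_1 \subseteq Z_1 \cup \bigcup_{i \in J} X_i$. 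Lemma~\ref{lem:subsetdiff} then yields $f(a_{\bigcup_{i \in J} X_i}) = 1$, as required.

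The main ingredient---the ``subset-difference'' closure of Lemma~\ref{lem:subsetdiff}, which itself relies on the hypothesis that $f$ does not contain $\Impl$ as a $0$-restriction---is already available, so the principal design choice is deciding \emph{where} to absorb $Z_1$. Folding $Z_1$ into $X_1$ rather than, say, into $Z_0$ is essential: it ensures that the ``all-ones'' input still activates every position of $Z_1$, preserving the crucial fact $f(a_T) = 0$. This is the one step I expect to be slightly subtle; once the partition is in place, the four cases above exhaust the truth table of $\NAND_d$ and the proof is complete (with the degenerate case $Z_1 = \emptyset$ requiring no separate treatment).
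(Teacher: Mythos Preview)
Your proposal is correct and essentially identical to the paper's proof: both absorb $Z_1$ into one of the $X_i$ blocks (you choose $X_1$, the paper chooses $X_d$, which is immaterial by the symmetry of $\NAND_d$) and then handle the three cases---designated block activated (use the original restriction), designated block inactive and $J\ne\emptyset$ (use Lemma~\ref{lem:subsetdiff} with $S=Z_1$), and $J=\emptyset$ (use $0$-validity)---in the same way.
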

\begin{proof}
Since $f:\{0,1\}^r \to \{0,1\}$ contains $\NAND_d$ as a restriction, we can partition $[r]$ into sets $X_1,\dots,X_d,Z_0,Z_1$ such that $X_I\coloneqq \bigcup_{i\in I} X_i$ with $I\subseteq [d]$ satisfies:
\begin{equation}\label{eq:NANDrestr}
f(a_{X_I \cup Z_1}) = \begin{cases} 0 & \text{if } I = [d], \\ 1 & \text{if } I \subsetneq [d]. \end{cases}
\end{equation}

We claim that the partition $X'_{i} \coloneqq X_i$ for $i<d$, $X'_d \coloneqq X_d \cup Z_1$, $Z'\coloneqq Z_0$ provides $\NAND_d$ as a 0-restriction: Letting $X'_I \coloneqq \bigcup_{i\in I} X'_i$, this follows from
\begin{equation}\label{eq:NAND0restr}
f(a_{X'_I}) = \begin{cases} 0 & \text{if } I = [d], \\ 1 & \text{if } I \subsetneq [d]. \end{cases}
\end{equation}
To verify~\eqref{eq:NAND0restr}, note first that $f(a_{X'_{[d]}}) = f(a_{X_{[d]}\cup Z_1}) = 0$ by~\eqref{eq:NANDrestr}. Second, let $I\subsetneq [d]$. If $d\in I$, then $f(a_{X'_I}) = f(a_{X_I\cup Z_1}) = 1$ by~\eqref{eq:NANDrestr}. Otherwise, if $d\notin I$, then we have $f(a_{X'_I}) = f(a_{X_I}) = 1$ by Lemma~\ref{lem:subsetdiff} (for this, note that $f$ does not contain $\Impl$ as 0-restriction and that $f(a_{X_I \cup Z_1}) = f(a_{Z_1}) = 1$). This concludes the claim.   
\end{proof}

The proof of this section is summarized in the following corollary. 

\begin{cor}\label{cor:0validfmainclaim}
If $f$ contains $g\in \{\Impl\} \cup \bigcup_{d\ge 2} \{\NAND_d\}$ and $f$ is $0$-valid, then $\SAT(f)$ expresses $\{0,1\}$, or $\SAT(f)$ expresses $0$ and contains $g$ as a $0$-restriction. 
\end{cor}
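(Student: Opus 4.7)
The corollary follows from a short case analysis combining Lemmas~\ref{lem:0validexpresses0}--\ref{lem:NANDas0restr} proved in this section. The plan is to first apply Lemma~\ref{lem:0validexpresses0} unconditionally: since $f$ is $0$-valid and contains either $\Impl$ or $\NAND_d$ as a restriction (and for $d\ge 2$, a $\NAND_d$-restriction trivially supplies $\NAND_2$ as a restriction as well), $\SAT(\{f\})$ expresses~$0$. This establishes the ``expresses $0$'' portion required by both alternatives; what remains is to decide between the two.

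For Case 1, $g=\Impl$, I would apply Lemma~\ref{lem:Implas0restr} to upgrade the restriction to a $0$-restriction of $f$, and then invoke Lemma~\ref{lem:0validexpresses1} to obtain that $\SAT_{\{0\}}(\{f\})$ expresses~$1$. Composing the two expressibility reductions (first eliminating the constant $1$ using only the constant $0$ at an additive parameter cost, then eliminating the constant $0$ at another additive cost) gives that $\SAT(\{f\})$ expresses~$\{0,1\}$, which is the first alternative of the corollary.

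For Case 2, $g=\NAND_d$ with $d\ge 2$, I would split further on whether $f$ contains $\Impl$ as a $0$-restriction. If yes, Lemma~\ref{lem:0validexpresses1} together with the composition argument of Case~1 again yields the first alternative. If no, the hypotheses of Lemma~\ref{lem:NANDas0restr} are satisfied, so $f$ contains $\NAND_d$ already as a $0$-restriction, producing the second alternative.

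The only thing to verify carefully is that expressibility composes in the stated sense: if $\SAT(\{f\})$ expresses~$0$ with additive constant $c$ and $\SAT_{\{0\}}(\{f\})$ expresses~$1$ with additive constant $c'$, then $\SAT(\{f\})$ expresses~$\{0,1\}$ with additive constant $c+c'$. This is immediate by substituting the two linear-time reductions in sequence, so the corollary has no real obstacle beyond orchestrating the earlier lemmas.
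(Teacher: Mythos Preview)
Your proposal is correct and follows essentially the same case analysis as the paper: apply Lemma~\ref{lem:0validexpresses0} to get expressibility of $0$, then for $g=\Impl$ use Lemmas~\ref{lem:Implas0restr} and~\ref{lem:0validexpresses1} to obtain the first alternative, and for $g=\NAND_d$ split on whether $f$ contains $\Impl$ as a $0$-restriction, invoking Lemma~\ref{lem:0validexpresses1} or Lemma~\ref{lem:NANDas0restr} accordingly. Your explicit verification that expressibility composes is a welcome addition that the paper leaves implicit.
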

\begin{proof}
If $g=\Impl$, then $f$ contains $g$ already as a $0$-restriction by Lemma~\ref{lem:Implas0restr} and $\SAT(f)$ expresses $\{0,1\}$ by Lemmas~\ref{lem:0validexpresses0} and~\ref{lem:0validexpresses1}.

If $g=\NAND_d$, then either $f$ also contains $\Impl$ as a $0$-restriction, in which case $\SAT(f)$ expresses $\{0,1\}$ by Lemmas~\ref{lem:0validexpresses0} and~\ref{lem:0validexpresses1}, or it does not contain $\Impl$ as a $0$-restriction, and thus $f$ contains $g$ as a $0$-restriction by Lemma~\ref{lem:NANDas0restr} and $\SAT(f)$ expresses $0$ by Lemma~\ref{lem:0validexpresses0}.
\end{proof}

\bibliographystyle{plainurl}%
{\bibliography{weightedSAT}}

\end{document}